\definecolor{innerboxcolor}{rgb}{.9,.95,1}
\definecolor{outerlinecolor}{rgb}{.6,0,.2}
\newcommand{\optdomain}{\Theta}
\newcommand{\length}{L}
\newcommand{\fullmsg}{\ensuremath{\bar{\msg}}}
\newcommand{\Xsam}[1]{\ensuremath{X_{#1}}}
\newcommand{\thetahat}{\ensuremath{\widehat{\theta}}}
\newcommand{\real}{\ensuremath{\mathbb{R}}}
\newcommand{\defn}{\ensuremath{: \, =}}
\newcommand{\med}{\ensuremath{\operatorname{med}}}
\newcommand{\NORMAL}{\mathcal{N}}
\newcommand{\lamupper}{\ensuremath{\lambda_{\max}}}
\newcommand{\lamlower}{\ensuremath{\lambda_{\min}}}
\newcommand{\MiniMax}{\ensuremath{\mathfrak{M}}}
\long\def\@makecaption#1#2{
  \vskip 0.8ex
  \setbox\@tempboxa\hbox{\small {\bf #1:} #2}
  \parindent 1.5em  %% How can we use the global value of this???
  \dimen0=\hsize
  \advance\dimen0 by -3em
  \ifdim \wd\@tempboxa >\dimen0
  \hbox to \hsize{
    \parindent 0em
    \hfil
    \parbox{\dimen0}{\def\baselinestretch{0.96}\small
      {\bf #1.} #2
      %%\unhbox\@tempboxa
    }
    \hfil}
  \else \hbox to \hsize{\hfil \box\@tempboxa \hfil}
  \fi
}
\begin{document}

%% Title %%
\begin{center}
  \textbf{\LARGE Optimality guarantees for distributed \\ statistical
    estimation} \\
  \vspace{1cm}

  John C.\ Duchi$^1$  ~~~~
  Michael I.\ Jordan$^{1,2}$
  ~~~~ Martin J.\ Wainwright$^{1,2}$ ~~~~ Yuchen Zhang$^1$\\

  \vspace{.8cm}

  $^1$Department of Electrical Engineering and Computer Science \\
  and $^2$Department of Statistics \\
  University of California, Berkeley\\
  \texttt{\{jduchi,jordan,wainwrig,yuczhang\}@eecs.berkeley.edu} \\
  \vspace{.5cm}

  June 2014\footnote{An extended abstract of this manuscript~\cite{Zhang2013}
    appeared in the 2013 conference on
    \emph{Advances in Neural Information Processing Systems}.}
\end{center}

\begin{abstract}
  Large data sets often require performing distributed statistical estimation,
  with a full data set split across multiple
  machines and limited communication between machines. 
  To study such scenarios, we define and study
  some refinements of the classical minimax risk that apply to
  distributed settings, comparing to the performance of estimators
  with access to the entire data. Lower bounds on these quantities provide a
  precise characterization of the minimum amount of communication
  required to achieve the centralized minimax risk.  We study two
  classes of distributed protocols: one in which machines send messages
  independently over channels without feedback, and a second allowing
  for interactive communication, in which a central server broadcasts
  the messages from a given machine to all other machines.  We establish
  lower bounds for a variety of problems, including
  location estimation in several families and parameter estimation
  in different types of regression models.  Our results include a novel
  class of quantitative data-processing inequalities used to
  characterize the effects of limited communication.
\end{abstract}

\section{Introduction}

Rapid growth in the size and scale of datasets has fueled increasing
interest in statistical estimation in distributed settings (for
instance, see the papers~\cite{Boyd2011, Zhang2012, Dekel2012,
  Duchi2012, McDonald2010, Balcan2012} and references therein).
Modern data sets are often too large to be stored on a single computer,
and so it is natural to consider methods that involve multiple
computers, each assigned a smaller subset of the full dataset.  Yet
communication between machines or processors is often expensive, slow,
or power-intensive; as noted by \citeauthor{Fuller2011}
in a survey of the
future of computing, ``there is no known alternative to parallel
systems for sustaining growth in computing performance,'' yet the
power consumption and latency of communication is often relatively
high~\cite{Fuller2011}. Indeed, bandwidth limitations on network and inter-chip
communication often impose significant bottlenecks on algorithmic
efficiency. It is thus important to study the amount of communication
required between machines or chips in algorithmic development,
especially as we scale to larger and larger datasets.

The focus of the current paper is the communication complexity of a
few classes of statistical estimation problems.  Suppose we are
interested in estimating some parameter $\theta(\statprob)$ of an
unknown distribution $\statprob$, based on a dataset of $\totalobs$
i.i.d.\ observations.  In the classical setting, one considers
\emph{centralized estimators} that have access to all $\totalobs$
observations. One standard way to evaluate estimators is by studying
minimax rates of convergence, which characterize the optimal
(worst-case) performance over all centralized schemes.  By way of
contrast, in the distributed setting, one is given $\nummac$ different
machines, and each machine is assigned a subset of the sample of size
$\numobs = \lfloor \frac{\totalobs}{\nummac} \rfloor$.  Each machine
may perform arbitrary operations on its own subset of data, and it
then communicates results of these intermediate computations to the
other processors or to a central fusion node.  In this paper, we try
to answer the following question: what is the minimal number of bits
that must be exchanged in order to achieve (up to constant factors)
the optimal estimation error realized by a centralized scheme?

While there is a rich literature on statistical minimax theory
(e.g.~\cite{Ibragimov1981,Yu1997,Yang1999a,Tsybakov2009}), little
of it characterizes the effects of limiting communication.  In other
areas, ranging from theoretical computer
science~\cite{Yao1979,Abelson1980,Kushilevitz1997}, decentralized
detection and estimation (e.g.,~\cite{Tsitsiklis1993,Luo1994}), to
information theory (e.g.,~\cite{Han1998,ElGamal2011}), there is of
course a substantial literature on communication complexity.  Though
related to these bodies of work, our problem formulation and results
differ in several ways.
\begin{itemize}
\item In theoretical computer
  science~\cite{Yao1979,Abelson1980,Kushilevitz1997}, the prototypical
  problem is the distributed computation of a bivariate function
  $\theta : \mc{X} \times \mc{Y} \to \Theta$, defined on two discrete
  sets $\mc{X}$ and $\mc{Y}$, using a protocol that exchanges bits
  between processors.  The most classical problem is to find a
  protocol that computes $\theta(x, y)$ correctly for all $(x,y) \in
  \mc{X} \times \mc{Y}$, and exchanges the smallest number of bits to
  do so.  More recent work studies randomization and introduces
  information-theoretic measures for communication
  complexity~\cite{Chakrabarti2001,Bar-Yossef2004,Barak2010}, where
  the problem is to guarantee that $\theta(x, y)$ is computed
  correctly with high probability under a given (known) distribution
  $P$ on $x$ and $y$.  In contrast, our goal is to recover
  characteristics of an unknown distribution $P$ based on observations
  drawn from $P$.  Though this difference is somewhat subtle, it makes
  work on communication complexity difficult to apply in our
  settings. However, lower bounds on the estimation of population
  quantities $\theta(P)$ based on communication-constrained
  observations---including those we present here---do imply lower
  bounds in classical communication complexity settings.
\item Work in decentralized detection and estimation also studies
  limits of communication.  For example, \citet{Tsitsiklis1987}
  provide lower bounds on the difficulty of distributed convex
  optimization, and \citet{Luo1993} study limits on certain
  distributed algebraic computations. In these problems, as in other
  early work in communication complexity, data held by the distributed
  parties may be chosen adversarially, which precludes conclusions
  about statistical estimation. Other work in distributed control
  provides lower bounds on consensus and averaging, but in settings
  where messages sent are restricted to be of particular smooth
  forms~\cite{Olshevsky2009}.  Study of communication complexity has
  also given rise to interesting algorithmic schemes; for example,
  \citet{Luo2005} considers architectures in which machines may send
  only a single bit to a centralized processor; for certain problems,
  he shows that if each machine receives a single one-dimensional
  sample, it is possible to achieve the optimal centralized rate to
  within constant factors.
\item Information theorists have also studied problems of distributed
  estimation; for instance, see the paper of~\citet{Han1998} for an overview.
  In particular, this body of work focuses on the problem of testing a
  hypothesis or estimating a parameter from samples
  $\{(x_i,y_i)\}_{i=1}^\numobs$ where $\{(x_i)\}_{i=1}^\numobs$ and
  $\{(y_i)\}_{i=1}^\numobs$ are correlated but stored separately in two
  machines. \citet{Han1998} study estimation error for encoding rates $R >
  0$, or with sequences of rates $R_\numobs$ converging to zero as the
  sample size $\numobs$ increases.  In contrast to these asymptotic
  formulations---which often allow more communication than is required to
  attain centralized (unconstrained) minimax rates in our settings---we
  study fixed bounds on rates (say of the form $R_\numobs \leq
  c/\numobs$) for finite sample sizes $\numobs$, and we ask when it is possible
  to achieve the minimax statistical rate.
\end{itemize}

In this paper, we formulate and study two decentralized variants of
the centralized statistical minimax risk, one based on protocols that
engage in only a single round of message-passing, and the other based
on interactive protocols that can use multiple rounds of
communication.  The main question of interest is the following: how
must the communication budget $\budget$ scale as a function of
the sample size $\numobs$
at each machine, the total number of machines $\nummac$, and the
problem dimension $d$ so that the decentralized minimax risk matches
the centralized version up to constant factors? For some problems, we
exhibit an exponential gap between this communication requirement and
the number of bits required to describe the problem solution (up to
statistical precision); for instance, see
Theorems~\ref{theorem:gaussian-mean-communication}
and~\ref{theorem:interactive-gaussian} for results of this type. For
example, we show that for problems such as location estimation in Gaussian
and binomial families, the amount of communication must scale linearly
in the product $d \nummac$ of the dimension number of machines $\nummac$, which
is exponentially larger than the $\order(d \log \nummac)$ bits required
to specify the problem or communicate its solution. To
exhibit these gaps, we
provide lower bounds using information-theoretic techniques, with
the main novel ingredient being certain forms of quantitative data
processing inequalities. We also establish (nearly) sharp
upper bounds, some of which are based on recent work
by a subset of current authors on practical
schemes for distributed estimation (see~\citet{Zhang2012}).

The remainder of this paper is organized as follows.  We begin in
Section~\ref{sec:background} with background on the classical minimax
formalism, and then introduce two distributed variants of the minimax
risk.  Section~\ref{sec:mean-estimation} is devoted to the statement
of our main results, as well as the discussion of their consequences
in specific settings.  We turn to the proofs of our main results in
Section~\ref{sec:proof-sketch}, deferring more technical results to
the appendices, and we conclude in Section~\ref{SecDiscuss} with a
discussion.

%%%%%%%%%%%%%%%%%%%%%%%%%%%%%%%%%%%%%%%%%%%%%%%%%%%%%%%%%%%%%%%%%%%%%%%%%%%%

\paragraph{Notation:}

For a random variable $X$, we let $P_X$ denote the probability measure
on $X$, so that $P_X(S) = P(X \in S)$, and we abuse notation by
writing $p_X$ for the probability mass function or density of $X$,
depending on the situation, so that $p_X(x) = P(X = x)$ in the
discrete case and denotes the density of $X$ at $x$ when $p_X$ is a
density.  We use $\log$ to denote $\log$-base $e$ and $\log_2$ for
$\log$ in base 2. For discrete random variable $X$, we let $H(X) =
-\sum_x p_X(x) \log p_X(x)$ denote the (Shannon) entropy (in ents),
and for probability distributions $P, Q$ on a set $\mc{X}$, with
densities $p, q$ with respect to a base measure $\mu$, we write the
KL-divergence as
\begin{equation*}
  \dkl{P}{Q} \defeq \int_\mc{X} p(x) \log \frac{p(x)}{q(x)} d\mu(x).
\end{equation*}
The mutual information $I(X; Y)$ between random variables $X$ and $Y$
where $Y$ has distribution $P_Y$ is defined as
\begin{align*}
  I(X; Y) \defeq \E_{P_X}\big[\dkl{P_Y(\cdot \mid X)}{P_Y(\cdot)}\big]
  = \int \dkl{P_Y(\cdot \mid X = x)}{P_Y(\cdot)} dP_X(x).
\end{align*}
We let $\vee$ and $\wedge$ denote maximum and minimum, respectively,
so that $a \vee b = \max\{a, b\}$. For an integer $k \ge 1$, we use
$[k]$ as shorthand for the set $\{1, \ldots, k\}$. We let
$a_{1:n}$ be shorthand for a sequence $a_1, \ldots, a_n$, and the notation
$a_n \gtrsim b_n$ means there is a numerical constant $c > 0$ such that
$a_n \ge c b_n$ for all $n$. Given a set
$A$, we let $\sigma(A)$ denote the Borel $\sigma$-field on $A$. 

%%%%%%%%%%%%%%%%%%%%%%%%%%%%%%%%%%%%%%%%%%%%%%%%%%%%%%%%%%%%%%%%%%%%%%%%%

\section{Background and problem formulation}
\label{sec:background}

In this section, we begin by giving background on the classical
notion of minimax risk in statistics.  We then introduce two
distributed variants of the minimax risk based on the notions of
independent and interactive protocols, respectively.

\subsection{Classical minimax risk}

For a family of probability distributions $\pclass$, consider a
function $\theta : \pclass \rightarrow \optdomain \subseteq \R^d$.  A
canonical example throughout the paper is the mean function, namely
$\theta(\statprob) = \E_\statprob[X]$.  Another simple example is the
median $\theta(\statprob) = \med_\statprob(X)$, or more generally,
quantiles of the distribution $\statprob$.  Now suppose that we are
given a collection of $\totalobs$ observations, say $X_{1:\totalobs} \defn
\{X_1, \ldots X_\totalobs\}$, drawn i.i.d. from some unknown member
$\statprob$ of $\pclass$.  Based on the sample $X_{1:\totalobs}$, our
goal is to estimate the parameter $\theta(\statprob)$, and an
estimator $\thetahat$ is a measurable function of the
$\totalobs$-vector $X_{1:\totalobs} \in \mc{X}^\totalobs$ into $\optdomain$.

We assess the quality of an estimator
$\thetahat = \thetahat(X_{1:\totalobs})$ via its mean-squared error
\begin{equation*}
  R(\thetahat, \theta(\statprob)) \defn \E_\statprob \big[ \|
  \thetahat(X_{1:\totalobs}) - \theta(\statprob) \|_2^2 \big],
\end{equation*}
where the expectation is taken over the sample $X_{1:\totalobs}$.  For
an estimator $\thetahat$, the function \mbox{$\statprob \mapsto
R(\thetahat, \theta(\statprob))$} defines the \emph{risk function} of
$\thetahat$ over the family $\pclass$.  Taking the supremum all
$\statprob \in \pclass$ yields the worst-case risk of the estimator.
The \emph{minimax rate} for the family $\pclass$ is defined in terms
of the best possible estimator for this worst-case criterion, namely
via the saddle point criterion
\begin{align}
  \label{EqnDefnMinimaxClassical}
  \MiniMax_\totalobs(\theta, \pclass) & \defn \inf_{\thetahat}
  \sup_{\statprob \in \pclass} R(\thetahat, \theta(\statprob)),
\end{align}
where the infimum ranges over all measurable functions of the data
$X_{1:\totalobs}$.  Many papers in mathematical statistics study the classical
minimax risk~\eqref{EqnDefnMinimaxClassical}, and its behavior is precisely
characterized for a range of
problems~\cite{Ibragimov1981,Yu1997,Yang1999a,Tsybakov2009}.  We
consider a few instances of such problems in the sequel.

\subsection{Distributed protocols}
\label{sec:distributed-protocols}

The classical minimax risk~\eqref{EqnDefnMinimaxClassical} imposes
no constraints on the
choice of estimator $\thetahat$.  In this section, we introduce a
refinement of the minimax risk that calibrates the effect of
communication constraints.  Suppose we have a collection of
$\nummac$ distinct computers or processing units.  Assuming for
simplicity\footnote{Although we assume in this paper that every
  machine has the same amount of data, our techniques are sufficiently
  general to allow for different sized subsets for each machine.  }
that $\totalobs$ is a multiple of $\nummac$, we can then divide our
full data set $X_{1:\totalobs}$ into a family of $\nummac$ subsets, each
containing $\numobs = \frac{\totalobs}{\nummac}$ distinct observations,
with $\Xsub{i}$ denoting the subset assigned to machine $i \in
[\nummac] = \{1, \ldots, \nummac \}$.  With this set-up, our goal is
to estimate $\theta(\statprob)$ via local operations
at each machine $i$ on the data subset $\Xsub{i}$ while performing
a limited amount of communication between machines.

More precisely, our focus is a class of distributed
protocols $\protocol$, in which at each round $t = 1, 2, \ldots$,
machine $i$ sends a message $\msg_{t,i}$ that is a measurable function
of the local data $\sample^{(i)}$ and potentially of past
messages. It is convenient to model this message as being sent to a
central fusion center.  Let $\fullmsg_t = \{\msg_{t,i}\}_{i \in
  [\nummac]}$ denote the collection of all messages sent at round $t$.
Given a total of $T$ rounds, the protocol $\protocol$ collects the
sequence $(\fullmsg_1, \ldots, \fullmsg_T)$, and constructs an
estimator \mbox{$\what{\theta} \defeq \what{\theta}(\fullmsg_1,
  \ldots, \fullmsg_T)$.} The length $\length_{t,i}$ of message
$\msg_{t,i}$ is the minimal number of bits required to encode it, and
the total length \mbox{$\length = \sum_{t = 1}^T \sum_{i=1}^\nummac
  \length_{t,i}$} of all messages sent corresponds to the \emph{total
  communication cost} of the protocol.  Note that the communication
cost is a random variable, since the length of the messages may depend
on the data, and the protocol may introduce auxiliary randomness.

It is useful to distinguish two different protocol classes, namely
\emph{independent} versus \emph{interactive}. An independent
protocol $\protocol$ is based on a single round ($T = 1$) of
communication in which machine $i$ sends a single message $\msg_{1,i}$ to the
fusion center.  Since there are no past messages, the message
$\msg_{1,i}$ can depend only on the local sample $\sample^{(i)}$.
Given a family $\pclass$, the class of independent protocols with
budget $\budget \geq 0$ is
\begin{equation*}
  \pind = \bigg\{ \mbox{ independent protocols $\protocol$ such that}
  ~ \sup_{\statprob\in\pclass}\E_\statprob\bigg[
    \sum_{i=1}^\nummac \length_i\bigg] \leq \budget\bigg\}.
  % \label{eqn:independent-protocols}
\end{equation*}
(For simplicity, we use $\msg_i$ to indicate the message sent from processor
$i$ and $\length_i$ to denote its length in the independent case.)  It can be
useful in some situations to have more granular control on the amount of
communication, in particular by enforcing budgets on a per-machine basis.  In
such cases, we introduce the shorthand \mbox{$\budget_{1:\nummac} =
  (\budget_1, \ldots, \budget_\nummac)$} and define
\begin{equation*}
  \pindb = \big\{ \mbox{independent protocols $\protocol$ such that}
  ~ \sup_{\statprob\in\pclass}\E_\statprob[\length_i] \le
  \budget_i ~\mbox{for}~ i \in [\nummac]\big\}.
\end{equation*}

In contrast to independent protocols, the class of interactive protocols
allows for interaction at different stages of the message passing process.  In
particular, suppose that machine $i$ sends message $\msg_{t,i}$ to the fusion
center at time $t$, which then posts it on a ``public blackboard,'' where all
machines may read $\msg_{t,i}$ (this posting and reading incurs no
communication cost). We think of this as a global broadcast system, which may
be natural in settings in which processors have limited power or upstream
capacity, but the centralized fusion center can send messages without limit.
In the interactive setting, the message $\msg_{t,i}$ is a
measurable function of the local data $\sample^{(i)}$ and the past messages
$\fullmsg_{1:t-1}$.  The family of interactive protocols with budget $\budget
\ge 0$ is
\begin{equation*}
  \pinter = \bigg\{\mbox{interactive protocols $\protocol$ such that}
  ~ \sup_{P\in\pclass}\E_\statprob [ \length ] \leq
  \budget\bigg\}.
  % \label{eqn:dependent-protocols}
\end{equation*}

\subsection{Distributed minimax risks}

We can now define the distributed minimax risks that are
the central objects of study in this paper.  Our goal is to
characterize the best achievable performance of estimators $\thetahat$
that are functions of the vector of messages $\fullmsg_{1:T} \defn
(\fullmsg_1, \ldots, \fullmsg_T)$. As in the
classical minimax setting~\eqref{EqnDefnMinimaxClassical},
we measure the quality of a
protocol $\protocol$ and estimator $\thetahat$ by the mean-squared
error
\begin{align*}
  % \label{EqnNewMSE}
  R(\thetahat, \theta(\statprob)) & \defn \E_{\statprob,\protocol} \big[
    \|\thetahat( \fullmsg_{1:T}) - \theta(\statprob)\|_2^2 \big],
\end{align*}
where the expectation is now taken over the randomness in the
messages, which is
due to both their dependence on the underlying data as well
as possible randomness in the protocol.  Given a communication budget
$\budget$, the \emph{minimax risk for independent protocols} is
\begin{equation}
  \label{eqn:minimax-risk}
  \minimaxind(\theta, \pclass, \budget) \defeq \inf_{\protocol \in
    \pind} \inf_{\what{\theta}} \sup_{\statprob \in \pclass} R \big(
  \thetahat, \theta(\statprob) \big).
\end{equation}
Here, the infimum is taken jointly over all independent procotols $\protocol$
that satisfy the budget constraint $\budget$, and over all estimators
$\what{\theta}$ that are measurable functions of the messages in the protocol.
The minimax risk~\eqref{eqn:minimax-risk} should also be understood to depend
on both the number of machines $\nummac$ and the individual sample size
$\numobs$ (we leave this implicit on the right hand side of
definition~\eqref{eqn:minimax-risk}). We
define the \emph{minimax risk for interactive
  protocols}, denoted by $\minimaxinter$, analogously, where we instead take the
infimum over the class of interactive protocols.  These
communication-dependent minimax risks are the central objects in this paper:
they provide a sharp characterization of the optimal estimation rate
as a function of the communication \mbox{budget $\budget$.}

%%%%%%%%%%%%%%%%%%%%%%%%%%%%%%%%%%%%%%%%%%%%%%%%%%%%%%%%%%%%%%%%%%%%%%%%

\section{Main results and their consequences}
\label{sec:mean-estimation}

We now turn to the statement of our main results, along with some
discussion of their consequences.  We begin with a rather simple bound
based on the metric entropy of the parameter space; it confirms the
natural intuition that any procedure must communicate at least as many
bits as are required to describe a problem solution.  We show that
this bound is tight for certain problems,
but our subsequent more
refined techniques allow substantially sharper guarantees.

\subsection{Lower bound based on metric entropy}

We begin with a general but relatively naive lower bound that depends
only on the geometric structure of the parameter space, as captured by
its metric entropy.  In particular, given a subset $\optdomain \subset
\R^d$, we say $\{\theta^1, \ldots, \theta^K\}$ are $\delta$-separated
if $\ltwos{\theta^i - \theta^j} \ge \delta$ for $i \neq j$. We then
define the \emph{packing entropy} of $\optdomain$ as
\begin{equation*}
  \pentropy_\optdomain(\delta) \defeq \log_2 \max \big\{M \in \N \mid
  \{\theta_1, \ldots, \theta_M\} \subset \optdomain ~
  \mbox{are~$\delta$-separated}\big\}.
\end{equation*}
It is straightforward to see that the packing entropy continuous from
the right and non-increasing in $\delta$, so that the inverse function
$\pentropy_\optdomain^{-1}(B) \defeq \sup \{\delta \mid
\pentropy_\optdomain(\delta) \ge B\}$ is well-defined.  With this
definition, we have the following claim:
\begin{proposition}
  \label{PropGeneral}
  For any family of distributions $\mc{P}$ and parameter set $\Theta =
  \theta(\mc{P})$, the interactive minimax risk is lower bounded as
  \begin{align*}
    \minimaxinter(\theta, \pclass, \budget) & \geq \frac{1}{8}
    \big(\pentropy^{-1}_\optdomain(2 \budget + 2) \big)^2.
  \end{align*}
\end{proposition}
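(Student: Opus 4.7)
The plan is to reduce estimation to hypothesis testing against a maximal packing of $\optdomain$, and then apply Fano's inequality, with the mutual information between the hypothesis and the message transcript controlled by the communication budget via source coding. Beyond the trivial data-processing inequality $I(V;\thetahat)\le I(V;\fullmsg_{1:T})$, no extra information-theoretic machinery is required---this is precisely why Proposition~\ref{PropGeneral} only reflects the bits needed to \emph{describe} the answer, not the true statistical difficulty of recovering it.

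Concretely, set $\delta \defeq \pentropy_\optdomain^{-1}(2\budget+2)$ and, by definition of the inverse packing entropy, choose a $\delta$-separated packing $\{\theta^1,\ldots,\theta^M\}\subseteq\optdomain$ with $\log_2 M \ge 2\budget+2$. For each index $v$, pick $\statprob_v\in\pclass$ with $\theta(\statprob_v)=\theta^v$. Let $V$ be uniform on $[M]$, and conditional on $V=v$ draw $\totalobs$ i.i.d.\ samples from $\statprob_v$ and distribute them across the $\nummac$ machines in the usual way. Given any protocol $\protocol\in\pinter$ and estimator $\thetahat(\fullmsg_{1:T})$, define the minimum-distance test $\hat V \defeq \arg\min_{v\in[M]}\|\thetahat-\theta^v\|_2$. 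The $\delta$-separation ensures that $\|\thetahat-\theta^V\|_2<\delta/2$ forces $\hat V = V$, so by Markov's inequality,
\begin{equation*}
  \sup_{\statprob\in\pclass} R(\thetahat,\theta(\statprob))
  \;\ge\; \frac{1}{M}\sum_{v=1}^{M}\E\big[\|\thetahat-\theta^v\|_2^2\big]
  \;\ge\; \frac{\delta^2}{4}\,\Pr(\hat V \neq V).
\end{equation*}

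Next, I apply Fano's inequality to the Markov chain $V \to \fullmsg_{1:T}\to \hat V$: measuring entropy and mutual information in bits, $\Pr(\hat V \neq V)\ge 1-(I(V;\fullmsg_{1:T})+1)/\log_2 M$. To bound $I(V;\fullmsg_{1:T})\le H(\fullmsg_{1:T})$, observe that the transcript is a binary string of (random) total length $L=\sum_{t,i}\length_{t,i}$, so the source-coding inequality for prefix codes gives $H(\fullmsg_{1:T})\le \E[L]$ in bits. The budget constraint yields $\E[L]=\E_V\E_{\statprob_V}[L]\le \sup_{\statprob\in\pclass}\E_\statprob[L]\le \budget$. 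Combining, $I(V;\fullmsg_{1:T})\le \budget$, and since $\log_2 M\ge 2\budget+2$, Fano yields $\Pr(\hat V\neq V)\ge 1/2$. Plugging back gives $\minimaxinter(\theta,\pclass,\budget)\ge \delta^2/8$, as claimed.

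The main subtlety---more an accounting issue than a real obstacle---lies in the source-coding bound $H(\fullmsg_{1:T})\le \E[L]$ when messages have data-dependent lengths, since the length itself carries information that one might worry is not paid for. This is resolved either by insisting that each message $\msg_{t,i}$ be encoded as a prefix-free binary string (so the concatenated transcript is uniquely decodable and the classical inequality applies directly to its total length), or by the decomposition $H(\fullmsg_{1:T})\le H(L)+H(\fullmsg_{1:T}\mid L)\le H(L)+\E[L]$ and absorbing the small integer entropy $H(L)=O(\log \E[L])$ into the additive constant in Fano. Either way, no statistical property of $\pclass$ enters the argument, which is consistent with this bound being the weak, purely geometric lower bound that the paper's subsequent quantitative data-processing inequalities will strengthen.
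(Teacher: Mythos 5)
Your proof is correct and follows essentially the same route as the paper's: reduce to a testing problem over a maximal packing, apply Fano's inequality, and bound the mutual information via $I(V;\text{transcript})\le H(\text{transcript})\le \budget$ from source coding, the only cosmetic difference being that you work with a $\delta$-separated packing and radius $\delta/2$ where the paper uses a $2\delta$-packing and radius $\delta$. Your closing remark on the source-coding inequality for data-dependent message lengths carefully justifies a step the paper passes over silently, but does not alter the argument.
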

\noindent We prove this proposition in
Section~\ref{sec:proof-prop-general}.  The same lower bound
trivially holds for $\minimaxind(\theta, \pclass, \budget)$, as any
independent protocol is a special case of an interactive protocol.
Although Proposition~\ref{PropGeneral} is a relatively generic
statement, not exploiting any particular structure of the problem, it
is in general unimprovable by more than constant factors, as the
following example illustrates.

\begin{exas}[Bounded mean estimation]
  Suppose our goal is to estimate the mean $\theta = \theta(P)$ of
  a class of distributions $\pclass$ supported on the interval $[0, 1]$,
  so that $\optdomain = \theta(\pclass) = [0, 1]$.  Suppose that a
  single machine ($\nummac = 1$) receives $\numobs$ i.i.d.\ observations
  $\Xsam{i}$ according to $P$.  The packing
  entropy has lower bound $\pentropy_\optdomain(\delta) \geq
  \log_2(1/\delta)$, and consequently, Proposition~\ref{PropGeneral} implies
  that the distributed minimax risk is lower bounded as
  \begin{align*}
    \minimaxind(\theta, \pclass, \budget) & \geq \minimaxinter(\theta,
    \pclass, \budget) \; \geq \; \frac{1}{8} \big(2^{ - 2
      \budget - 2} \big)^2.
  \end{align*}
  Setting $\budget = \frac{1}{4} \log_2 \numobs$ yields the lower bound
  $\minimaxind(\theta, \pclass([0, 1]), \budget) \geq \frac{1}{128
    \numobs}$.

  This lower bound is sharp up to the constant pre-factor; it can
  be achieved by the following simple method.  Given its $\numobs$
  observations, the single machine computes the sample mean
  \mbox{$\widebar{X}_\numobs \defn \frac{1}{\numobs} \sum_{i=1}^\numobs
    \Xsam{i}$.}  The sample mean must lie in the interval $[0, 1]$, and
  so can be quantized to accuracy $\frac{1}{\numobs}$ using
  $\log_2\numobs$ bits, and this quantized version $\thetahat$ can be
  transmitted.  A straightforward calculation shows that $\E[(\thetahat
    - \theta)^2] \leq \frac{2}{\numobs}$, and
  Proposition~\ref{PropGeneral} yields an order-optimal bound.
  \hfill $\blacksquare$
\end{exas}

%%%%%%%%%%%%%%%%%%%%%%%%%%%%%%%%%%%%%%%%%%%%%%%%%%%%%%%%%%%%%%%%%%%%%

\subsection{Independent protocols in multi-machine settings}

We would like to study how the \emph{budget $\budget$}---the number of bits
required to achieve the minimax rate---scales with the number of
machines $\nummac$.  For our first set of results in this setting, we
consider the non-interactive case, where each machine $i$ sends
messages $\msg_i$ independently of all the other machines.  These
results serve as pre-cursors to our later results on interactive
protocols.

We first provide lower bounds for mean estimation in the
$d$-dimensional normal location family model:
\begin{align}
  \label{EqnMultNormalLoc}
  \NORMAL_d & \defn \{\normal(\theta, \sigma^2 I_{d \times d}) \mid
  \theta \in \optdomain = [-1, 1]^d\}.
\end{align}
Here each machine receives an i.i.d.\ sample of size $\numobs$ from a
normal distribution $\normal(\theta, \sigma^2 I_{d\times d})$ with
unknown mean $\theta$. The following result provides a lower bound on
the distributed minimax risk with independent communication:
\begin{theorem}
  \label{theorem:gaussian-mean-communication}
  Given a communication budget $\budget_i$ for each machine $i = 1,
  \ldots, \nummac$, there exists a universal (numerical) constant $c$
  such that
  \begin{align}
    \label{eqn:gaussian-mean-communication}
    \minimaxind(\theta, \NORMAL_d, \budget_{1:\nummac})
    & \geq c \frac{\sigma^2
      d}{\nummac \numobs} \; \min \Big \{\frac{\nummac
      \numobs}{\sigma^2}, \; \frac{\nummac}{\log \nummac}, \; \frac{
      \nummac }{(\sum_{i=1}^\nummac \min \{1, \frac{\budget_i}{d} \} ) \,
      \log \nummac } \vee 1 \Big\}.
  \end{align}
\end{theorem}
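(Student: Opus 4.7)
The plan is to combine a coordinate-wise Assouad reduction with a novel quantitative data-processing inequality that limits how much information each budget-$\budget_i$ message can carry about the mean vector. Fix a scale $\delta \in (0,1]$ to be tuned and take as a hard subfamily $\{P_v : v \in \{-1,+1\}^d\} \subset \NORMAL_d$ with $P_v = \NORMAL(\delta v, \sigma^2 I_{d\times d})$, letting $V$ be uniform on $\{-1,+1\}^d$. Each machine $i$ observes $X^{(i)}$, an i.i.d.\ sample of size $\numobs$ from $P_V$, and emits an independent message $Y_i$ of expected length at most $\budget_i$; write $\fullmsg = (Y_1, \ldots, Y_\nummac)$. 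A coordinate-wise Assouad bound for the squared loss gives
\begin{equation*}
  \minimaxind(\theta, \NORMAL_d, \budget_{1:\nummac}) \;\ge\; \delta^2 \sum_{j=1}^d \inf_{\hat v_j} \mathbb{P}\bigl(\hat v_j(\fullmsg) \ne V_j\bigr),
\end{equation*}
and Pinsker's inequality plus independence of the $V_j$ under their prior yields $\sum_j \mathbb{P}(\hat v_j \ne V_j) \ge d/2 - \sqrt{(d/2)\sum_j I(V_j;\fullmsg)}$. Hence a lower bound of order $d\delta^2$ follows once one certifies $\sum_j I(V_j;\fullmsg) \lesssim d$.

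Because the messages are conditionally independent given $V$ under $\pind$, and because the $V_j$ are independent under their prior, a standard subadditivity argument gives
\begin{equation*}
  \sum_{j=1}^d I(V_j;\fullmsg) \;\le\; \sum_{i=1}^\nummac \sum_{j=1}^d I(V_j; Y_i),
\end{equation*}
so everything reduces to controlling the per-machine quantity $\sum_j I(V_j; Y_i)$ jointly as a function of $\budget_i$ and the signal-to-noise parameter $\delta^2 \numobs/\sigma^2$.

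The crux of the proof, and the principal obstacle, is a quantitative data-processing inequality of the form
\begin{equation*}
  \sum_{j=1}^d I(V_j; Y_i) \;\le\; c\cdot \frac{\delta^2 \numobs}{\sigma^2} \cdot \min\{d,\budget_i\} \cdot \log \nummac.
\end{equation*}
This is strictly stronger, in the low-SNR regime $\delta^2 \numobs/\sigma^2 \ll 1$, than either of the two obvious bounds $I(V; Y_i) \le H(Y_i) \lesssim \budget_i \log \nummac$ and $I(V; Y_i) \le I(V; X^{(i)}) \lesssim d\numobs\delta^2/\sigma^2$: it simultaneously exploits the bit budget and the per-sample Fisher attenuation. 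I anticipate the proof proceeds by writing $dP_{V\mid Y_i = y}/dP_V$ as a second-order expansion in $\delta$ using the linear sufficient statistic of the Gaussian location model, applying a $\chi^2$-to-KL comparison to convert posterior drift into mutual information, and then tensorizing over the $d$ hypercube coordinates while using $H(Y_i) \lesssim \budget_i \log \nummac$ (the latter obtained by converting the expected-length constraint into a deterministic bit bound via Markov's inequality together with a prefix-code construction; this is where the $\log \nummac$ overhead enters). This inequality is the novel contribution advertised in the abstract, and essentially all of the technical work is concentrated here.

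Granting the inequality above, summing over $i$ and requiring $\sum_{i,j} I(V_j; Y_i) \le d/4$ yields
\begin{equation*}
  \delta^2 \;\lesssim\; \frac{\sigma^2}{\numobs \log\nummac \cdot \sum_{i=1}^\nummac \min\{1,\budget_i/d\}},
\end{equation*}
and multiplying by $d$ recovers the third (budget-dependent) term of~\eqref{eqn:gaussian-mean-communication}. The first term in the minimum is the $[-1,1]^d$ diameter bound (corresponding to taking $\delta = 1$ in the Assouad construction), and the second term comes from retaining only the unconstrained ``$d$'' branch of the Paragraph~3 inequality when summing. Finally the $\vee 1$ just enforces that the overall minimum never drops below the centralized rate $\sigma^2 d/(\nummac\numobs)$. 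Apart from the quantitative data-processing inequality, the argument is routine Assouad/Pinsker machinery.
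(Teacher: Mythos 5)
Your overall architecture coincides with the paper's: a hypercube subfamily $\theta_\packval = \delta \packval$, tensorization of mutual information across machines for independent protocols, and a quantitative data-processing inequality that interpolates between the bit budget and the signal-to-noise ratio. The substitution of Assouad plus Pinsker for the paper's Hamming-ball variant of Fano is a legitimate alternative reduction. The problem is that the entire content of the theorem lives in the inequality you state in your third paragraph and do not prove, and the sketch you give for it points in the wrong direction in two concrete ways. First, the $\log \nummac$ factor does not come from converting the expected-length constraint into a deterministic bit bound; the paper uses $H(\msg_i) \le \budget_i$ directly from the source coding theorem, with no loss. The $\log \nummac$ enters through a \emph{truncation} of the sample space: because the Gaussian likelihood ratio between $P(\cdot \mid \packrv_j = +1)$ and $P(\cdot \mid \packrv_j = -1)$ is unbounded, no bounded-likelihood-ratio contraction (and no naive second-order $\chi^2$ expansion of the posterior) applies directly. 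One must condition on the event that the per-coordinate sufficient statistic satisfies $|\sum_k X_j^{(i,k)}| \le \sqrt{\numobs}\, a$ with $a \asymp \sigma \sqrt{\log \nummac}$, chosen so that the failure probabilities are negligible after a union bound over machines; the resulting contraction constant is $(e^{4\alpha}-1)^2$ with $\alpha = \sqrt{\numobs}\, \delta a/\sigma^2$, and this is where $\log\nummac$ appears.

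Second, the truncation produces \emph{additive} error terms of the form $\sum_j H(E_j) + \sum_j P(E_j = 0)$ (with $E_j$ the truncation indicators), so the clean multiplicative bound you assert is not what is actually provable. Controlling these additive terms forces the constraint $\delta^2 \lesssim \sigma^2/(\numobs \log \nummac)$, and it is this constraint---not ``retaining the unconstrained $d$ branch when summing over machines,'' which would only yield the weaker scale $\delta^2 \lesssim \sigma^2/(\nummac\numobs)$ up to logarithms---that generates the $\nummac/\log\nummac$ term in \eqref{eqn:gaussian-mean-communication}. (The unconstrained branch $I(\packrv;\msg_i) \le d\numobs\delta^2/\sigma^2$ carries no $\log\nummac$ factor and, after summing over machines, reproduces only the centralized rate.) So your accounting of the second term in the minimum is incorrect, and the key lemma remains unproven: to close the gap you would need the truncation argument and the total-variation contraction lemma for the Markov structure $\packrv_j \to \sample_j \to \msg_i$ conditioned on the good event, which is where essentially all of the paper's technical work resides.
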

\noindent See
Section~\ref{appendix:proof-gaussian-mean-communication} for the proof
of this claim. \\

Given centralized access to the full $\nummac \numobs$-sized sample,
the minimax rate for the mean-squared error is $\frac{\sigma^2
  d}{\nummac \numobs}$ (e.g.~\citet{Lehmann1998}).  This optimal
rate is achieved by the sample mean.  Consequently, the lower
bound~\eqref{eqn:gaussian-mean-communication} shows that each machine
individually must communicate at least $\frac{d}{\log \nummac}$ bits
for a decentralized procedure to match the centralized rate.  If we
ignore logarithmic factors, this lower bound is achievable by the
following simple procedure:
\begin{enumerate}[(i)]
\item First, each machine computes the sample mean of its local data
  and truncates it to the interval $[-1 -
    \frac{\sigma}{\sqrt{\numobs}}, 1 +
    \frac{\sigma}{\sqrt{\numobs}}]$.
\item Next, each machine quantizes each coordinate of the resulting
  estimate to precision $\frac{\sigma^2}{\nummac\numobs}$, using $\order(1)
  \,d \log \frac{\nummac\numobs}{\sigma^2} $ bits to do so.
\item The machines send these quantized averages to the fusion
  center using \mbox{$\budget = \order(1)\, d \nummac
    \log \frac{\numobs}{\sigma^2}$} total bits.
\item Finally, the fusion center averages them, obtaining an estimate
  with mean-squared error of the order $\frac{\sigma^2 d}{\nummac \numobs}$.
\end{enumerate}

% \vspace*{.1in}

The techniques we develop also apply to other families of probability
distributions, and we finish our discussion of independent
communication protocols by presenting a result that
gives lower bounds sharp to numerical constant prefactors.
In particular, we consider mean estimation for the family $\mc{P}_d$
of distributions supported on the compact set $[-1, 1]^d$.  One
instance of such a distribution is the Bernoulli family taking values
on the Boolean hypercube $\{-1, 1\}^d$.
\begin{proposition}
  \label{proposition:single-bernoulli-samples}
  Assume that each of $\nummac$ machines receives a single observation
  ($\numobs = 1$) from a distribution in $\mc{P}_d$.  There exists a
  universal constant $c > 0$ such that
  \begin{align*}
    \minimaxind(\theta, \pclass_d, \budget_{1:\nummac}) & \geq c \,
    \frac{d}{\nummac} \; \min \Big\{ \nummac, \; \frac{\nummac}{ \sum_{i
        = 1}^\nummac \min\{1, \frac{\budget_i}{d} \} } \Big\},
  \end{align*}
  where $\budget_i$ is the budget for machine $i$.
\end{proposition}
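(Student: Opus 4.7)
The plan is to apply Assouad's method to a Rademacher-type subfamily of $\pclass_d$ and to control the resulting Bayes testing risk via a quantitative data-processing inequality that accounts for per-machine budgets. Concretely, for $\delta \in (0, 1/2]$ to be chosen and each $v \in \{-1, +1\}^d$, let $P_v$ denote the product distribution on $\{-1, +1\}^d$ whose $j$-th coordinate equals $+1$ with probability $(1 + \delta v_j)/2$. Then $P_v \in \pclass_d$, $\theta(P_v) = \delta v$, and $\lVert\theta(P_v) - \theta(P_{v'})\rVert_2^2 = 4\delta^2 d_H(v, v')$, where $d_H$ is the Hamming distance. Machine $i$ observes a single sample $X_i \simiid P_v$ and emits $Y_i$ as a (possibly randomized) function of $X_i$ alone.

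Next I would draw $V$ uniformly from $\{-1,+1\}^d$ and apply Assouad's lemma. Any estimator $\widehat\theta(Y_{1:\nummac})$ induces coordinatewise sign tests $\widehat v_j = \mathrm{sign}(\widehat\theta_j)$ satisfying $\E \lVert\widehat\theta - \delta V\rVert_2^2 \ge \delta^2 \sum_{j=1}^d \mathbb{P}(\widehat v_j \neq V_j)$, and Le Cam's two-point inequality gives $\mathbb{P}(\widehat v_j \neq V_j) \ge \tfrac{1}{2}(1 - \lVert M_{+j} - M_{-j}\rVert_{\rm TV})$, where $M_{\pm j}$ denotes the marginal law of $Y_{1:\nummac}$ given $V_j = \pm 1$. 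Pinsker plus convexity of KL then yields $\lVert M_{+j} - M_{-j}\rVert_{\rm TV}^2 \lesssim I(V_j; Y_{1:\nummac})$, and since the coordinates of $V$ are independent, $\sum_{j} I(V_j; Y_{1:\nummac}) \le I(V; Y_{1:\nummac})$. Because the protocol is independent, the messages are conditionally independent given $V$, so $I(V; Y_{1:\nummac}) \le \sum_{i} I(V; Y_i)$.

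The hard part will be the quantitative data-processing inequality
\begin{equation*}
I(V; Y_i) \;\lesssim\; \delta^2 \min\{\budget_i, d\} \quad\text{for each } i.
\end{equation*}
Two complementary bounds will combine. First, data processing plus the product structure of $P_v$ gives $I(V; Y_i) \le I(V; X_i) = \sum_j I(V_j; X_{i,j}) \lesssim d\delta^2$, since each $X_{i,j}$ is a bias-$\delta$ sign. Second, each bit of $Y_i$ carries at most $O(\delta^2)$ information about $V$ in this low-SNR regime, producing $I(V; Y_i) \lesssim \delta^2 \budget_i$; crucially, the naive bound $I(V; Y_i) \le H(Y_i) \le \budget_i$ is too weak by a factor of $\delta^2$ to deliver the sharp $\min\{1, \budget_i/d\}$ scaling. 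I anticipate establishing this strong DPI through the tensor structure of $P_v$ combined with a $\chi^2$-divergence computation, in the spirit of the paper's announced novel ingredient.

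Finally, combining the bounds gives $\frac{1}{d}\sum_j \lVert M_{+j} - M_{-j}\rVert_{\rm TV}^2 \lesssim \frac{\delta^2}{d}\sum_i \min\{\budget_i, d\}$. Choosing $\delta^2 \asymp d/\sum_i \min\{\budget_i, d\}$, capped at $1/4$, makes this quantity a small constant, so a constant fraction of the sign tests err with constant probability; plugging back into Assouad yields
\begin{equation*}
\minimaxind(\theta, \pclass_d, \budget_{1:\nummac}) \;\gtrsim\; d\delta^2 \;\asymp\; d \wedge \frac{d^2}{\sum_i \min\{\budget_i, d\}},
\end{equation*}
which matches the stated lower bound upon rewriting $\sum_i \min\{\budget_i, d\} = d \sum_i \min\{1, \budget_i/d\}$.
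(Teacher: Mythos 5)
Your overall architecture is sound and the reduction differs from the paper's in an acceptable way: you use Assouad's method (coordinatewise sign tests, Le Cam's two-point bound, Pinsker, and $\sum_j I(V_j; Y_{1:\nummac}) \le I(V;Y_{1:\nummac})$ for independent coordinates), whereas the paper uses a generalized Fano inequality over Hamming balls of radius $d/6$ (Lemmas~\ref{lemma:estimation-to-testing} and~\ref{lemma:fun-fano}), falling back to Le Cam only for $d<10$. Both reductions deliver the same rate, and your tensorization step $I(V;Y_{1:\nummac}) \le \sum_i I(V;Y_i)$ matches Lemma~\ref{LemTensorize}. The construction of the hypercube family and the final choice of $\delta^2$ also mirror the paper.

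However, there is a genuine gap at exactly the point you flag as ``the hard part'': the inequality $I(V;Y_i) \lesssim \delta^2 \budget_i$ is asserted, not proved. This is the entire novel content of the proposition --- everything else in your argument is standard --- and the heuristic ``each bit of $Y_i$ carries at most $O(\delta^2)$ information in the low-SNR regime'' does not follow from a direct $\chi^2$ or tensorization computation. The paper's route (Lemma~\ref{lemma:independent-message-contraction}) proves the intermediate statement $I(V;Y_i) \le 2(e^{2\llratio}-1)^2\, I(\sample^{(i)};Y_i)$ with $e^\llratio = (1+\delta)/(1-\delta)$, and then uses $I(\sample^{(i)};Y_i) \le \min\{H(Y_i), H(\sample^{(i)})\} \le \min\{\budget_i, d\}$. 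Establishing that contraction is delicate: one applies the chain rule over coordinates $j$, shows via a graphical-model factorization (Lemma~\ref{lemma:information-chaining}) that the conditional masses $P(V_j = \cdot \mid Y_i, V_{1:j-1})$ and $P(V_j = \cdot \mid V_{1:j-1})$ agree up to a multiplicative factor $e^{O(\llratio)}$, bounds the symmetrized KL divergence by $(e^{2\llratio}-1)^2$ times a squared total variation distance between conditional sample laws, and finally invokes Pinsker to recover $I(\sample_j^{(i)}; Y_i \mid V_{1:j-1})$. The subtlety your sketch misses is that the likelihood-ratio bound you have is on the channel $V \to X$, not on $X \to Y$, so you cannot simply charge $O(\delta^2)$ to each bit of $Y_i$; the contraction has to be routed through the message's dependence on the sample. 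Without supplying this lemma (or an equivalent), the proof is incomplete.
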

\noindent
\noindent See Section~\ref{appendix:proof-theorem-single-bernoulli}
for the proof. \\

The standard minimax rate for $d$-dimensional mean estimation on
$\mc{P}_d$ scales as $d /\nummac$, which is achieved by the sample
mean. Proposition~\ref{proposition:single-bernoulli-samples}
shows that to
achieve this scaling, we must have $\sum_{i=1}^\nummac \min \{1,
\frac{\budget_i}{d}\} \gtrsim \nummac$, showing that each machine must
send $\budget_i \gtrsim d$ bits.  This lower bound is also achieved by
a simple scheme:
\begin{enumerate}[(i)]
\item Each machine $i$ receives an observation $X_i \in [-1, 1]^d$.
  Based on this observation.  it generates a Bernoulli random vector
  \mbox{$Z_i = (Z_{i1}, \ldots, Z_{id})$} with $Z_{ij} \in \{0,1\}$
  taking the value $1$ with probability $(1 + \sample_{ij}) / 2$,
  independently across coordinates.
\item Machine $i$ uses $d$ bits to send the vector $Z_i \in \{0,1\}^d$
  to the fusion center.
\item The fusion center then computes the average $\thetahat =
  \frac{1}{\nummac} \sum_{i=1}^\nummac (2 Z_i - 1)$.  This average is
  unbiased, and its expected squared error is bounded by $d /
  \nummac$.
\end{enumerate}

Note that for both the normal location family of
Theorem~\ref{theorem:gaussian-mean-communication} and the simpler
bounded single observation model in
Proposition~\ref{proposition:single-bernoulli-samples}, there is an
exponential gap between the information required to describe the
problem to the minimax mean squared error of $\frac{d}{\nummac
  \numobs}$---which scales as as $\order(1) d \log(\nummac \numobs)$---and
the number of bits that must be communicated, which scales nearly
linearly in $\nummac$.  See also our discussion following
Theorem~\ref{theorem:interactive-gaussian}.

%%%%%%%%%%%%%%%%%%%%%%%%%%%%%%%%%%%%%%%%%%%%%%%%%%%%%%%%%%%%%%%%%%%%%%

\subsection{Interactive protocols in multi-machine settings}

Having provided results on mean estimation in the non-interactive
setting, we now turn to the substantially harder setting of
distributed statistical inference where feedback is permitted.  As
described in Section~\ref{sec:distributed-protocols},
in the interactive setting the fusion center may
freely broadcast every message received to all other
machines in the network.  This freedom allows more powerful
algorithms, rendering the task of proving lower bounds more challenging.

Let us begin by considering the uniform location family $\mc{U}_d
= \{ P_{\theta}, \; \theta \in [-1, 1]^d \}$, where $P_\theta$ is the
uniform distribution on the rectangle $[\theta_1 - 1, \theta_1 + 1]
\times \cdots \times [\theta_d - 1, \theta_d + 1]$.  For this problem,
a direct application of Proposition~\ref{PropGeneral} gives a nearly
sharp result:
\begin{proposition}
  \label{corollary:uniform-upper-bound}
  Consider the uniform location family $\mathcal{U}_d$ with $\numobs$
  i.i.d.\ observations per machine:
  \begin{enumerate}
  \item[(a)] There are universal (numerical)
    constants
    $c_1, c_2 > 0$ such that
    \begin{align*}
      \minimaxinter(\theta, \mathcal{U}, \budget) \geq
      c_1 \max\left\{ \exp\left(-c_2 \frac{\budget}{d}\right),
      \frac{d}{(\nummac \numobs)^2}\right\}.
    \end{align*}
    \vspace{-15pt}
  \item[(b)] Conversely, given a budget of $\budget = d\big[2 \log_2(2
    \nummac \numobs) + \log(\nummac)(\ceil{\log_2 d} + 2 \log_2(2
    \nummac\numobs))\big]$ bits, there is a universal constant $c$ such that
    \begin{align*}
      \minimaxinter(\theta, \mathcal{U}, \budget) \leq c \frac{d}{(
        \nummac \numobs )^2}.
    \end{align*}
  \end{enumerate}
\end{proposition}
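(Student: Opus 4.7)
The proposition decomposes into a lower bound (a) and an upper bound (b), which I would handle independently. For (a), the $d/(\nummac\numobs)^2$ floor is simply the centralized minimax rate: any interactive protocol yields an estimator measurable in the pooled $\nummac\numobs$-sample, so
\[
\minimaxinter(\theta,\mc{U},\budget) \;\geq\; \MiniMax_{\nummac\numobs}(\theta,\mc{U}_d),
\]
and the scalar centralized minimax MSE for $m$ i.i.d.\ samples from $\mathrm{Unif}[\theta-1,\theta+1]$ is $\Theta(1/m^2)$ by Le~Cam's two-point method comparing $\mathrm{Unif}[\theta\pm 1]$ with $\mathrm{Unif}[\theta+\delta\pm 1]$ at $\delta\asymp 1/m$ (the two have total variation $\lesssim \delta$); this tensorizes across the $d$ coordinates to $\Omega(d/(\nummac\numobs)^2)$. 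The exponential term $c_1\exp(-c_2\budget/d)$ is exactly Proposition~\ref{PropGeneral}: a coordinate-aligned grid of spacing $\delta$ in $[-1,1]^d$ is an $\ell_2$-packing of size $\lfloor 2/\delta\rfloor^d$, so $\pentropy_\Theta(\delta) \geq d\log_2(2/\delta)$; inverting gives $\pentropy_\Theta^{-1}(2\budget+2) \geq 2\cdot 2^{-(2\budget+2)/d}$, and substituting produces the bound.

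For (b), I would build an interactive protocol around the fact that $(\hat U_j,\hat V_j) \defeq (\min_{t,k} X^{(t)}_{k,j}, \max_{t,k} X^{(t)}_{k,j})$ is a sufficient statistic for $\theta_j$, and the centralized MLE $(\hat U_j+\hat V_j)/2$ has MSE $O(1/(\nummac\numobs)^2)$ per coordinate (by an explicit uniform order-statistic computation, exploiting the exponential-rate tails near the boundary). The protocol: machine~1 broadcasts its local $(U^{(1)}_j, V^{(1)}_j)_{j\in[d]}$ quantized to precision $1/(2\nummac\numobs)^2$, consuming $\sim d\log_2(2\nummac\numobs)$ bits. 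Then for $t = 2,\ldots,\nummac$ in sequence, machine~$t$ compares its local $(U^{(t)}_j, V^{(t)}_j)$ against the public running bests $(U^*_j, V^*_j)$ and broadcasts an \emph{update packet}---a coordinate index ($\ceil{\log_2 d}$ bits) together with the refined value at precision $1/(2\nummac\numobs)^2$ ($\sim 2\log_2(2\nummac\numobs)$ bits)---only for the coordinates on which it strictly improves the running min or max. The returned estimator is the midpoint $(U^*_j+V^*_j)/2$, which agrees with the global MLE up to quantization error $O(1/(\nummac\numobs)^2)$, giving MSE $O(d/(\nummac\numobs)^2)$.

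The budget analysis rests on a classical records fact: for i.i.d.\ $Z_1,\ldots,Z_\nummac$ from any continuous law, the expected number of indices $k$ with $Z_k < \min_{j<k} Z_j$ is the harmonic sum $H_\nummac = \sum_{k=1}^\nummac 1/k \leq 1+\log\nummac$, and this value depends on neither the source law nor $\theta$ (it follows from exchangeability alone). Applied to both minima and maxima across each of the $d$ coordinates, the expected number of update packets is $\leq 2 d H_\nummac = O(d \log\nummac)$, yielding
\begin{align*}
\E[\length] \;\lesssim\; d\log_2(\nummac\numobs) + d\log\nummac \cdot \bigl(\ceil{\log_2 d} + \log_2(\nummac\numobs)\bigr) \;\leq\; \budget
\end{align*}
after absorbing absolute constants, as required. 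The main technical obstacle is exactly this accounting: the $\pinter$ budget is \emph{in expectation}, so worst-case data orderings that force $\nummac$ updates per coordinate are permitted (they have vanishing probability), and the records bound must be checked to survive the $\sup_{P\in\mc{U}_d}$---which it does, because its proof rests only on exchangeability. A secondary subtlety is the non-regularity of the uniform MLE: both the $1/(\nummac\numobs)^2$ centralized MSE and the matching scalar lower bound in part~(a) must be derived from explicit order-statistic computations rather than Fisher-information arguments.
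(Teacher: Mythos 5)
Your proposal is correct and follows essentially the same route as the paper: part (a) combines Proposition~\ref{PropGeneral} (via a packing/volume bound on $[-1,1]^d$) with the centralized rate $d/(\nummac\numobs)^2$, and part (b) uses a sequential record-update protocol whose expected cost is controlled by exactly the exchangeability/records argument you describe (probability $\le 1/i$ that machine $i$ improves the running extremum, summing to a harmonic number). The one difference is that the paper tracks only the coordinate-wise minimum and outputs $s+1$, exploiting the known support width $2$; this halves the communication relative to your min-and-max variant and is what makes the protocol fit the exact budget stated in part (b), whereas yours would need roughly twice that (a constant-factor issue only).
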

\noindent See Section~\ref{sec:proof-uniform} for the proof of this
claim. \\

If each of the $\nummac$ machines receives $\numobs$ observations, we
have a total sample size of $\nummac \numobs$, so the minimax rate
over all centralized procedures scales as $d/(\nummac \numobs)^2$ (for
instance, see~\citet{Lehmann1998}). Consequently,
Proposition~\ref{corollary:uniform-upper-bound}(b) shows that the
number of bits required to achieve the centralized rate has only
\emph{logarithmic} dependence on the number $\nummac$ of machines.
Part (a) shows that this logarithmic dependence on $\nummac$ is
unavoidable: at least $\budget \gtrsim d \log(\nummac \numobs)$ bits
are necessary to attain the optimal rate of $\frac{d}{(\nummac
  \numobs)^2}$.

It is natural to wonder whether such logarithmic dependence holds more
generally. The following result shows that it does not: for some
problems, the dependence on $\nummac$ must be (nearly) linear.  In
particular, we reconsider estimation in the normal location family
model~\eqref{EqnMultNormalLoc}, showing a lower bound that is nearly
identical to that of
Theorem~\ref{theorem:gaussian-mean-communication}.
\begin{theorem}
  \label{theorem:interactive-gaussian}
  For $i = 1, \ldots, \nummac$, assume that each machine receives an
  i.i.d.\ sample of size $n$ from a normal location
  model~\eqref{EqnMultNormalLoc} and that there is a total
  communication budget $\budget$. Then there exists a universal (numerical)
  constant $c$ such that
  \begin{equation}
    \minimaxinter(\theta, \NORMAL_d, \budget) \geq c
    \frac{\sigma^2 d}{\nummac \numobs} \; \min\Big\{\frac{\nummac
      \numobs}{\sigma^2}, \; \frac{
      \nummac }{(\budget/d+1) \; \log \nummac } \vee 1\Big\}.
    \label{eqn:interactive-gaussian}
  \end{equation}
\end{theorem}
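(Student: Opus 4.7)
The plan is to combine Assouad's hypercube method with a quantitative data-processing inequality tailored to interactive protocols, paralleling the treatment of Theorem~\ref{theorem:gaussian-mean-communication} but with a mutual-information bound that accommodates the blackboard model.

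\textbf{Step 1 (packing and Assouad).} For a separation $\delta \in (0,1]$ to be optimized later, take the $2^d$-point packing $\{\theta_v = \delta v : v \in \{-1,1\}^d\} \subset [-1,1]^d$. Let $V$ be uniform on $\{-1,1\}^d$; conditional on $V=v$, each of the $\nummac$ machines draws $\numobs$ i.i.d.\ samples from $\normal(\theta_v, \sigma^2 I_{d\times d})$. Since $\|\theta_v - \theta_{v'}\|_2^2 = 4\delta^2 \, d_H(v,v')$, Assouad's lemma applied coordinate-wise gives
\begin{equation*}
\minimaxinter(\theta, \NORMAL_d, \budget) \;\geq\; \frac{\delta^2}{2} \sum_{j=1}^{d}\bigl(1 - \|P^{+j}_{\fullmsg_{1:T}} - P^{-j}_{\fullmsg_{1:T}}\|_{\mathrm{TV}}\bigr),
\end{equation*}
where $P^{\pm j}_{\fullmsg_{1:T}}$ denotes the law of the transcript conditional on $V_j = \pm 1$.

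\textbf{Step 2 (reduction to mutual information).} By Pinsker's inequality, Cauchy--Schwarz, and independence of the coordinates of $V$ (which gives $\sum_j I(V_j;\fullmsg_{1:T}) \leq I(V;\fullmsg_{1:T})$ via the chain rule for KL divergence), one obtains
\begin{equation*}
\sum_{j=1}^{d}\|P^{+j}_{\fullmsg_{1:T}} - P^{-j}_{\fullmsg_{1:T}}\|_{\mathrm{TV}} \;\lesssim\; \sqrt{d\cdot I(V;\fullmsg_{1:T})},
\end{equation*}
so that $\minimaxinter \gtrsim \delta^2 d\bigl(1 - C\sqrt{I(V;\fullmsg_{1:T})/d}\bigr)_+$ for a numerical $C$.

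\textbf{Step 3 (interactive quantitative DPI).} The crux of the argument is to show
\begin{equation*}
I(V;\fullmsg_{1:T}) \;\lesssim\; \frac{\delta^2 \numobs}{\sigma^2}\,(\budget + d)\,\log\nummac.
\end{equation*}
The plan is the chain-rule decomposition
\begin{equation*}
I(V;\fullmsg_{1:T}) = \sum_{t=1}^{T}\sum_{i=1}^{\nummac} I\bigl(V;\msg_{t,i} \,\big|\, \fullmsg_{1:t-1}, \msg_{t,1:i-1}\bigr),
\end{equation*}
and control each conditional term by combining (i) the entropy bound $I(V;\msg_{t,i}\mid \cdot) \leq \log 2 \cdot \E[\length_{t,i}]$, whose aggregate is $\budget$, with (ii) a strong DPI for the normal location model showing that, given any history, each additional bit of $\msg_{t,i}$ leaks at most $O(\numobs\delta^2/\sigma^2)$ of information per coordinate of $V$. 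A $\log\nummac$ factor arises because one must also encode the identity of the speaking machine in each interactive round, inflating the effective overhead per token by $\log_2\nummac$. Summing the resulting multiplicative bounds over all messages yields the display.

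\textbf{Step 4 (balancing).} Choose $\delta^2 \asymp \min\{1,\; \sigma^2 d/(\numobs(\budget+d)\log\nummac)\}$. This makes the mutual-information term in Step 2 at most $d/4$, so the discount factor in Step 1 is bounded away from zero, and plugging in gives
\begin{equation*}
\minimaxinter \;\gtrsim\; \delta^2 d \;\asymp\; \frac{\sigma^2 d^2}{\numobs(\budget+d)\log\nummac} \;=\; \frac{\sigma^2 d}{\nummac\numobs}\cdot\frac{\nummac}{(\budget/d+1)\log\nummac},
\end{equation*}
which is the main term of~\eqref{eqn:interactive-gaussian}. The capping $\delta \leq 1$ accounts for the $\nummac\numobs/\sigma^2$ term inside the minimum, and the ``$\vee 1$'' comes from the standard unrestricted centralized minimax rate $\sigma^2 d/(\nummac\numobs)$, which is trivially a lower bound.

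\textbf{Main obstacle.} Step 3 is the principal challenge. In the independent setting of Theorem~\ref{theorem:gaussian-mean-communication}, messages satisfy $\msg_i \perp \msg_{-i}\mid V$, so mutual information tensorizes cleanly across machines. Interactivity breaks this, since conditioning on the shared transcript alters the posterior of each local sample in a data-dependent way, and machines no longer contribute ``independently'' to $I(V;\fullmsg_{1:T})$. Establishing a DPI strong enough to recover the independent-case bound up to only a $\log\nummac$ overhead---despite this conditional dependence---is the novel quantitative ingredient, and is where the bulk of the technical effort must be spent.
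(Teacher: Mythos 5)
Your Steps 1, 2, and 4 are sound and give a valid alternative shell to the paper's: the paper reduces to a Hamming-distance test and invokes a Fano variant (Lemmas~\ref{lemma:estimation-to-testing} and~\ref{lemma:fun-fano}) for $d \ge 10$ and Le Cam for $d < 10$, whereas you use Assouad plus Pinsker plus the superadditivity $\sum_j I(V_j;\msg) \le I(V;\msg)$; both routes reduce the theorem to showing $I(V;\msg) \lesssim \tfrac{\delta^2 \numobs}{\sigma^2}(\budget+d)\log\nummac$, and your balancing of $\delta$ then matches the paper's.

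The genuine gap is Step 3, which you correctly identify as carrying all the content but do not actually prove; what you offer in its place is a heuristic that is wrong in at least two respects. First, the $\log\nummac$ factor does not come from ``encoding the identity of the speaking machine'' (the paper explicitly assumes the speaker schedule is fixed in advance, so this costs nothing). It comes from a truncation argument: the Gaussian likelihood ratio between $\theta = +\delta e_j$ and $\theta = -\delta e_j$ is unbounded, so the contraction lemma only applies on the event that every machine's sample lies in a set $\llset$ where the log-likelihood ratio is at most $\alpha \asymp \sqrt{\numobs}\,\delta a/\sigma^2$; to make the failure probability negligible after a union bound over all $\nummac$ machines one must take $a \asymp \sigma\sqrt{\log\nummac}$, and $a^2$ enters the contraction constant multiplicatively. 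Second, your item (ii) --- ``each additional bit of $\msg_{t,i}$ leaks at most $O(\numobs\delta^2/\sigma^2)$ per coordinate'' --- is not the mechanism, and a per-message, per-bit accounting of $I(V;\msg_{t,i}\mid \text{history})$ is not established by anything you write. The paper instead proves a one-dimensional interactive contraction lemma (Lemma~\ref{lemma:information-contraction-in-subset}, built on the four-variable total-variation contraction of Lemma~\ref{lemma:information-chaining} applied to each round of the transcript, which requires verifying a factorization $P(\msg_{1:t-1}\mid \packrv, \sample^{(i_t)}) = \fpotl_1(\packrv,\cdot)\fpotl_2(\sample^{(i_t)},\cdot)$ that holds precisely because each message depends on only one machine's data), applies it coordinate-wise conditionally on $\packrv_{\setminus j}$, and then converts the result into a budget-dependent bound via the separate inequality $\sum_{j=1}^d I(\sample_j;\msg\mid \packrv_{\setminus j}) \le I(\sample;\packrv,\msg) \le H(\msg)+d$. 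None of these ingredients --- the factorization that makes the contraction survive interactivity, the truncation with its additive $H(E)+P(E=0)$ error terms, or the $H(\msg)+d$ bound --- appears in your proposal, so as written the argument does not establish the theorem.
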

\noindent See Section~\ref{sec:proof-interactive-gaussian} for the proof
of this claim. \\

Theorem~\ref{theorem:interactive-gaussian} is analogous to, but
slightly weaker than, the corresponding lower bound from
Theorem~\ref{theorem:gaussian-mean-communication} for the
non-interactive setting.  In particular, the lower
bound~\eqref{eqn:interactive-gaussian} shows that at least $\budget
\gtrsim \frac{d \nummac}{\log \nummac} $ bits are required for any
distributed procedure---even allowing fully interactive
communication---to attain the centralized minimax rate.  Thus, in
order to achieve the minimax rate up to logarithmic factors, the total
number of bits communicated must scale (nearly) linearly with the product of
the dimension $d$ and number of machines $\nummac$.

Moreover, these two theorems show that there is an exponential gap between the
number of bits required to communicate the problem solution and the number
required to compute it in a distributed manner.  More specifically, assuming
(for simplicity) that $\sigma^2 = 1$, describing a solution of the normal mean
estimation problem to accuracy $\frac{d}{\nummac \numobs}$ in squared
$\ell_2$-error requires at most $\order(1) d \log(\nummac \numobs)$ bits.  On
the other hand, these two theorems show that nearly $d \nummac$ bits
\emph{must} be communicated. This linear scaling in $\nummac$ is dramatically
different from---exponentially worse than---the logarithmic scaling for the
uniform family. Establishing sharp communication-based lower bounds
thus requires careful study of the underlying family of distributions.

Note that in both Theorems~\ref{theorem:gaussian-mean-communication}
and~\ref{theorem:interactive-gaussian}, the upper and lower bounds
differ by logarithmic factors in the sample size $\numobs$ and number
of machines $\nummac$.  It would be interesting to close this minor
gap.  Another open question is whether the distributed minimax rates
for the independent and interactive settings are the same up to
constant factors, or whether their scaling actually differs in terms
of these logarithmic factors.

%%%%%%%%%%%%%%%%%%%%%%%%%%%%%%%%%%%%%%%%%%%%%%%%%%%%%%%%%%%%%%%%%%%%%%%%

\subsection{Consequences for regression}
\label{sec:optimization}

The problems of mean estimation studied in the previous section,
though simple in appearance, are closely related to other, more
complex problems.  In this section, we show how lower bounds on mean
estimation can be used to establish lower bounds for distributed
estimation in two standard but important generalized linear
models~\cite{Hastie1995}: linear regression and probit regression.

%%%%%%%%%%%%%%%%%%%%%%%%%%%%%%%%%%%%%%%%%%%%%%%%%%%%%%%%%%%%%%%%%%%%%%%%%%

\subsubsection{Linear regression}
\label{sec:linear-regression-results}

Let us begin with a distributed instantiation of linear regression
with fixed design matrices.  Concretely, suppose that each of
$\nummac$ machines has stored a fixed design matrix $\design^{(i)} \in
\real^{\numobs \times d}$ and then observes a response vector
$\response^{(i)} \in \R^d$ from the standard linear regression model
\begin{align}
  \label{eqn:linear-regression-model}
  \response^{(i)} = \design^{(i)} \theta + \noise^{(i)},
\end{align}
where $\noise^{(i)} \sim \normal(0,\sigma^2I_{\numobs\times\numobs})$
are independent noise vectors.  Our goal is to estimate the unknown regression
vector $\theta \in \Theta = [-1,1]^d$, identical for each machine.
Our result involves the smallest and largest
eigenvalues of the rescaled design matrices via the quantities
\begin{align}
  \label{EqnEigBound}
  \lamupper^2 \; \defeq \max_{i \in \{1, \ldots, \nummac\}}
  \frac{\eigenmax({\design^{(i)}}^\top \design^{(i)})}{\numobs},
  \quad \mbox{and} \quad
  \lamlower^2 \defeq \min_{i \in \{1, \ldots, \nummac\}}
  \frac{\eigenmin({\design^{(i)}}^\top \design^{(i)})}{\numobs} > 0.
\end{align}

\begin{corollary}
  \label{corollary:linear-regression-bound}
  Given the linear regression model~\eqref{eqn:linear-regression-model},
  there is a universal positive constant $c$ such that
  \begin{subequations}
    \begin{align}
      \label{EqnLinearRegressionLower}
      \minimaxinter(\theta, \pclass, \budget) & \geq c
      \frac{\sigma^2 d}{\lamupper^2 \nummac \numobs} \; \min \bigg\{
      \frac{\lamupper^2 \nummac \numobs}{\sigma^2}, \; \frac{ \nummac
      }{(\budget/d + 1) \; \log \nummac } \vee 1 \bigg\}.
    \end{align}
    Conversely, given a budgets $\budget_i \geq d \nummac \log(\nummac
    \numobs)$, there is a universal constant $c'$ such that
    \begin{align}
      \label{EqnLinearRegressionUpper}
      \minimaxind(\theta, \pclass, \budget_{1:\nummac}) \leq
      \frac{c'}{\lamlower^2} \; \frac{\sigma^2 d}{\nummac \numobs}.
    \end{align}
  \end{subequations}
\end{corollary}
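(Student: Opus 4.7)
Corollary~\ref{corollary:linear-regression-bound} contains two claims: the lower bound~\eqref{EqnLinearRegressionLower} and the (nearly matching) upper bound~\eqref{EqnLinearRegressionUpper}. The plan is to prove the lower bound by a local, communication-free reduction from the Gaussian location family of Theorem~\ref{theorem:interactive-gaussian}, so that any interactive linear-regression protocol yields a protocol of the same budget $\budget$ for estimating a Gaussian mean with per-observation variance $\sigma^2/\lamupper^2$. The upper bound follows from a ``compute locally, truncate, quantize, then average'' scheme built on the ordinary least-squares estimator at each machine, closely paralleling the scheme used to match Theorem~\ref{theorem:gaussian-mean-communication}.

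\textbf{Lower bound.}
Consider the Gaussian location problem in which each machine draws $\numobs$ i.i.d.\ samples $V_1^{(i)}, \ldots, V_\numobs^{(i)} \sim \normal(\theta, (\sigma^2/\lamupper^2) I_d)$ with $\theta \in [-1,1]^d$. Let $\overline V^{(i)} \sim \normal(\theta, (\sigma^2/(\lamupper^2 \numobs)) I_d)$ denote the sample mean, and synthesize a linear-regression response
\begin{equation*}
  Y^{(i)} \defn \design^{(i)} \overline V^{(i)} + \xi^{(i)}, \qquad \xi^{(i)} \sim \normal\Big(0,\; \sigma^2 I_\numobs - \tfrac{\sigma^2}{\lamupper^2 \numobs}\, \design^{(i)} {\design^{(i)}}^\top\Big),
\end{equation*}
with $\xi^{(i)}$ drawn independently of everything else. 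The covariance of $\xi^{(i)}$ is positive semidefinite since $\eigenmax(\design^{(i)}{\design^{(i)}}^\top) = \eigenmax(\design^{(i)\top}\design^{(i)}) \leq \numobs \lamupper^2$, and a direct computation gives $Y^{(i)} \sim \normal(\design^{(i)} \theta, \sigma^2 I_\numobs)$, exactly the regression law~\eqref{eqn:linear-regression-model}. Because each machine carries out this synthesis using only its own data and private randomness, running any interactive linear-regression protocol on these synthetic responses produces an estimator of $\theta$ while consuming the same budget $\budget$. Invoking Theorem~\ref{theorem:interactive-gaussian} with $\sigma^2$ replaced by $\sigma^2/\lamupper^2$ then yields~\eqref{EqnLinearRegressionLower}.

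\textbf{Upper bound.}
For the converse, each machine computes its local least-squares estimator $\widehat\theta_i = (\design^{(i)\top}\design^{(i)})^{-1}\design^{(i)\top} Y^{(i)} \sim \normal\big(\theta, \sigma^2(\design^{(i)\top}\design^{(i)})^{-1}\big)$, whose coordinate variances are at most $\sigma^2/(\numobs \lamlower^2)$. Mirroring the Gaussian-mean upper bound discussed after Theorem~\ref{theorem:gaussian-mean-communication}, I truncate $\widehat\theta_i$ coordinate-wise to an interval of width $\order(1 + \sigma/\sqrt{\numobs \lamlower^2})$ about the origin, quantize each coordinate to precision $\order(\sigma/\sqrt{\lamlower^2 \nummac \numobs})$, and transmit the resulting vector; this uses $\order(d \log(\nummac \numobs))$ bits per machine, well within the stated budget. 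The fusion center averages the $\nummac$ quantized estimates. Independence and approximate unbiasedness then yield total MSE at most $\sigma^2 d / (\lamlower^2 \nummac \numobs)$ plus a negligible quantization and truncation remainder, establishing~\eqref{EqnLinearRegressionUpper}.

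\textbf{Main obstacle.}
The delicate step is the positive-semidefinite constraint on the synthetic noise $\xi^{(i)}$: it is precisely this constraint that forces the reduction's effective per-observation variance to be $\sigma^2/\lamupper^2$, and hence is what produces the $\lamupper$ dependence in~\eqref{EqnLinearRegressionLower}. One cannot extract from the linear-regression data a Gaussian location problem with smaller noise than this by a local, in-budget reduction, so a different route would be needed to obtain a lower bound in terms of $\lamlower$ rather than $\lamupper$. The remaining pieces---verifying the law of $Y^{(i)}$, bounding truncation bias via Gaussian tail concentration of $\widehat\theta_i - \theta$ given $\theta \in [-1,1]^d$, and collecting quantization error---are routine.
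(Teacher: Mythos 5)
Your proposal is correct and follows essentially the same route as the paper: the lower bound uses the identical reduction in which each machine synthesizes a faithful regression response $Y^{(i)} = \design^{(i)} X^{(i)} + z^{(i)}$ from Gaussian location data by adding independent noise with the positive-semidefinite covariance $\sigma^2 I - \tfrac{\sigma^2}{\lamupper^2 \numobs}\design^{(i)}{\design^{(i)}}^\top$, then invokes Theorem~\ref{theorem:interactive-gaussian} with effective variance $\sigma^2/\lamupper^2$. The upper bound is likewise the same local-solve/quantize/average scheme (the paper simply defers its analysis to the cited results of Zhang et al.).
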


It is a classical fact (e.g.~\cite{Lehmann1998}) that the minimax rate
for $d$-dimensional linear regression scales as $d\sigma^2/(\numobs
\nummac)$.  Part (a) of
Corollary~\ref{corollary:linear-regression-bound} shows this optimal
rate is attainable only if the total budget $\budget$ grows as
$\frac{d \nummac}{\log \nummac}$.  Part (b) of the corollary shows
that the minimax rate is achievable---even using an independent
protocol---with budgets that match the lower
bound to within logarithmic factors.

\begin{proof}
  The upper bound~\eqref{EqnLinearRegressionUpper} follows from the
  results of \citet{Zhang2012}.  Their results imply that the upper
  bound can be achieved by solving each regression problem separately,
  quantizing the (local) solution vectors \mbox{$\what{\theta}^{(i)} \in
    [-1, 1]^d$} to accuracy $\frac{1}{\nummac \numobs}$ using $\budget_i
  = \ceil{d \log_2(\nummac \numobs)}$ bits and performing a form of
  approximate averaging.

  In order to prove the lower bound~\eqref{EqnLinearRegressionLower}, we
  show that solving an arbitrary Gaussian mean estimation problem can be
  reduced to solving a specially constructed linear regression problem.
  This reduction allows us to apply the lower bound from
  Theorem~\ref{theorem:interactive-gaussian}.  Given $\theta \in
  \Theta$, consider the Gaussian mean model
  \begin{align*}
    X^{(i)} = \theta + w^{(i)}, ~~ \mbox{where} ~~ w^{(i)} \sim
    \normal\Big(0, \frac{\sigma^2}{\lamupper^2 \numobs} I_{d \times
      d}\Big).
  \end{align*}
  Each machine $i$ has its own design matrix $A^{(i)}$, and we use it to
  construct a response vector \mbox{$\response^{(i)} \in \R^\numobs$.}
  Since $\eigenmax({\design^{(i)}}^\top \design^{(i)}/\numobs) \le \lamupper^2$, the
  matrix $\Sigma^{(i)} \defeq \sigma^2I_{\numobs\times\numobs} -
  \frac{\sigma^2}{\lamupper^2 n}\design^{(i)}(\design^{(i)})^\top$ is
  positive semidefinite.  Consequently, we may form a response vector
  via
  \begin{align}
    \label{eqn:construct-gaussian-noise}
    \response^{(i)} = \design^{(i)}X^{(i)} + z^{(i)} = \design^{(i)}\theta
    + \design^{(i)} w^{(i)} + z^{(i)} , \quad \mbox{ $z^{(i)} \sim
      \normal(0, \Sigma^{(i)})$ independent of $w^{(i)}$.}
  \end{align}
  The independence of $w^{(i)}$ and $z^{(i)}$ guarantees that
  $\response^{(i)} \sim \normal(\design^{(i)}\theta, \sigma^2
  I_{\numobs\times\numobs})$, so the pair $(\response^{(i)},
  \design^{(i)})$ is faithful to the regression
  model~\eqref{eqn:linear-regression-model}.

  Now consider a protocol $\protocol\in \pinter$ that can solve any
  regression problem to within accuracy $\delta$, so that
  $\E[\ltwos{\what{\theta} - \theta}^2]\le \delta^2$.  By the previously
  described reduction, the protocol $\protocol$ can also solve the mean
  estimation problem to accuracy $\delta$, in particular via the pair
  $(\design^{(i)},\response^{(i)})$ described in
  expression~\eqref{eqn:construct-gaussian-noise}.  Combined with this
  reduction, the corollary thus follows from
  Theorem~\ref{theorem:interactive-gaussian}.
\end{proof}

%%%%%%%%%%%%%%%%%%%%%%%%%%%%%%%%%%%%%%%%%%%%%%%%%%%%%%%%%%%%%%%%%

\subsubsection{Probit regression}

We now turn to the problem of binary classification, in particular
considering the probit regression model.  As in the previous section,
each of $\nummac$ machines has a fixed design matrix
\mbox{$\design^{(i)}\in \R^{\numobs\times d}$,} where $\design^{(i,
  k)}$ denotes the $k$th row of $\design^{(i)}$.  Machine $i$ receives
$\numobs$ binary responses \mbox{$Z^{(i)} =
  (Z^{(i,1)},\dots,Z^{(i,\numobs)})$}, drawn from the conditional
distribution
\begin{equation}
  \label{eqn:probit-model}
  \P \big( Z^{(i,k)} = 1 \mid \design^{(i,k)}, \theta \big) =
  \probit(\design^{(i,k)}\theta) ~~~~ \mbox{for~some~fixed~} \theta
  \in \Theta = [-1, 1]^d,
\end{equation}
where $\probit(\cdot)$ denotes the standard normal CDF.  The
log-likelihood of the probit model~\eqref{eqn:probit-model} is concave
(cf.~\cite[Exercise 3.54]{Boyd2004}).  Under
condition~\eqref{EqnEigBound} on the design matrices, we have:
\begin{corollary}
  \label{corollary:probit-regression-bound}
  Given the probit model~\eqref{eqn:probit-model}, there
  is a  universal constant $c > 0$ such that
  \begin{subequations}
    \begin{align}
      \label{EqnProbitLower}
      \minimaxinter(\theta, \pclass, \budget_{1:\nummac}) & \geq c
      \frac{d}{\lamupper^2 \nummac \numobs} \; \min \Big \{\lamupper^2
      \nummac \numobs, \; \frac{ \nummac }{(\budget/d + 1) \; \log \nummac
      } \Big\}.
    \end{align}
    Conversely, given a budgets $\budget_i \geq d \log(\nummac
    \numobs)$, there is a universal constant $c'$ such that
    \begin{align}
      \label{EqnProbitUpper}
      \minimaxind(\theta, \pclass, \budget_{1:\nummac}) \leq \frac{c'}{
        \lamlower^2 }\; \frac{d}{\nummac \numobs}.
    \end{align}
  \end{subequations}
\end{corollary}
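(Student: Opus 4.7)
The plan is to mirror the reduction used for Corollary~\ref{corollary:linear-regression-bound}: prove the lower bound~\eqref{EqnProbitLower} by embedding Gaussian mean estimation inside probit regression and invoking Theorem~\ref{theorem:interactive-gaussian}, and establish the upper bound~\eqref{EqnProbitUpper} by applying the quantize-and-average scheme of~\citet{Zhang2012} to each machine's local probit MLE.

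For the lower bound, fix $\theta \in [-1,1]^d$ and suppose each machine $i$ observes a single auxiliary draw $X^{(i)} \sim \normal(\theta, \tau^2 I_{d \times d})$ with $\tau^2 = 1/(\lamupper^2 \numobs)$. Using its fixed design matrix $\design^{(i)}$, machine $i$ forms
\begin{equation*}
  Y^{(i)} \defeq \design^{(i)} X^{(i)} + z^{(i)}, \qquad z^{(i)} \sim \normal\bigl(0,\; I_{\numobs \times \numobs} - \tau^2 \design^{(i)}(\design^{(i)})^\top\bigr) \text{ independent of } X^{(i)},
\end{equation*}
and outputs binary responses $Z^{(i,k)} = \mathbf{1}\{Y^{(i,k)} \geq 0\}$. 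The inequality $\eigenmax(\design^{(i)}(\design^{(i)})^\top) = \eigenmax((\design^{(i)})^\top \design^{(i)}) \leq \lamupper^2 \numobs$ guarantees that the covariance of $z^{(i)}$ is positive semidefinite, and adding the covariances of $\design^{(i)} X^{(i)}$ and $z^{(i)}$ gives $Y^{(i)} \sim \normal(\design^{(i)} \theta, I_{\numobs \times \numobs})$, with \emph{independent} coordinates. Hence $\Pr(Z^{(i,k)} = 1 \mid \theta) = \probit(\design^{(i,k)} \theta)$ independently across $k$, so the joint law of $Z^{(i)}$ matches the probit model~\eqref{eqn:probit-model} exactly. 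Any interactive protocol for probit with budget $\budget$ and accuracy $\delta^2$ therefore yields, via this simulation, an interactive protocol with the same budget that estimates the mean of $\normal(\theta, \lamupper^{-2} I_{d \times d})$ to accuracy $\delta^2$ from $\numobs$ samples per machine; applying Theorem~\ref{theorem:interactive-gaussian} with $\sigma^2 = 1/\lamupper^2$ produces~\eqref{EqnProbitLower}.

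For the upper bound, each machine computes its local probit MLE $\what{\theta}^{(i)} \in [-1,1]^d$---a convex program---quantizes the result coordinatewise to accuracy $1/(\nummac\numobs)$ using $O(d \log(\nummac \numobs))$ bits, and transmits it to the fusion center for averaging. Under~\eqref{EqnEigBound} and the boundedness of $\theta$, the normalized probit log-likelihood is strongly concave with parameter $\gtrsim \lamlower^2$, so standard $M$-estimation arguments yield $\E\|\what{\theta}^{(i)} - \theta\|_2^2 = O(d/(\lamlower^2 \numobs))$; the quantize-and-average analysis of~\citet{Zhang2012} then improves this by the additional factor of $\nummac$, producing~\eqref{EqnProbitUpper}. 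The principal obstacle is verifying correctness of the probit reduction---specifically, that the coordinates of $Y^{(i)}$ are jointly independent so that the thresholded vector $Z^{(i)}$ has exactly the probit joint law. This hinges on the covariance cancellation above and the choice $\tau^2 = 1/(\lamupper^2 \numobs)$ ensuring positive-semidefiniteness of the noise covariance; the remainder of the argument reduces to invocations of previously established results.
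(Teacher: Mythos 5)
Your proposal is correct and follows essentially the same route as the paper: the paper reduces probit to linear regression via the thresholding construction $Z^{(i,k)} = \mathbf{1}\{Y^{(i,k)} \ge 0\}$ and then invokes Corollary~\ref{corollary:linear-regression-bound} (itself a reduction to Gaussian mean estimation), while you simply compose the two reductions into a single step, with the same covariance-cancellation trick and the same appeal to Theorem~\ref{theorem:interactive-gaussian} with $\sigma^2 = 1/\lamupper^2$. The upper bound via \citet{Zhang2012} likewise matches the paper's treatment.
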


%%%%%%%%%%%%%%%%%%%%%%%%%%%%%%%%%%%%%%%%%%%%%%%%%%%%%%%%%%%%%%%%%%%%%%%%%%%%%%

\begin{proof}
  As in Corollary~\ref{corollary:linear-regression-bound}, the upper
  bound~\eqref{EqnProbitUpper} follows from the results
  of~\citet{Zhang2012}.

  Turning to the lower bound~\eqref{EqnProbitLower}, our strategy is to
  show that probit regression is at least as hard as linear regression,
  in particular by demonstrating that any linear regression problem can
  be solved via estimation in a specially constructed probit model.
  Given an arbitrary regression vector $\theta \in \optdomain$, consider
  a linear regresion problem~\eqref{eqn:linear-regression-model} with
  noise variance $\sigma^2 = 1$.  We construct the binary responses for
  our probit regression $(Z^{(i,1)},\dots,Z^{(i,\numobs)})$ by
  \begin{align}
    \label{eqn:construct-probit-response}
    Z^{(i,k)} = \begin{cases} 1 & \mbox{if~} \response^{(i,k)}\geq 0,\\ 0
      & \mbox{otherwise}.
    \end{cases}
  \end{align}
  By construction, we have $\P(Z^{(i,k)}=1 \mid \design^{(i)}, \theta) =
  \probit(\design^{(i,k)}\theta)$ as desired for our
  model~\eqref{eqn:probit-model}.  By inspection, any protocol
  $\protocol\in\pinter$ solving the probit regression problem provides
  an estimator with the same mean-squared error as the original linear
  regression problem via the
  construction~\eqref{eqn:construct-probit-response}.  Consequently, the
  lower bound~\eqref{EqnProbitLower} follows from
  Corollary~\ref{corollary:linear-regression-bound}.
\end{proof}

%%%%%%%%%%%%%%%%%%%%%%%%%%%%%%%%%%%%%%%%%%%%%%%%%%%%%%%%%%%%%%%%%%%%%%%%%%%%

%%%%%%%%%%%%%%%%%%%%%%%%%%%%%%%%%%%%%%%%%%%%%%%%%%%%%%%%%%%%%%

\section{Proofs}
\label{sec:proof-sketch}

We now turn to the proofs of our main results, deferring
more technical results to the appendices.

\subsection{Proof of Proposition~\ref{PropGeneral}}
\label{sec:proof-prop-general}

This result is based on the classical reduction from estimation to
testing (e.g.,~\cite{LeCam1973,Ibragimov1981,Birge1983}).  For a given
$\delta > 0$, introduce the shorthand $M = 2^{\pentropy_\optdomain(2
  \delta)}$ for the $2 \delta$ packing number, and form a collection
of points $\{\theta_1, \ldots, \theta_M \}$ that form a maximal $2
\delta$-packing of $\optdomain$.  Now consider any family of
conditional distributions $\{ \statprob( \cdot \, \mid \packval), \;
\packval \in [M] \}$ such that $\theta \big(\statprob( \cdot \, \mid
\, \packval ) \big) = \theta_\packval$.

Suppose that we sample an index $\packrv$ uniformly at random from
$[M]$, and then draw a sample $X \sim \statprob( \cdot \mid \packrv)$.
The associated testing problem is to determine the underlying
instantiation of the randomly chosen index.  Let $\msg = (\msg_1,
\ldots, \msg_T)$ denote the messages sent by the protocol $\protocol$,
and let $\what{\theta}(\msg)$ denote any estimator of $\theta$ based
on $\msg$.  Any such estimator defines a testing function via
\begin{equation*}
  \what{\packrv} \defn \argmin_{\packval \in \packset}
  \ltwos{\what{\theta}(\msg) - \theta_\packval}.
\end{equation*}
Since $\{ \theta_\packval\}_{\packval \in \packset}$ is a $2
\delta$-packing, we are guaranteed that $\ltwos{\what{\theta}(\msg) -
  \theta_\packval} \ge \delta$ whenever $\what{\packrv} \neq \packrv$,
whence
\begin{align}
  \max_{\packval \in \packset} \E\big[\ltwos{\what{\theta}(\msg) -
      \theta_\packval}^2\big] & \ge \sum_{\packval \in \packset}
  \P(\packrv = \packval) \E\big[\ltwos{\what{\theta}(\msg) -
      \theta_\packrv}^2 \mid \packrv = \packval\big] \nonumber \\ & \ge
  \sum_{\packval \in \packset} \delta^2 \P(\packrv = \packval)
  \P(\what{\packrv} \neq \packrv \mid \packrv = \packval) \, = \,
  \delta^2 \; \P(\what{\packrv} \neq \packrv).
  \label{eqn:centralized-testing-inequality}
\end{align}
It remains to lower bound the testing error $\P(\what{\packrv} \neq
\packrv)$.  Fano's inequality~\cite[Chapter 2]{Cover2006} yields
\begin{equation*}
  \P(\what{\packrv}\neq \packrv) \geq 1 - \frac{I(\packrv;\msg) +
    1}{\pentropy_\optdomain(2\delta)}.
\end{equation*}
Finally, the mutual information can be upper bounded as
\begin{align}
  \label{EqnMIBound}
  I(\packrv; \msg) & \stackrel{(i)}{\leq} \; H( \msg) \;
  \stackrel{(ii)}{\leq} \budget,
\end{align}
where inequality (i) is an immediate consequence of the definition of
mutual information, and inequality (ii) follows from Shannon's source
coding theorem~\cite{Cover2006}.  Combining
inequalities~\eqref{eqn:centralized-testing-inequality}
and~\eqref{EqnMIBound} yields
\begin{equation*}
  % \label{eqn:general-packent-bound}
  \minimaxinter(\theta, \pclass, \budget) \ge \delta^2 \Big \{ 1 -
  \frac{\budget + 1}{\pentropy_\optdomain(2\delta)} \Big \} \quad
  \mbox{for~any~} \delta > 0.
\end{equation*}
Because $1 - \frac{\budget +
  1}{\pentropy_\optdomain(2\delta)} \ge \half$ for any any choice of
$\delta$ such that $2 \delta \le \pentropy_\optdomain^{-1}(2 \budget +
2)$, setting $\delta = \half \pentropy_\optdomain^{-1}(2\budget
+ 2)$ yields the claim.

\subsection{A slight refinement}

We now describe a slight refinement of the classical reduction from
estimation to testing that underlies many of the remaining proofs.  It
is somewhat more general, since we no longer map the original
estimation problem to a strict test, but rather a test that allows
errors. We then leverage some variants of Fano's inequality developed
by a subset of the current authors~\cite{Duchi2013b}.

Defining $\packset = \{-1, +1\}^d$, consider an indexed family of
probability distributions \mbox{$\{\statprob(\cdot \mid
  \packval)\}_{\packval \in \packset} \subset \pclass$.}  Each member
of this family defines the parameter \mbox{$\theta_\packval \defeq
  \theta(\statprob(\cdot \mid \packval) )\in \Theta$.}  In particular,
suppose that we construct the distributions such that
\mbox{$\theta_\packval = \delta \packval$,} where $\delta > 0$ is a
fixed quantity that we control.  For any $\packval \neq \altpackval$,
we are then guaranteed that
\begin{align*}
  \ltwo{\theta_\packval - \theta_\altpackval} & = 2 \delta
  \sqrt{\dham(\packval,\altpackval)} \; \geq \; 2 \delta
\end{align*}
where $\dham(\packval,\altpackval)$ is the Hamming distance between
$\packval,\altpackval\in\packset$.  This lower bound shows that
$\{\theta_\packval \}_{\packval \in \packset}$ is a special type of
$2 \delta$-packing, in that the squared $\ell_2$-distance grows proportionally
to the Hamming distance between the indices $\packval$ and $\altpackval$.

Now suppose that we draw an index $\packrv$ from $\packset$ uniformly
at random, then drawing a sample $X$ from the distribution
$\statprob(\cdot \, \mid \, \packrv)$.  Fixing $t \ge 0$, the following
lemma~\cite{Duchi2013b} reduces the problem of estimating $\theta$ to
finding a point $\packval \in \packset$ within distance $t$ of the
random variable~$\packrv$.
\begin{lemma}
  \label{lemma:estimation-to-testing}
  Let $\packrv$ be uniformly sampled from $\packset$. For any estimator
  $\what{\theta}$ and any $t \ge 0 $, we have
  \begin{align*}
    \sup_{\statprob \in \pclass} \E[\ltwos{\what{\theta} -
        \theta(\statprob)}^2] & \geq \delta^2 \, (\floor{t} + 1) \;
    \inf_{\what{\packval}} \statprob \big(\dham(\what{\packval}, \packrv) > t
    \Big),
  \end{align*}
  where the infimum ranges over all testing functions $\what{\packval}$
  mapping the observations $X$ to $\packset$.
\end{lemma}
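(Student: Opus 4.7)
The plan is to follow the standard reduction from estimation to testing (as in the proof of Proposition~\ref{PropGeneral}), but to exploit the special structure of the packing $\{\theta_\packval\}_{\packval\in\packset}$, namely that the squared $\ell_2$-distance grows linearly in the Hamming distance. This extra geometry converts the usual $0/1$ testing argument into one where a test error of magnitude $t$ in Hamming distance already contributes $\delta^2(\floor{t}+1)$ to the squared risk.

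First I would reduce to a uniform prior: since $\theta_\packval \in \theta(\pclass)$ for each $\packval$, taking a supremum over $\pclass$ dominates averaging $\packrv$ uniformly over $\packset$, giving
\begin{equation*}
\sup_{\statprob\in\pclass}\E\big[\ltwos{\what\theta-\theta(\statprob)}^2\big]
\;\geq\; \E_\packrv\E\big[\ltwos{\what\theta-\theta_\packrv}^2\,\big|\,\packrv\big].
\end{equation*}
Next I would construct from the estimator $\what\theta$ a specific test $\what\packval$ by taking coordinate-wise signs, $\what\packval_j \defeq \sign(\what\theta_j)\in\{-1,+1\}$ (breaking ties arbitrarily). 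This is the closest point of $\packset$ to $\what\theta/\delta$ in each coordinate.

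The central geometric step is the inequality
\begin{equation*}
\ltwos{\what\theta-\theta_\packval}^2 \;\geq\; \delta^2\,\dham(\what\packval,\packval)
\qquad\text{for every }\packval\in\packset.
\end{equation*}
This follows coordinate by coordinate: whenever $\what\packval_j\neq\packval_j$, the sign of $\what\theta_j$ is opposite to that of $\delta\packval_j=\theta_{\packval,j}$, so $(\what\theta_j-\delta\packval_j)^2\geq\delta^2$, and summing over the disagreeing coordinates produces the claimed bound.

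Finally, I would plug this geometric inequality into the risk bound and discretize. Since $\dham$ takes integer values, the event $\{\dham(\what\packval,\packrv)>t\}$ coincides with $\{\dham(\what\packval,\packrv)\geq\floor{t}+1\}$, so
\begin{equation*}
\E\big[\ltwos{\what\theta-\theta_\packrv}^2\big]
\;\geq\; \delta^2\,\E\big[\dham(\what\packval,\packrv)\big]
\;\geq\; \delta^2(\floor{t}+1)\,\statprob\big(\dham(\what\packval,\packrv)>t\big).
\end{equation*}
Because $\what\packval$ is one particular test, the last probability is at least $\inf_{\what\packval'}\statprob(\dham(\what\packval',\packrv)>t)$, yielding the lemma. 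The main obstacle is really only the geometric inequality, and in particular the integer discretization that produces the $\floor{t}+1$ rather than $t$; everything else is a routine application of the estimation-to-testing reduction plus Markov.
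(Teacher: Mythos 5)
Your proof is correct and is essentially the standard argument for this lemma, which the paper itself does not prove but imports from the cited reference~\cite{Duchi2013b}: reduce to the uniform prior, round $\what{\theta}$ coordinate-wise to a test $\what{\packval}$, use the coordinate-wise bound $(\what{\theta}_j - \delta \packval_j)^2 \ge \delta^2$ on disagreeing coordinates to get $\ltwos{\what{\theta} - \theta_\packval}^2 \ge \delta^2 \dham(\what{\packval}, \packval)$, and finish with Markov's inequality together with the integrality of the Hamming distance to produce the factor $\floor{t} + 1$. All steps, including the tie-breaking at $\what{\theta}_j = 0$ and the identification of $\{\dham > t\}$ with $\{\dham \ge \floor{t}+1\}$, are handled correctly.
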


Setting $t = 0$, we recover the standard reduction from estimation to
testing as used in the proof of Proposition~\ref{PropGeneral}.  The
lemma allows for some additional flexibility in that it suffices to
show that, for some $t > 0$ to be chosen, it is difficult to identify
$\packrv$ within a Hamming radius of $t$. The following
variant~\cite{Duchi2013b} of Fano's inequality controls
this type of error probability:
\begin{lemma}
  \label{lemma:fun-fano}
  Let $\packrv \to \sample \to \what{\packrv}$ be a Markov chain,
  where $\packrv$ is uniform on $\packset$. For any $t \ge 0$, we have
  \begin{equation*}
    \P(\dham(\what{\packrv}, \packrv) > t)
    \ge 1 - \frac{I(\packrv; \sample) + \log 2}{
    \log\frac{|\packset|}{\hoodsize{t}}},
  \end{equation*}
  where $\hoodsize{t} \defeq \max\limits_{\packval\in
    \packset}|\{\altpackval\in \packset: \dham(\packval,\altpackval)\leq t
  \}|$ is the size of the largest $t$-neighborhood in $\packset$.
\end{lemma}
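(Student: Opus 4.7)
The approach is the standard Fano reduction, but with the error event relaxed from $\{\hat{V}\neq V\}$ to $\{\dham(\hat{V},V) > t\}$ and the decoding ball inflated accordingly. Concretely, introduce the binary indicator $E \defeq \mathbf{1}\{\dham(\hat{V},V) > t\}$, so that $P_e \defeq \Pr(E=1)$ is exactly the probability we want to lower bound. The goal is to sandwich $H(V\mid \hat{V})$ between an information-theoretic lower bound (via data processing and the uniform prior) and a combinatorial upper bound (via $E$ and the neighborhood size $\hoodsize{t}$).

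\textbf{Upper bound on the conditional entropy.} Because $E$ is a deterministic function of $(V,\hat{V})$, the chain rule gives
\begin{equation*}
H(V\mid \hat{V}) \;=\; H(V,E\mid \hat{V}) \;=\; H(E\mid \hat{V}) + H(V\mid E,\hat{V}).
\end{equation*}
The first term is at most $H(E)\le \log 2$. For the second, split on $E$: conditional on $\{E=0,\hat{V}\}$, the random variable $V$ is supported on $\{v\in\packset : \dham(v,\hat{V})\le t\}$, a set of cardinality at most $\hoodsize{t}$, so $H(V\mid E=0,\hat{V})\le \log \hoodsize{t}$; conditional on $\{E=1,\hat{V}\}$, trivially $H(V\mid E=1,\hat{V})\le \log|\packset|$. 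Averaging gives
\begin{equation*}
H(V\mid \hat{V}) \;\le\; \log 2 + (1-P_e)\log \hoodsize{t} + P_e \log|\packset|.
\end{equation*}

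\textbf{Lower bound and rearrangement.} Since $V$ is uniform on $\packset$, $H(V)=\log|\packset|$, and the data-processing inequality applied to the Markov chain $V\to X\to \hat{V}$ yields $I(V;\hat{V})\le I(V;X)$, so
\begin{equation*}
H(V\mid \hat{V}) \;=\; \log|\packset| - I(V;\hat{V}) \;\ge\; \log|\packset| - I(V;X).
\end{equation*}
Chaining this lower bound with the upper bound above and collecting the $(1-P_e)$ terms gives $(1-P_e)\log\tfrac{|\packset|}{\hoodsize{t}} \le I(V;X)+\log 2$, which is exactly the claimed inequality after rearranging for $P_e$.

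\textbf{Remarks on difficulty.} There is no real obstacle; the argument is a routine generalization of the classical Fano inequality, with the only two substantive choices being the definition of $E$ (so that ``success'' means landing in a Hamming ball rather than at a single point) and the use of $\hoodsize{t}$ as the effective alphabet size after a success. The fact that $\hat{V}\in\packset$ (implicit from the use of this lemma alongside Lemma~\ref{lemma:estimation-to-testing}, where testers map into $\packset$) is what legitimizes replacing the ball around $\hat{V}$ by $\hoodsize{t}$; otherwise one would have to enlarge the neighborhood to account for off-lattice decoders.
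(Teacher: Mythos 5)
Your proof is correct and is essentially the standard argument for this generalized Fano inequality (the paper itself defers the proof to the cited reference \cite{Duchi2013b}, where the same conditioning-on-the-error-indicator decomposition of $H(\packrv \mid \what{\packrv})$ is used). Your closing remark correctly identifies the one hypothesis the argument needs beyond the bare statement, namely that $\what{\packrv}$ takes values in $\packset$ so that the support of $\packrv$ on the success event has size at most $\hoodsize{t}$.
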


We thus have a clear avenue for obtaining lower bounds:
constructing a large packing set $\packset$ with (1) relatively small
$t$-neighborhoods, and (2) such that the mutual information $I(\packrv;
\sample)$ can be controlled.
Given this set-up, the remaining technical challenge is the
development of \emph{quantitative data processing inequalities}, which
allow us to characterize the effect of bit-constraints on the mutual
information $I(\packrv; \sample)$.  In general, these bounds are
significantly tighter than the trivial upper bound used in the proof
of Proposition~\ref{PropGeneral}.  Examples of such inequalities
in the sequel include
Lemmas~\ref{lemma:independent-message-contraction},
~\ref{lemma:gaussian-information-bounds},
and~\ref{lemma:interactive-gaussian-bounds}.

%%%%%%%%%%%%%%%%%%%%%%%%%%%%%%%%%%%%%%%%%%%%%%%%%%%%%%%%%%%%%%%%%

\subsection{Proof of Proposition~\ref{proposition:single-bernoulli-samples}}
\label{appendix:proof-theorem-single-bernoulli}

Given an index $\packval \in \packset$, suppose that each machine $i$
receives a $d$-dimensional sample $\sample^{(i)}$ with coordinates
independently sampled according to
\begin{equation*}
  P(\sample_j = \packval_j \mid \packval) = \frac{1 + \delta
    \packval_j}{2} ~~~ \mbox{and} ~~~ P(\sample_j = -\packval_j \mid
  \packval) = \frac{1 - \delta \packval_j}{2}.
\end{equation*}
Note that by construction, we have $\theta_\packval = \delta \packval
= \E_\packval[\sample]$, as well as
\begin{align}
  \label{EqnBerLRBound}
  \max_{x_j} \frac{P(x_j \mid \packval)}{P(x_j \mid \packval')} & \leq
  \frac{1 + \delta}{1 - \delta} = e^{\llratio} \qquad \mbox{where $\llratio
    \defeq \log \frac{1+ \delta}{1-\delta}$.}
\end{align}
Moreover, note that for any pair $(i,j)$, the sample
$\sample_j^{(i)}$, when conditioned on $\packrv_j$, is independent of
the variables \mbox{$\{\sample_{j'}^{(i)}:j'\neq
  j\}\cup\{\packrv_{j'}:j'\neq j\}$.}

Recalling that $\msg_i$ denotes the message sent by machine $i$,
consider the Markov chain \mbox{$\packrv \rightarrow \sample^{(i)}
  \rightarrow \msg_i$.}  By the usual data processing
inequality~\cite{Cover2006}, we have $I(\packrv; \msg_i) \leq
I(\sample^{(i)}; \msg_i)$.  The following result is a quantitative
form of this statement, showing how the likelihood ratio
bound~\eqref{EqnBerLRBound} causes a contraction in the mutual
information.
\begin{lemma}
  \label{lemma:independent-message-contraction}
  Under the preceding conditions, we have
  \begin{align*}
    I(\packrv; \msg_i) & \leq 2 (e^{2\llratio} - 1)^2 \; I(\sample^{(i)};
    \msg_i).
  \end{align*}
\end{lemma}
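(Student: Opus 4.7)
The strategy is to refine the standard data-processing inequality $I(V;\msg_i) \leq I(X^{(i)};\msg_i)$ into a quantitative (``strong'') data-processing inequality by exploiting both the independence of the coordinates of $V$ under the uniform prior and the per-coordinate likelihood ratio bound~\eqref{EqnBerLRBound}. There are three natural steps.

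\textbf{Step 1 (tensorization over coordinates of $V$).} Since $V$ is uniform on $\{-1,+1\}^d$, the coordinates $V_1,\dots,V_d$ are mutually independent. Applying the chain rule for mutual information together with this independence gives
\begin{equation*}
  I(V;\msg_i) \;=\; \sum_{j=1}^d I(V_j;\msg_i \mid V_{1:j-1}).
\end{equation*}
This reduces the problem to bounding, for each $j$, the contribution of a \emph{single} binary coordinate $V_j$ after conditioning on the previous ones.

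\textbf{Step 2 (per-coordinate strong data processing).} Fix $j$ and condition on $V_{1:j-1} = v_{1:j-1}$. Conditionally, $V_j$ is still uniform on $\{-1,+1\}$, and by the product structure of $P(X^{(i)}\mid V)$ together with the fact that $X_j^{(i)}$ depends only on $V_j$, the likelihood ratio of $X_j^{(i)}$ under the two values of $V_j$ is bounded by $e^{2\alpha}$ (combining the two factors $(1\pm\delta)/(1\mp\delta) \leq e^\alpha$). The heart of the argument is to show that this pointwise likelihood ratio bound forces
\begin{equation*}
  I(V_j;\msg_i \mid V_{1:j-1} = v_{1:j-1})
  \;\leq\; 2\,(e^{2\alpha}-1)^2 \, I(X_j^{(i)};\msg_i \mid V_{1:j-1}= v_{1:j-1}).
\end{equation*}
The intended route is to write $I(V_j;\msg_i\mid \cdot)$ as an expectation of a KL-divergence between two distributions on $\msg_i$ whose ratio is bounded by $e^{2\alpha}$, replace the KL by a $\chi^2$-type quantity using the elementary bound $\log t \leq t - 1$ and the inequality $(\log t)^2 \leq (t-1)^2$ on $[e^{-2\alpha}, e^{2\alpha}]$, and then recognize the resulting $\chi^2$ functional as a variance against a ``score'' that, after a Cauchy--Schwarz/Jensen comparison, is dominated by the corresponding mutual information involving $X_j^{(i)}$ through the randomization kernel. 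Averaging over $v_{1:j-1}$ yields the conditional version.

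\textbf{Step 3 (recombine).} Finally, the conditional independence of $X_1^{(i)},\dots,X_d^{(i)}$ given $V$ (and hence given $V_{1:j-1}$) lets us use the chain rule in the reverse direction to get
\begin{equation*}
  \sum_{j=1}^d I(X_j^{(i)};\msg_i\mid V_{1:j-1}) \;\leq\; I(X^{(i)};\msg_i),
\end{equation*}
which, combined with Steps~1 and~2, gives the claimed bound. The main obstacle will be Step~2: getting the correct constant $2(e^{2\alpha}-1)^2$ requires carefully translating the pointwise likelihood ratio bound on $X_j^{(i)}\mid V_j$ into a bound on the ratio of the induced laws of $\msg_i$ and then comparing the KL and $\chi^2$ divergences on the restricted range of likelihood ratios---this is where the quadratic dependence $(e^{2\alpha}-1)^2$, rather than a linear one, must be extracted.
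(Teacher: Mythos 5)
Your three-step architecture coincides with the paper's: the chain rule over the coordinates of $\packrv$, a per-coordinate quantitative data-processing bound with constant $2(e^{2\llratio}-1)^2$, and recombination via $I(\sample_j;\msg_i\mid \packrv_{1:j-1}) \le I(\sample_j;\msg_i\mid\sample_{1:j-1})$ followed by the chain rule in the other direction. Steps 1 and 3 are correct and are exactly what the paper does. The problem is Step 2, which is where essentially all of the content of the lemma lives, and your sketch of it points toward an inequality that goes the wrong way.

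The danger is in the phrase ``replace the KL by a $\chi^2$-type quantity \dots and then recognize the resulting $\chi^2$ functional as a variance \dots dominated by the corresponding mutual information involving $X_j^{(i)}$.'' If you bound $I(\packrv_j;\msg_i\mid\cdot)$ by a $\chi^2$-type functional of the conditional laws $P_{\msg_i}(\cdot\mid\sample_j)$---that is, by the $\chi^2$-mutual information of the channel $\sample_j\to\msg_i$---you cannot conclude, because $\chi^2$-information \emph{dominates} KL-information rather than being dominated by it, so no bound of the form $I(\packrv_j;\msg_i\mid\cdot)\le c\,I(\sample_j;\msg_i\mid\cdot)$ follows. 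The step that actually works (the content of the paper's Lemma~\ref{lemma:information-chaining}) is a reversal to the \emph{posterior} of $\sample_j$: one shows that the perturbation $|P(\packrv_j=\packval_j\mid\msg_i,\packrv_{1:j-1})-P(\packrv_j=\packval_j\mid\packrv_{1:j-1})|$ is at most $2(e^{2\llratio}-1)$ times $\tvnorm{P_{\sample_j}(\cdot\mid\msg_i,\packrv_{1:j-1})-P_{\sample_j}(\cdot\mid\packrv_{1:j-1})}$ (times the minimum of the two posteriors of $\packrv_j$), by writing the difference as an integral of a function with oscillation $O(e^{2\llratio}-1)$ against the signed measure $dP_{\sample_j}(\cdot\mid\msg_i,\cdot)-dP_{\sample_j}(\cdot\mid\cdot)$; only then does Pinsker's inequality, applied in the direction $\tvnorm{\cdot}^2\le\tfrac12\dkl{\cdot}{\cdot}$ and averaged over $\msg_i$, produce $I(\sample_j;\msg_i\mid\packrv_{1:j-1})$ on the right-hand side. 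Obtaining the needed pointwise control on $P(\packrv_j=\packval_j\mid\sample_j=x,\packrv_{1:j-1})$ also requires a small factorization argument for the conditioning on $\packrv_{1:j-1}$ (trivial here by independence, but it must be checked). Until Step 2 is carried out with the posterior-of-$\sample_j$ plus Pinsker mechanism, rather than with forward likelihood ratios of $\msg_i$ and $\chi^2$, the proof is incomplete.
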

\noindent See Appendix~\ref{SecProofLemIndMesContract} for the proof
of this result. It is similar in spirit to recent results of
\citet[Theorems 1--3]{Duchi2013}, who establish quantitative data
processing inequalities in the context of privacy-preserving data
analysis. Our proof, however, is different, as we have the Markov
chain $\packrv \to \sample \to \msg$, and instead of a likelihood
ratio bound on the channel $\sample \to \msg$ as
in the paper~\cite{Duchi2013}, we place a likelihood
ratio bound on $\packrv \to \sample$.

Next we require a certain tensorization property of the mutual
information, valid in the case of independent protocols:
\begin{lemma}
  \label{LemTensorize}
  When $\msg_i$ is a function only of $\sample^{(i)}$, then
  \begin{align*}
    I(\packrv; \msg_{1:\nummac}) &  \leq  \sum_{i=1}^\nummac I(\packrv; \msg_i).
  \end{align*}
\end{lemma}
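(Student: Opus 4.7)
The plan is to use the chain rule for mutual information together with the conditional independence of the messages given $\packrv$, exploiting the fact that each $\msg_i$ is a deterministic function only of the local sample $\sample^{(i)}$ (and possibly local randomness independent across machines).

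First, I would expand the joint mutual information using the chain rule:
\begin{equation*}
I(\packrv; \msg_{1:\nummac}) = \sum_{i=1}^\nummac I(\packrv; \msg_i \mid \msg_{1:i-1}).
\end{equation*}
Rewriting each summand in terms of entropies, $I(\packrv; \msg_i \mid \msg_{1:i-1}) = H(\msg_i \mid \msg_{1:i-1}) - H(\msg_i \mid \packrv, \msg_{1:i-1})$. The desired inequality will follow if I can show that each term on the right is bounded by $I(\packrv; \msg_i) = H(\msg_i) - H(\msg_i \mid \packrv)$.

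The upper bound on the first entropy is immediate since conditioning reduces entropy: $H(\msg_i \mid \msg_{1:i-1}) \le H(\msg_i)$. The equality of the two conditional entropies $H(\msg_i \mid \packrv, \msg_{1:i-1}) = H(\msg_i \mid \packrv)$ is the key step, and it is where the independent-protocol assumption enters. Because $\sample^{(1)}, \ldots, \sample^{(\nummac)}$ are i.i.d.\ conditional on $\packrv$, and $\msg_j$ depends on the data only through $\sample^{(j)}$ (with at most local, independent randomization), the vectors $\msg_1, \ldots, \msg_\nummac$ are mutually conditionally independent given $\packrv$. Hence $\msg_i \perp \msg_{1:i-1} \mid \packrv$, which gives the claimed entropy equality.

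I do not expect a real obstacle here; the main point to articulate carefully is why conditional independence of the messages given $\packrv$ holds in the independent-protocol setting (in contrast to the interactive case, where $\msg_{t,i}$ may depend on past messages and the tensorization would fail). Combining the two entropy estimates term-by-term yields $I(\packrv; \msg_i \mid \msg_{1:i-1}) \le I(\packrv; \msg_i)$, and summing over $i \in [\nummac]$ gives the lemma.
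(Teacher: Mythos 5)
Your proposal is correct and follows essentially the same route as the paper's proof: expand via the chain rule, bound $H(\msg_i \mid \msg_{1:i-1}) \le H(\msg_i)$ by the fact that conditioning reduces entropy, and use the conditional independence of $\msg_i$ and $\msg_{1:i-1}$ given $\packrv$ to replace $H(\msg_i \mid \packrv, \msg_{1:i-1})$ with $H(\msg_i \mid \packrv)$. Your added remark on why conditional independence holds for independent protocols (and fails for interactive ones) is exactly the right point to emphasize.
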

\noindent See Appendix~\ref{AppTensorize} for a proof of this claim.\\

\noindent We can now complete the proof of the proposition.  Using
Lemma~\ref{lemma:independent-message-contraction}, we have
\begin{align*}
  I(\packrv; \msg_i) & \leq 2 \bigg(e^{2 \log \frac{1 + \delta}{1 -
      \delta}} - 1\bigg)^2 I(\sample^{(i)}; \msg_i) \; = \; 2
  \big(\frac{(1 + \delta)^2}{(1 - \delta)^2} - 1\big)^2 \; \leq \; 80
  \delta^2 I(\sample^{(i)}; \msg_i),
\end{align*}
valid for $\delta \in [0, 1/5]$.  Applying Lemma~\ref{LemTensorize}
yields
\begin{align*}
  I(\packrv; \msg_{1:\nummac}) & \leq \sum_{i = 1}^\nummac I(\packrv;
  \msg_i) \le 80 \delta^2 \sum_{i = 1}^\nummac I(\msg_i;
  \sample^{(i)}).
\end{align*}
The remainder of the proof is broken into two cases, namely $d \geq 10$
and $d < 10$.

\paragraph{Case $d \geq 10$:}
By the definition of mutual information, we have
\begin{align*}
  I(\msg_i; \sample^{(i)}) & \leq \min\{H(\msg_i), H(\sample^{(i)})\} \; \leq \;
  \min \{ \budget_i, d \},
\end{align*}
where the final step follows since $H(\sample^{(i)}) \leq d$ and
$H(\msg_i) \leq \budget_i$, the latter inequality following from
Shannon's source coding theorem~\cite{Cover2006}.  Putting together the pieces,
we have
\begin{align*}
  I(\packrv; \msg_{1:\nummac}) & \leq 80 \delta^2 \sum_{i=1}^\nummac
  \min \{\budget_i, d \}.
\end{align*}
Combining this upper bound on mutual information with
Lemmas~\ref{lemma:estimation-to-testing} and~\ref{lemma:fun-fano}
yields the lower bound
\begin{align*}
  \minimaxind(\theta, \pclass, \budget_{1:\nummac}) & \geq \delta^2
  (\floor{d/6} + 1) \bigg( 1 - \frac{80 \delta^2 \sum_{i=1}^\nummac
    \min\{\budget_i, d\} + \log 2}{d/6}\bigg).
\end{align*}
The choice $\delta^2 = \min\{1/25, d / 960 \sum_{i=1}^\nummac
\min\{\budget_i, d\}\}$ guarantees that the expression inside
parentheses in the previous display is lower bounded by $2/25$, which
completes the proof for $d \ge 10$.

\paragraph{Case $d < 10$:}  In this case, we make use of Le Cam's
method instead of Fano's method.  More precisely, by reducing to a
smaller dimensional problem, we may assume without loss of generality
that $d = 1$, and we set $\packset = \{-1, 1\}$.  Letting $\packrv$ be
uniformly distributed on $\packset$, the Bayes error for binary
hypothesis testing is (e.g.~\cite[Chapter 2]{Yu1997,Tsybakov2009})
\begin{align*}
  \inf_{\what{\packval}} \P(\what{\packval} \neq \packrv) = \half -
  \half \tvnorm{\statprob_1 - \statprob_{-1}}.
\end{align*}
As $\theta_\packval = \delta\packval$ by construction, the
reduction from estimation to testing in
Lemma~\ref{lemma:estimation-to-testing} implies
\begin{align*}
  \inf_{\what{\theta}} \max_{\statprob \in \{\statprob_1,
    \statprob_{-1}\}} \E[\ltwos{\what{\theta} - \theta(\statprob)}^2]
  \ge \delta^2 \bigg(\half - \half \tvnorm{\statprob_1 -
    \statprob_{-1}}\bigg).
\end{align*}
Finally, as we show in Appendix~\ref{AppSmall}, we have
the following consequence of Pinsker's inequality:
\begin{equation}
  \label{eqn:tv-norm-to-info}
  \tvnorm{\statprob_\msg(\cdot \mid \packrv = \packval) -
    \statprob_\msg(\cdot \mid \packrv = \altpackval)}^2 \le 2 I(\msg;
  \packrv).
\end{equation}
Thus
\begin{align}
  \label{eqn:bernoulli-low-dimensional-reduction}
  \minimaxind(\theta, \pclass, \budget_{1:\nummac}) \ge \delta^2
  \bigg(\half - \half \sqrt{2 I(\packrv; \msg_{1:\nummac})}\bigg).
\end{align}
Arguing as in the previous case $(d \ge 10)$, we have the upper bound
$I(\sample^{(i)}; \msg_i) \le \min\{\budget_i, 1\}$, and hence
\begin{equation*}
  \minimaxind(\theta, \pclass, \budget_{1:\nummac}) \ge \delta^2
  \bigg[\half - 7 \bigg(\delta^2 \sum_{i=1}^\nummac \min\{\budget_i,
    1\} \bigg)^\half \bigg].
\end{equation*}
Setting $\delta^2 = \min\big\{\frac{1}{25}, \frac{1}{400 \sum_{i=1}^\nummac
  \min\{\budget_i, 1\}}\big\}$ completes the proof.

%%%%%%%%%%%%%%%%%%%%%%%%%%%%%%%%%%%%%%%%%%%%%%%%%%%%%%%%%%%%%%%%%%%%%%%%%%%%

\subsection{Proof of Theorem~\ref{theorem:gaussian-mean-communication}}
\label{appendix:proof-gaussian-mean-communication}

This proof follows a similar outline to that of
Proposition~\ref{proposition:single-bernoulli-samples}.
We assume that the sample $\sample^{(i)}$ at machine~$i$ contains $\numobs_i$
independent observations from the multivariate normal distribution,
and we will use the fact that $\numobs_i\equiv \numobs$ at
the end of the proof, demonstrating that the proof technique
is sufficiently general to allow for different sized subsets
in each machine.
We represent the
$i$th as a $d \times \numobs_i$ matrix
$\sample^{(i)} \in \R^{d \times \numobs_i}$.  We use $\sample^{(i,k)}$
and $\sample^{(i)}_{j}$ to denote, respectively, the $k$th column
and $j$th row of this matrix.  Throughout this argument, we assume
that $\nummac \ge 5$; otherwise, Proposition~\ref{PropGeneral}
provides a stronger result.

As in the previous section, we consider a testing problem in which the
index $\packrv \in \{-1, +1\}^d$ is drawn uniformly at random.
Our first step is to provide a quantitative data processing inequality analogous
to Lemma~\ref{lemma:independent-message-contraction}, but which
applies in somewhat more general settings. To that end, we abstract a bit
from our current setting, and consider a model such that
for
any $(i,j)$, we assume that given
$\packrv_j$, the $j$th row
row $\sample_j^{(i)}$ is conditionally independent of all other rows
$\{\sample_{j'}^{(i)}:j'\neq j\}$ and all other packing indices
$\{\packrv_{j'}:j'\neq j\}$.  In addition, letting
$\statprob_{\sample_j}$ denote the probability measure of
$\sample_j^{(i)}$, we assume that there exist measurable sets $\measset_j
\subset \range(\sample_j^{(i)})$ such that
\begin{align*}
 \sup_{S \in \sigma(\llset_j)} \frac{\statprob_{\sample_j}(S \mid
   \packrv = \packval)}{ \statprob_{\sample_j}(S \mid \packrv =
   \altpackval)} \le \exp(\llratio),
\end{align*}
Let $E_j$ be a $\{0,1\}$-valued indicator variable for the event
$\sample_j^{(i)} \in \llset_j$ (i.e.\ $E_j = 1$ iff $\sample_j^{(i)} \in
\llset_j$, and we leave the indexing on $i$ implicit). We have the
following bound:
\begin{lemma}
  \label{lemma:independent-message-unbounded}
  Under the conditions stated in the preceding paragraph, we have
  \begin{align*}
    I(\packrv; \msg_i) \le 2\big(e^{4 \llratio} - 1\big)^2
    I(\sample^{(i)}; \msg_i) + \sum_{j=1}^d H(E_j) + \sum_{j = 1}^d P(E_j
    = 0).
  \end{align*}
\end{lemma}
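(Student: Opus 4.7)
The plan is to extend the strong data processing inequality of Lemma~\ref{lemma:independent-message-contraction} to the case where the likelihood-ratio bound holds only on the subsets $\llset_j$, by conditioning on the event indicators $E = (E_1, \ldots, E_d)$ and paying an additive penalty for events where the bound fails. First I would observe that, by the chain rule of mutual information,
\begin{equation*}
I(\packrv; \msg_i) \le I(\packrv; \msg_i, E) = I(\packrv; E) + I(\packrv; \msg_i \mid E) \le \sum_{j=1}^d H(E_j) + I(\packrv; \msg_i \mid E),
\end{equation*}
so the entropy term in the stated bound is accounted for immediately via $H(E) \le \sum_j H(E_j)$.

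The substance of the proof is then to show that the conditional term $I(\packrv; \msg_i \mid E)$ is at most $2(e^{4\llratio}-1)^2\, I(\sample^{(i)}; \msg_i) + \sum_j P(E_j = 0)$. For this, I would decompose via the coordinate chain rule $I(\packrv; \msg_i \mid E) = \sum_{j=1}^d I(\packrv_j; \msg_i \mid E, \packrv_{<j})$ and handle each term by splitting on the value of $E_j$. On the subevent $\{E_j = 1\}$, the conditional distribution of $\sample_j^{(i)}$ across the two values of $\packrv_j$ has likelihood ratio bounded by $e^{2\llratio}$; the extra factor of $2$ in the exponent arises because conditioning divides out the normalizer $P(E_j = 1 \mid \packrv_j)$, and this normalizer varies across $\packrv_j$ by a factor bounded by $e^{\pm \llratio}$ (applying the event-wise bound to $S = \llset_j$ and its complement in $\llset_j$). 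I would then re-run the proof of Lemma~\ref{lemma:independent-message-contraction} verbatim inside this conditional measure; this is exactly what produces the contraction factor $2(e^{4\llratio}-1)^2$ in place of the original $2(e^{2\llratio}-1)^2$. On the complementary subevent $\{E_j = 0\}$, the trivial bound $I(\packrv_j; \msg_i \mid \cdots, E_j = 0) \le H(\packrv_j) \le 1$ contributes at most $P(E_j = 0)$ after weighting.

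Finally, I would combine terms and use the fact that $E_j$ is a deterministic function of $\sample_j^{(i)}$ to write $P(E_j = 1)\, I(\sample_j^{(i)}; \msg_i \mid E_j = 1) \le I(\sample_j^{(i)}, E_j; \msg_i) = I(\sample_j^{(i)}; \msg_i)$; then the marginal independence of the rows $\{\sample_j^{(i)}\}_{j=1}^d$ (which follows from the stated conditional independence combined with the independence of the $\packrv_j$'s in the Gaussian application) delivers $\sum_j I(\sample_j^{(i)}; \msg_i) \le I(\sample^{(i)}; \msg_i)$ via the standard chain-rule subadditivity for independent sources. The main obstacle I expect is the per-coordinate bookkeeping under mixed conditioning events where some $E_{j'} = 0$ and others $E_{j'} = 1$: one must check that the strong data processing argument applied conditionally still yields the announced constant $2(e^{4\llratio}-1)^2$ without additional factors, and that the error terms decouple cleanly to the sum $\sum_j P(E_j = 0)$ rather than a cruder union-bound expression like $d\,P(\bigcup_j \{E_j = 0\})$.
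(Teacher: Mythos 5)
Your proposal is correct and follows essentially the same route as the paper: condition on the indicator variables to absorb $\sum_j H(E_j)$, note that restricting to $\{E_j=1\}$ degrades the likelihood-ratio bound from $e^{\llratio}$ to $e^{2\llratio}$ (via the normalizers $P(\sample_j\in\llset_j\mid\packrv)$), re-run the argument of Lemma~\ref{lemma:independent-message-contraction} through Lemma~\ref{lemma:information-chaining} to obtain the factor $2(e^{4\llratio}-1)^2$, and bound the $\{E_j=0\}$ branch by $P(E_j=0)\log 2$. The only organizational difference is that the paper introduces $E_j$ one coordinate at a time inside the chain-rule decomposition (bounding $I(\packrv_j;\msg\mid\packrv_{1:j-1})\le I(\packrv_j;\msg\mid E_j,\packrv_{1:j-1})+H(E_j)$), which sidesteps the mixed-conditioning bookkeeping over the full vector $E$ that you flag as your main obstacle.
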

\noindent See Appendix~\ref{SecProofLemIndMesUnb} for the proof of
this claim. \\

Our next step is to bound the terms involving the indicator variables
$E_j$.  Fixing some $\delta > 0$, for each $\packval \in \{-1, 1\}^d$
define $\theta_\packval = \delta \packval$, and conditional on
$\packrv = \packval \in \{-1, 1\}^d$, let $\sample^{(i, k)}$, $k = 1,
\ldots, \numobs_i$, be drawn i.i.d.\ from a $\normal(\theta_\packval,
\sigma^2 I_{d \times d})$ distribution.  The following lemma applies
to any pair of non-negative numbers $(a, \delta)$ such that
\begin{align}
  \label{EqnHourglass}
  \max_{i \in [\nummac]} \frac{\sqrt{\numobs_i} a \delta}{\sigma^2}
  \le \frac{1}{4} \quad \mbox{and} \quad a \ge \delta \max_{i \in [\nummac]}
  \sqrt{\numobs_i}.
\end{align}
It also involves the binary entropy function $h_2(p) \defeq -p
\log_2(p) - (1-p) \log_2(1-p)$.
\begin{lemma}
  \label{lemma:gaussian-information-bounds}
  For any pair $(a, \delta)$ satisfying condition~\eqref{EqnHourglass},
  we have
  \begin{subequations}
    \begin{align}
      \label{eqn:markov-bound-application}
      I(\packrv; \msg_i) & \leq \frac{d\numobs_i\delta^2}{\sigma^2}, ~~~~~~
      \mbox{and} \\
      \label{eqn:transition-bound-application}
      I(\packrv; \msg_i) & \leq 128 \frac{\delta^2 a^2}{\sigma^4} \numobs_i
      H(\msg_i) + d \, h_2(p_i^*) +
      d \, p_i^*,
    \end{align}
  \end{subequations}
  where $p_i^* \defeq \min\left\{2 \exp\big(-\frac{(a - \sqrt{\numobs_i}
    \delta)^2}{ 2 \sigma^2} \big), \half \right\}$.
\end{lemma}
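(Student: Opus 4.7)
The plan is to handle the two bounds by separate routes: part~(a) via the standard data processing inequality together with direct tensorization over the $d$ coordinates of $\packrv$, and part~(b) by invoking the quantitative data processing inequality of Lemma~\ref{lemma:independent-message-unbounded} with a suitably chosen truncation set. Both arguments exploit the key structural fact that $\packrv$ has independent coordinates (being uniform on $\{-1,+1\}^d$) and that, conditionally on $\packrv$, the $j$th row $\sample_j^{(i)}$ of the sample matrix depends only on $\packrv_j$; this also makes the rows of $\sample^{(i)}$ marginally independent.

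For part~(a), I apply the data processing inequality to the Markov chain $\packrv \to \sample^{(i)} \to \msg_i$ and then tensorize:
\begin{equation*}
I(\packrv;\msg_i) \le I(\packrv;\sample^{(i)}) = \sum_{j=1}^d I(\packrv_j;\sample_j^{(i)}).
\end{equation*}
With $\packrv_j$ uniform on $\{-1,+1\}$ and the two conditional laws $P_\pm$ the product Gaussians $\normal(\pm\delta,\sigma^2)^{\otimes \numobs_i}$, the standard convexity bound for mixture information gives $I(\packrv_j;\sample_j^{(i)}) \le \tfrac{1}{2}\dkl{P_+}{P_-} = \numobs_i\delta^2/\sigma^2$, and summing over $j$ yields~\eqref{eqn:markov-bound-application}.

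For part~(b), I invoke Lemma~\ref{lemma:independent-message-unbounded}, which requires specifying a truncation set $\llset_j$ and a uniform likelihood ratio bound $\llratio$. Since $\sum_k X_{j,k}^{(i)}$ is the sufficient statistic for $\packrv_j$, the natural choice is
\begin{equation*}
\llset_j = \Bigl\{ x \in \R^{\numobs_i} : \Bigl|\sum_{k=1}^{\numobs_i} x_k\Bigr| \le a\sqrt{\numobs_i}\Bigr\}.
\end{equation*}
A direct Gaussian calculation shows that on $\llset_j$,
\begin{equation*}
\frac{p(x \mid \packrv_j = +1)}{p(x \mid \packrv_j = -1)} = \exp\!\Bigl(\frac{2\delta \sum_k x_k}{\sigma^2}\Bigr) \le \exp\!\Bigl(\frac{2\delta a\sqrt{\numobs_i}}{\sigma^2}\Bigr),
\end{equation*}
so one may take $\llratio = 2\delta a\sqrt{\numobs_i}/\sigma^2$, which is at most an absolute constant by hypothesis~\eqref{EqnHourglass}. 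For the tail, $\tfrac{1}{\sqrt{\numobs_i}}\sum_k X_{j,k}^{(i)} \sim \normal(\sqrt{\numobs_i}\delta\packrv_j,\sigma^2)$, so the standard Gaussian tail bound (valid since $a \ge \sqrt{\numobs_i}\delta$) yields $P(E_j = 0) \le 2\exp(-(a - \sqrt{\numobs_i}\delta)^2/(2\sigma^2))$. The $\min\{\cdot,1/2\}$ clamp in the definition of $p_i^*$ then ensures both $\sum_j P(E_j = 0) \le d p_i^*$ and $\sum_j H(E_j) \le d h_2(p_i^*)$. Plugging these into Lemma~\ref{lemma:independent-message-unbounded}, using $I(\sample^{(i)};\msg_i) \le H(\msg_i)$, and bounding $2(e^{4\llratio}-1)^2$ by a numerical multiple of $\llratio^2 = 4\delta^2 a^2\numobs_i/\sigma^4$ delivers~\eqref{eqn:transition-bound-application}.

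I expect the main subtlety to be the choice of truncation radius: it must be small enough that $\llratio$ is bounded by an absolute constant while leaving the tail probability in the sub-Gaussian form defining $p_i^*$, and condition~\eqref{EqnHourglass} is exactly what makes both simultaneously possible. The regime where the clamp $p_i^* = 1/2$ is active needs no separate argument, since there the right-hand side of~\eqref{eqn:transition-bound-application} already exceeds $d \ge H(\packrv) \ge I(\packrv;\msg_i)$, so the bound holds trivially.
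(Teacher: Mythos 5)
Your proposal is correct and follows essentially the same route as the paper's proof: part (a) via the data processing inequality, tensorization, and the convexity bound on the KL divergence (you tensorize over the $d$ independent coordinate pairs where the paper tensorizes over the $\numobs_i$ observations, which is immaterial), and part (b) via Lemma~\ref{lemma:independent-message-unbounded} with the truncation sets $\llset_j = \{x : |\sum_k x_k| \le a\sqrt{\numobs_i}\}$, the Gaussian tail bound for $P(E_j = 0)$, and the trivial bound $I(\packrv;\msg_i) \le d\log 2$ when the clamp $p_i^* = \tfrac12$ is active. One remark: your log-likelihood-ratio exponent $2\delta a \sqrt{\numobs_i}/\sigma^2$ is the correct one (the paper's corresponding display drops a factor of $2$), so carried through literally your argument produces a universal constant somewhat larger than the stated $128$; that discrepancy originates in the paper's arithmetic, not in your approach.
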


With the bounds~\eqref{eqn:markov-bound-application}
and~\eqref{eqn:transition-bound-application} on the mutual information
$I(\msg_i; \packrv)$, we may now divide our proof into two cases: when
$d < 10$ and $d \geq 10$.

%%%%%%%%%%%%%%%%%%%%%%%%%%%%%%%%%%%%%%%%%%%%%%%%%%%%%%%%%%%%%%%%%%%%%%

\paragraph{Case $d \geq 10$:}
In this case, we require an additional auxiliary result, which we prove
via Lemma~\ref{lemma:gaussian-information-bounds}.
(See Appendix~\ref{AppLemIceAxe} for the proof of this claim.)
\begin{lemma}
  \label{LemIceAxe}
  For all $\delta \in \big[0, \frac{\sigma}{16} (\log \nummac \,
      \max_i \numobs_i )^{-\half} \big]$, we
  have
  \begin{equation}
    \label{eqn:theorem-1-mutual-information-final-bound}
    \sum_{i=1}^\nummac I(\packrv; \msg_i)
    \le \delta^2 \sum_{i=1}^\nummac \frac{\numobs_i}{\sigma^2}
    \min\big\{128 \cdot 16 \log \nummac \cdot H(\msg_i), d
    \big\}
    + d \bigg(\frac{2}{49} + 2 \cdot 10^{-5}\bigg).
  \end{equation}
\end{lemma}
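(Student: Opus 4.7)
The plan is to apply Lemma~\ref{lemma:gaussian-information-bounds} with the single choice $a = 4\sigma\sqrt{\log \nummac}$. This value is engineered so that $128\,a^2/\sigma^2 = 128\cdot 16\log \nummac$, exactly matching the leading coefficient inside the min in~\eqref{eqn:theorem-1-mutual-information-final-bound}. My first task is to verify that $(a,\delta)$ meets the side conditions~\eqref{EqnHourglass}: the hypothesis $\delta \le \sigma/(16\sqrt{\log \nummac \cdot \max_i \numobs_i})$ yields $\max_i \sqrt{\numobs_i}\,a\delta/\sigma^2 \le 4\sqrt{\log \nummac}\cdot\frac{1}{16\sqrt{\log \nummac}} = \frac{1}{4}$, and the inequality $a \ge \delta \max_i \sqrt{\numobs_i}$ is immediate because $\nummac \ge 5$ keeps $\log \nummac$ bounded away from zero.

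With $a$ fixed, the key structural step is to merge the two bounds~\eqref{eqn:markov-bound-application} and~\eqref{eqn:transition-bound-application} via the elementary inequality $\min\{x,y+z\} \le \min\{x,y\} + z$ for $z \ge 0$. This gives, for each index $i$,
\[
I(\packrv;\msg_i) \le \frac{\delta^2\numobs_i}{\sigma^2}\,\min\!\big\{d,\;128 \cdot 16\log\nummac\cdot H(\msg_i)\big\} + d\,h_2(p_i^*) + d\,p_i^*,
\]
and summing over $i \in [\nummac]$ recovers exactly the first term on the right-hand side of~\eqref{eqn:theorem-1-mutual-information-final-bound}, leaving a residual $d\sum_{i=1}^{\nummac}\bigl(h_2(p_i^*) + p_i^*\bigr)$ that must be bounded by $d\bigl(2/49 + 2\cdot 10^{-5}\bigr)$.

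To control the residual, I would show that $p_i^*$ is uniformly tiny. Expanding $(a-\sqrt{\numobs_i}\delta)^2 \ge a^2 - 2a\sqrt{\numobs_i}\delta$ and invoking the hypothesis on $\delta$ gives $(a-\sqrt{\numobs_i}\delta)^2/(2\sigma^2) \ge 8\log\nummac - 1/4$, hence $p_i^* \le 2e^{1/4}\nummac^{-8}$ uniformly in $i$. Consequently $\sum_i p_i^* \le 2e^{1/4}\nummac^{-7}$, and an entropy estimate such as $h_2(p) \le 2\sqrt{p}$ on $[0,1/2]$ gives $\sum_i h_2(p_i^*) \le 2\sqrt{2e^{1/4}}\,\nummac^{-3}$. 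Both quantities are decreasing in $\nummac$, so it suffices to check the worst case $\nummac = 5$.

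The main obstacle is the numerical bookkeeping in this final step: the slack $\sigma^2/2$ lost in replacing $(a-\sqrt{\numobs_i}\delta)^2$ by $a^2-2a\sqrt{\numobs_i}\delta$ must remain small enough that the exponent $8\log\nummac - 1/4$ still forces $\sum_i p_i^*$ under $2\cdot 10^{-5}$ at $\nummac = 5$, and a sharp enough bound on $h_2(p)$ near $0$ is needed to squeeze $\sum_i h_2(p_i^*)$ beneath $2/49$. Everything else is a purely structural application of Lemma~\ref{lemma:gaussian-information-bounds} combined with the elementary min-inequality.
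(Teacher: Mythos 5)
Your proposal follows essentially the same route as the paper's proof: the same choice $a = 4\sigma\sqrt{\log \nummac}$, the same verification of the side conditions~\eqref{EqnHourglass}, the same merging of the two bounds of Lemma~\ref{lemma:gaussian-information-bounds}, and the same Gaussian-tail plus $h_2(q) \lesssim \sqrt{q}$ estimates for the residual terms. The numerical tightness you flag at $\nummac = 5$ (your exponent $8\log\nummac - 1/4$ gives $\sum_i p_i^* \approx 3\cdot 10^{-5}$ rather than $2\cdot 10^{-5}$) is present in the paper's own accounting as well and is immaterial, since the additive term only needs to be a small constant multiple of $d$ for the downstream Fano argument.
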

\noindent 
Combining the upper bound~\eqref{eqn:theorem-1-mutual-information-final-bound}
on the mutual information with the minimax lower
bounds in Lemmas~\ref{lemma:estimation-to-testing} and~\ref{lemma:fun-fano},
and noting that $6 (2/49 + 2 \cdot 10^{-5}) + 6 \log 2 / d \le 2/3$ when $d
\geq 10$ yields the following minimax bound:
\begin{align}
  \label{eqn:non-interactive-min-bound}
  \minimaxind(\theta, \mc{P}, \budget_{1:\nummac}) \ge \delta^2
  \big(\floor{d/6} + 1\big) \bigg(\frac{1}{3} - \frac{6 \delta^2
    \sum_{i=1}^\nummac \numobs_i \min \{128 \cdot 16 \log \nummac \cdot
    H(\msg_i), d\}}{d \sigma^2 }\bigg).
\end{align}

Using this result, we now complete the proof of the theorem By
Shannon's source coding theorem, we have $H(\msg_i) \le \budget_i$,
whence the minimax bound~\eqref{eqn:non-interactive-min-bound} becomes
\begin{equation*}
  \delta^2 \big(\floor{d/6} + 1\big) \bigg(\frac{1}{3} - \frac{6
    \delta^2 \sum_{i=1}^\nummac \numobs_i \min \{128 \cdot 16 \budget_i
    \log \nummac , d\}}{d \sigma^2 }\bigg).
\end{equation*}
In particular, if we choose
\begin{equation}
  \label{eqn:theorem-1-choice-of-delta}
  \delta^2 = \min\bigg\{1, \frac{\sigma^2}{16^2 \max_i \numobs_i \log
    \nummac}, \frac{d \sigma^2}{36 \sum_{i=1}^\nummac \numobs_i \min
    \{128 \cdot 16 \budget_i \log \nummac, d\}}\bigg\},
\end{equation}
we obtain
\begin{equation*}
  \frac{1}{3} - \delta^2 \frac{6 \sum_{i=1}^\nummac \numobs_i
    \min\{128 \cdot 16 \budget_i \log \nummac, d\}}{d \sigma^2} \ge
  \frac{1}{6},
\end{equation*}
which yields the minimax lower bound
\begin{equation*}
  \minimaxind(\theta, \mc{P}, \budget_{1:\nummac}) \ge
  \frac{1}{6}\big(\floor{d/6} + 1\big) \min\bigg\{1,
  \frac{\sigma^2}{16^2 \max_i \numobs_i \log \nummac}, \frac{d
    \sigma^2}{36 \sum_{i=1}^\nummac \numobs_i \min \{128 \cdot 16
    \budget_i \log \nummac, d\}}\bigg\}.
\end{equation*}
To obtain inequality~\eqref{eqn:gaussian-mean-communication}, we
simplify by assuming that $\numobs_i \equiv \numobs$ for all $i$ and
perform simple algebraic manipulations, noting that the minimax lower
bound $d \sigma^2 / (\numobs \nummac)$ holds independently of any
communication budget.

\paragraph{Case $d <  10$:}
As in the proof of Proposition~\ref{proposition:single-bernoulli-samples}, we
cover this case by reducing to dimension $d = 1$ and applying Le Cam's
method, in particular via the lower
bound~\eqref{eqn:bernoulli-low-dimensional-reduction}.
Substituting in the $\delta^2$ assignment~\eqref{eqn:theorem-1-choice-of-delta}
and the relation $H(\msg_i) \le \budget_i$ into Lemmas~\ref{LemTensorize}
and~\ref{LemIceAxe}, we find that

%% Combining the upper
%% bound~\eqref{eqn:theorem-1-mutual-information-final-bound} from
%% Lemma~\eqref{LemIceAxe} with the tensorization of mutual information from
%% Lemma~\ref{LemTensorize}, we find that
%% \begin{align*}
%%   I(\packrv; \msg_{1:\nummac})
%%   & \leq \delta^2 \sum_{i=1}^\nummac \frac{\numobs_i}{\sigma^2}
%%   \min\big\{128 \cdot 16 \log \nummac \cdot H(\msg_i), d
%%   \big\}
%%   + d \big(\frac{2}{49} + 2 \cdot 10^{-5}\big).
%% \end{align*}
%% Substituting in the $\delta^2$
%% assignment~\eqref{eqn:theorem-1-choice-of-delta} and the relation $H(\msg_i)
%% \le \budget_i$, we have
\begin{align*}
  I(\packrv; \msg_{1:\nummac})
  \le \sum_{i=1}^\nummac I(\packrv; \msg_i)
  \leq \frac{1}{36} + \frac{2}{49} + 2 \cdot 10^{-5} < \frac{1}{8}.
\end{align*}
Applying Le Cam's method to this upper bound implies the lower bound
 $\minimaxind(\theta, \mc{P}, \budget_{1:\nummac}) \ge \delta^2/4$,
which completes the proof.

%%%%%%%%%%%%%%%%%%%%%%%%%%%%%%%%%%%%%%%%%%%%%%%%%%%%%%%%%%%%%%%%%%%%%%%%
%%%%%%%%%%%%%%%%%%%%%%%%%%%%%%%%%%%%%%%%%%%%%%%%%%%%%%%%%%%%%%%%%%%%%%%%

\subsection{Proof of Proposition~\ref{corollary:uniform-upper-bound}}
\label{sec:proof-uniform}

Proposition~\ref{corollary:uniform-upper-bound} involves both a lower
and upper bound.  We prove the upper bound by exhibiting a specific
interactive protocol $\protocol^*$, and the lower bound via an
application of Proposition~\ref{PropGeneral}.

\paragraph{Proof of lower bound:}

Applying Proposition~\ref{PropGeneral} requires a lower bound on the
packing entropy of $\optdomain = [-1, 1]^d$.  By a standard volume
argument~\cite{Ball1997}, the $2 \delta$-packing entropy has lower bound
\begin{align*}
  \pentropy_\optdomain(2\delta) & \geq \log_2
  \frac{\mbox{Volume}(\Theta)}{ \mbox{Volume} (\{x\in \R^d: \ltwos{x}
    \leq 2 \delta \} )} \geq d \log \Big(\frac{1}{2\delta} \Big).
\end{align*}
Inverting the relation $\budget = \pentropy_\optdomain(\delta) =
\pentropy_\optdomain(1 / (\nummac \numobs))$ yields the lower bound.

\paragraph{Proof of upper bound:}
Consider the following communication protocol $\protocol^* \in
\pinter$:
\begin{enumerate}[(i)]
\item Each machine $i\in[\nummac]$ computes its local minimum $a_j^{(i)} =
  \min\{X_j^{(i,k)}: k\in[\numobs]\}$ for each coordinate $j \in [d]$.
\item Machine $1$ broadcasts the vector $a^{(1)}$, where each of its
  components is quantized to accuracy $(\nummac \numobs)^{-2}$ in $[-2, 2]$,
  rounding down, using $2 d \log_2(2 \nummac \numobs)$ bits. Upon receiving
  the broadcast, all machines initialize global minimum variables $s_j
  \leftarrow a^{(1)}_j$ for $j = 1, \ldots, d$.
\item In the order $i=2,3,\dots,\nummac$, machine $i$ performs
  the following operations:
  \begin{enumerate}[(i)]
  \item Find all indices $j$ such that $a^{(i)}_j < s_j$, calling this set
    $J_i$. For each index $j \in J_i$, machine $i$ updates $s_j \leftarrow
    a^{(i)}_j$, and then broadcasts the list of indices $J_i$ (which
    requires $|J_i| \ceil{\log_2 d}$ bits) and the associated values
    $s_j$, using a total of $|J_i| \ceil{\log_2 d} + 2 |J_i| \log(2
    \nummac \numobs)$ bits.
  \item All other machines update their local vectors $s$ after receiving
    machine $i$'s update.
  \end{enumerate}
\item One machine outputs $\what{\theta} = s+1$.
\end{enumerate}

Using the protocol $\protocol^*$ above, it is clear that for each
$j \in [d]$ we have computed the global minimum
\begin{equation*}
  s_j^* = \min\big\{\sample^{(i,k)}_j \mid i \in [\nummac], k \in [\numobs]
  \big\}
\end{equation*}
to within accuracy $1 / (\nummac \numobs)^2$ (because of
quantization).  As a consequence, classical convergence analyses
(e.g.~\cite{Lehmann1998}) yield that the estimator $\what{\theta} = s
+ 1$ achieves the minimax optimal convergence rate
$\E[\ltwos{\what{\theta}-\theta}^2] \le c \frac{d}{(\nummac
  \numobs)^2}$, where $c > 0$ is a numerical constant.

It remains to understand the communication complexity of the protocol
$\protocol^*$. To do so, we study steps 2 and 3.
In Step 2, machine $1$ sends a $2 d \log_2(2 \nummac\numobs)$-bit message
as $\msg_1$. In Step 3, machine $i$ sends
$|J_i| (\ceil{\log_2 d} + 2 \log_2(2 m n))$ bits, that is, at most
\begin{equation*}
  \sum_{j=1}^d \indic{a^{(i)}_j < \min\{a^{(1)}_j, \ldots, a^{(i-1)}_j\}}
  (\ceil{\log_2 d} + 2 \log_2(2 \nummac \numobs))
\end{equation*}
bits, as no message is sent for index $j$ if $a^{(i)}_j \ge
\min\{a^{(1)}_j, \ldots, a^{(i-1)}_j\}$.
By inspection, this event
happens with probability bounded by $1/i$, so we find that
the expected length of message $\msg_i$ is
\begin{equation*}
  \E[\length_i] \leq \frac{d(\ceil{\log_2 d} + 2\log_2(2\nummac\numobs))}{i}.
\end{equation*}
Putting all pieces together, we obtain that
\begin{align*}
  \E[\length] = \sum_{i=1}^\nummac \E[\length_i]
  & \leq 2d\log(2\nummac\numobs)
  + \sum_{i=2}^\nummac
  \frac{d(\ceil{\log_2 d} + 2\log_2(2\nummac\numobs))}{i} \\
  & \le
  d\big[2 \log_2(2 \nummac \numobs)
  + \log(\nummac)(\ceil{\log d} + 2 \log_2(2 \nummac\numobs))\big].
\end{align*}

%%%%%%%%%%%%%%%%%%%%%%%%%%%%%%%%%%%%%%%%%%%%%%%%%%%%%%%%%%%%%%%%%

\subsection{Proof of Theorem~\ref{theorem:interactive-gaussian}}
\label{sec:proof-interactive-gaussian}

As in the proof of Theorem~\ref{theorem:gaussian-mean-communication},
we choose $\packrv \in \{-1,1\}^d$ uniformly at random, and for some
$\delta > 0$ to be chosen, we define the parameter vector $\theta
\defeq \delta \packrv$.  Suppose that machine $i$ draws a sample
$\sample^{(i)} \in \R^{d \times \numobs}$ of size $\numobs$
i.i.d.\ according to a $\normal (\theta, \sigma^2 I_{d\times d})$
distribution.  We denote the full sample---across all machines---along
dimension $j$ by $\sample_j$. In addition, for each $j \in [d]$, we
let $\packrv_{\setminus j}$ denote the coordinates of $\packrv \in
\{-1, 1\}^d$ except the $j$th coordinate.

Although the local samples are independent, since we now allow for
interactive protocols, the messages can be dependent: the sequence of
random variables $\msg = (\msg_1,\dots,\msg_\nummsg)$ is generated in
such a way that the distribution of $\msg_t$ is
$(X^{(i_t)},\msg_{1:t-1})$-measurable, where
$i_t\in\{1,\dots,\nummac\}$ is the machine index upon which $\msg_t$
is based (i.e.\ the machine sending message $\msg_t$). We assume
without loss of generality that the sequence $\{i_1, i_2, \ldots,\}$
is fixed in advance: if the choice of index $i_t$ is not fixed but
chosen based on $\msg_{1:t-1}$ and $\sample$, we simply say there
exists a default value (say no communication or $\msg_t = \perp$) that
indicates ``nothing'' and has no associated bit cost.

To prove our result, we require an analogue of
Lemma~\ref{lemma:independent-message-unbounded} (cf.\ the proof of
Theorem~\ref{theorem:gaussian-mean-communication}).  Assuming temporarily that
$d = 1$, we prove our analogue for one-dimensional interactive protocols, and
in the sequel, we show how it is possible to we reduce multi-dimension
problems to this statement.  As in the proof of
Theorem~\ref{theorem:gaussian-mean-communication}, we abstract a bit from our
specific setting, instead assuming a likelihood ratio constraint, and provide
a data processing inequality for our setting.  Let $\packrv$ be a Bernoulli
variable uniformly distributed on $\{-1, 1\}$, and let
$\statprob_{\sample^{(i)}}$ denote the probability measure of the $i$th sample
$\sample^{(i)}\in \R^{\numobs}$.  Suppose there is a (measurable) set $\llset$
such that for any $\packval, \packval' \in \{-1, 1\}$, we have
\begin{align}
  \label{eqn:likelihood-ratio-subset-bound}
  \sup_{S \in \sigma(\llset)} \frac{\statprob_{\sample^{(i)}}(S \mid
    \packval)}{ \statprob_{\sample^{(i)}}(S \mid \packval')} & \leq
  e^{\llratio}.
\end{align}
Finally, let $E$ be a $\{0,1\}$-valued indicator variable for the
event $\cap_{i=1}^\nummac \{ \sample^{(i)}\in \llset \}$.

\begin{lemma}
  \label{lemma:information-contraction-in-subset}
  Under the previously stated conditions, we have
  \begin{align*}
    I(\packrv; \msg) & \leq 2\big(e^{4 \llratio} - 1\big)^2 I(\sample;
    \msg) + H(E) + P(E = 0).
  \end{align*}
\end{lemma}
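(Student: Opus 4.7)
The plan is to reduce the interactive case to something resembling the independent case of Lemma~\ref{lemma:independent-message-unbounded} by first conditioning on the indicator $E$ and then applying the chain rule for mutual information along the rounds of the protocol. I would begin with the decomposition
\begin{equation*}
  I(\packrv; \msg) \;\le\; I(\packrv; \msg, E) \;=\; I(\packrv; E) + I(\packrv; \msg \mid E)
  \;\le\; H(E) + P(E=0)\, I(\packrv; \msg \mid E=0) + P(E=1)\, I(\packrv; \msg \mid E=1).
\end{equation*}
Since $\packrv \in \{-1,+1\}$, the conditional mutual information $I(\packrv;\msg \mid E=0)$ is at most $H(\packrv) \le \log 2 \le 1$, which furnishes the $P(E=0)$ term in the statement.

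For the remaining piece $I(\packrv; \msg \mid E=1)$, I would use the interactive chain rule
\begin{equation*}
  I(\packrv; \msg \mid E=1) \;=\; \sum_{t} I(\packrv; \msg_t \mid \msg_{1:t-1}, E=1).
\end{equation*}
The key structural observation is that, by assumption, $\msg_t$ is a measurable function of $(\sample^{(i_t)}, \msg_{1:t-1})$ together with protocol randomness, and the samples $\sample^{(1)}, \ldots, \sample^{(\nummac)}$ are conditionally independent given $\packrv$. Conditional on $\msg_{1:t-1}$, we therefore retain a Markov chain $\packrv \to \sample^{(i_t)} \to \msg_t$. Moreover, since $E$ factorizes as $\prod_i \indic{\sample^{(i)} \in \llset}$ and each factor depends on a single sample, conditioning on $E=1$ only reshapes the marginal law of $\sample^{(i_t)}$ by restricting it to $\llset$; the original likelihood ratio bound~\eqref{eqn:likelihood-ratio-subset-bound} translates into a restricted bound, at worst doubling in the exponent, which is where the factor $e^{4\llratio}$ in the statement accommodates the extra conditioning.

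With the Markov structure and restricted likelihood ratio bound in hand, I would invoke the same pointwise quantitative data processing inequality that drives Lemma~\ref{lemma:independent-message-unbounded}, applied round by round conditional on $(\msg_{1:t-1}, E=1)$, to obtain
\begin{equation*}
  I(\packrv; \msg_t \mid \msg_{1:t-1}, E=1)
  \;\le\; 2\bigl(e^{4\llratio}-1\bigr)^{2}\,
  I(\sample^{(i_t)}; \msg_t \mid \msg_{1:t-1}, E=1).
\end{equation*}
Summing over $t$ and collapsing the right-hand side via the chain rule gives $2(e^{4\llratio}-1)^2\, I(\sample; \msg \mid E=1)$. Finally, since $E$ is a deterministic function of $\sample$, a direct expansion $I(\sample;\msg,E) = I(\sample;\msg) + H(E\mid \msg) = H(E) + I(\sample;\msg\mid E)$ shows that $I(\sample;\msg\mid E) \le I(\sample;\msg)$, so $P(E=1)\, I(\sample;\msg\mid E=1) \le I(\sample;\msg)$, and combining with the earlier decomposition yields the claim.

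The main obstacle, compared with the independent-protocol analogue, is justifying the per-round contraction when $\msg_t$ depends on the entire history $\msg_{1:t-1}$: one must verify carefully that conditioning on $\msg_{1:t-1}$ preserves the chain $\packrv \to \sample^{(i_t)} \to \msg_t$ (using conditional independence of the machines' samples given $\packrv$) and that the likelihood ratio hypothesis~\eqref{eqn:likelihood-ratio-subset-bound} survives the additional conditioning on $E=1$ with a controlled inflation in the ratio. Everything else is bookkeeping: the chain rule for mutual information, monotonicity $H(E\mid\msg)\le H(E)$, and the binary bound $I(\packrv;\msg\mid E=0)\le 1$.
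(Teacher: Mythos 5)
Your overall architecture matches the paper's: the outer decomposition $I(\packrv;\msg) \le I(\packrv;\msg\mid E) + H(E)$ with the $E=0$ branch bounded by $P(E=0)\log 2$, the chain rule over rounds, a per-round contraction, and the final collapse $\sum_t I(\sample^{(i_t)};\msg_t\mid\msg_{1:t-1}) \le I(\sample;\msg)$ are exactly the steps the paper takes (it handles the $E$-conditioning ``mutatis mutandis'' from Lemma~\ref{lemma:independent-message-unbounded}). The gap is in the step you yourself flag as the main obstacle: the per-round contraction. You justify it by (i) the Markov chain $\packrv \to \sample^{(i_t)} \to \msg_t$ conditional on $\msg_{1:t-1}$ and (ii) the likelihood ratio bound~\eqref{eqn:likelihood-ratio-subset-bound}, suitably inflated by conditioning on $E=1$. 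Condition (i) is immediate from the protocol definition and is not where the difficulty lies. The pointwise inequality that drives Lemmas~\ref{lemma:independent-message-contraction} and~\ref{lemma:independent-message-unbounded} is Lemma~\ref{lemma:information-chaining}, and it requires a third hypothesis, the factorization~\eqref{eqn:message-distribution-factorization}: with the identifications $A = \packrv$, $B = \sample^{(i_t)}$, $C = \msg_{1:t-1}$, $D = \msg_t$, one must show
\begin{equation*}
  P(\msg_{1:t-1} = \lcmsg_{1:t-1} \mid \packrv, \sample^{(i_t)})
  = \fpotl_1(\packrv, \lcmsg_{1:t-1})\,\fpotl_2(\sample^{(i_t)}, \lcmsg_{1:t-1}).
\end{equation*}
This factorization is what guarantees that the posterior $P(\packrv = \packval \mid \sample^{(i_t)} = b, \msg_{1:t-1})$ stays within a multiplicative $e^{\pm\llratio}$ band around a quantity independent of $b$, which is the engine of the contraction. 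It is also precisely where interactivity bites: $\msg_{1:t-1}$ contains earlier messages from machine $i_t$ itself, which are functions of $\sample^{(i_t)}$, so the conditional law of $\sample^{(i_t)}$ given $(\packrv, \msg_{1:t-1})$ is the original law reweighted by a factor depending jointly on the sample and the history; without the factorization you cannot conclude that the reweighted likelihood ratio between $\packval$ and $\packval'$ remains controlled, and the per-round bound with constant $2(e^{4\llratio}-1)^2$ does not follow.

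The paper establishes the factorization by writing $P(\msg_{1:t-1}\mid\packrv,\sample^{(i_t)}) = \prod_{k<t} P(\msg_k\mid\msg_{1:k-1},\packrv,\sample^{(i_t)})$ and splitting on whether $i_k = i_t$ (the factor is $\{\sample^{(i_t)},\msg_{1:k-1}\}$-measurable, hence free of $\packrv$) or $i_k \neq i_t$ (the factor is independent of $\sample^{(i_t)}$ given $\{\packrv,\msg_{1:k-1}\}$, by conditional independence of the machines' samples). Your proposal invokes this conditional independence only to assert preservation of the Markov chain, not to prove the factorization, so the key idea of the interactive extension is missing. A secondary issue is the constant accounting: in the paper the exponent $4\llratio$ arises as twice the $E$-conditioned ratio $2\llratio$, with the extra doubling internal to Lemma~\ref{lemma:information-chaining}; a naive route that separately inflates the likelihood ratio for conditioning on $\msg_{1:t-1}$ and on $E=1$ would yield a strictly worse constant than the one claimed.
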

\noindent
\noindent See Appendix~\ref{sec:proof-contraction-subset} for the
proof.

Using this lemma as a building block, we turn to the case that
$\sample^{(i)}$ is $d$-dimensional.
Making an explicit choice of
the set $\llset$, we obtain the following concrete bound on the mutual
information. The lemma applies to any pair $(a, \delta)$ of non-negative
reals such that
\begin{align*}
  \frac{\sqrt{\numobs} a \delta}{\sigma^2} \le \frac{1}{4} \quad
  \mbox{and} \quad a \ge \delta \sqrt{\numobs},
\end{align*}
and, as in Lemma~\ref{lemma:gaussian-information-bounds}, involves the binary
entropy function $h_2(p) \defeq -p \log(p) - (1 - p) \log(1 - p)$.

\begin{lemma}
  \label{lemma:interactive-gaussian-bounds}
  Under the preceding conditions, we have
  \begin{align*}
    % \label{eqn:interactive-gaussian-conditional-information}
    I(\packrv_j; \msg \mid \packrv_{\setminus j}) & \le 128
    \frac{\delta^2\numobs a^2}{\sigma^4} I(\sample_j; \msg \mid
    \packrv_{\setminus j}) + \nummac h_2(p^*) + \nummac
    p^*
  \end{align*}
  where $p^* \defeq \min \left\{ 2 \exp\big(-\frac{(a - \sqrt{\numobs}
    \delta)^2}{ 2 \sigma^2} \big), \half \right\}$.
\end{lemma}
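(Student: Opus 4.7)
The plan is to reduce the $d$-dimensional, multi-coordinate conditional bound to a single application of the one-dimensional Lemma~\ref{lemma:information-contraction-in-subset} by conditioning on the ``other'' coordinates $\packrv_{\setminus j}$. Fix $j \in [d]$ and condition on $\packrv_{\setminus j} = v$ for a generic $v \in \{-1,1\}^{d-1}$. Under this conditioning, $\packrv_j$ remains uniform on $\{-1,1\}$, each $j$-th row $\sample_j^{(i)} \in \R^{\numobs}$ is distributed as $\normal(\delta \packrv_j \mathbf{1}_{\numobs}, \sigma^2 I_{\numobs})$, and the remaining rows $\sample_{\setminus j}^{(i)}$ are independent of the pair $(\packrv_j, \sample_j)$ (and independent across machines). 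I would therefore absorb $\sample_{\setminus j}^{(i)}$ into the private randomness of machine $i$; the interactive protocol driven by the full data is then, given $\packrv_{\setminus j}=v$, genuinely an interactive protocol in the one-dimensional sense required by Lemma~\ref{lemma:information-contraction-in-subset}, with binary parameter $\packrv_j$ and per-machine sample $\sample_j^{(i)}$.

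To invoke that lemma I would choose the ``good set''
\[
\llset \defeq \Big\{x \in \R^{\numobs} : |\bar x| \le a/\sqrt{\numobs}\Big\}, \qquad \bar x \defeq \frac{1}{\numobs}\sum_{k=1}^{\numobs} x_k.
\]
Direct computation of the Gaussian log-likelihood ratio gives $\log[p(x\mid \packrv_j=1)/p(x\mid \packrv_j=-1)] = 2\delta \numobs \bar x/\sigma^2$, so on $\llset$ this log-ratio is bounded in absolute value by $\llratio \defeq 2\sqrt{\numobs}\,a\,\delta/\sigma^2$, verifying hypothesis~\eqref{eqn:likelihood-ratio-subset-bound}. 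The assumption $\sqrt{\numobs}\,a\,\delta/\sigma^2 \le 1/4$ keeps $\llratio$ small, which is what allows the linearization of $(e^{4\llratio}-1)^2$ to produce the $128\,\numobs a^2 \delta^2/\sigma^4$ coefficient that appears in the claim.

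For the indicator $E = \prod_{i=1}^{\nummac} \indic{\sample_j^{(i)} \in \llset}$, the key observation is that $\bar \sample_j^{(i)} \sim \normal(\delta \packrv_j, \sigma^2/\numobs)$, so a standard Gaussian tail bound together with the assumption $a \ge \sqrt{\numobs}\,\delta$ gives $P(\sample_j^{(i)} \notin \llset) \le 2\exp(-(a-\sqrt{\numobs}\delta)^2/(2\sigma^2))$; capping at $1/2$ yields the quantity $p^*$ of the statement. A union bound then supplies $P(E=0) \le \nummac p^*$, and subadditivity of Shannon entropy combined with monotonicity of $h_2$ on $[0,1/2]$ gives $H(E) \le \sum_{i=1}^{\nummac} H(\indic{\sample_j^{(i)} \in \llset}) \le \nummac\, h_2(p^*)$.

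Assembling these ingredients and applying Lemma~\ref{lemma:information-contraction-in-subset} conditionally on $\packrv_{\setminus j}=v$ yields the bound pointwise in $v$; averaging over $\packrv_{\setminus j}$ converts the resulting unconditional mutual informations into their conditional versions $I(\packrv_j;\msg\mid \packrv_{\setminus j})$ and $I(\sample_j;\msg\mid \packrv_{\setminus j})$, giving the claim. The main obstacle is the reduction step itself: one must justify that after conditioning on $\packrv_{\setminus j}$ the complementary rows $\sample_{\setminus j}^{(i)}$ really can be reinterpreted as private auxiliary randomness without introducing dependencies across machines that would violate the hypotheses of Lemma~\ref{lemma:information-contraction-in-subset}. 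The product structure of the Gaussian (and of the sampling scheme across machines) is precisely what makes this rigorous, but it is the step where care is needed; the remaining pieces (likelihood-ratio bound, tail bound, entropy bound) are routine computations tuned to produce the constants $128$, $\nummac h_2(p^*)$, and $\nummac p^*$ in the final inequality.
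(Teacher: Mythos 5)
Your proposal is correct and follows essentially the same route as the paper: condition on $\packrv_{\setminus j} = \packval_{\setminus j}$, apply the one-dimensional interactive contraction result (Lemma~\ref{lemma:information-contraction-in-subset}) with the good set defined by $|\sum_k x_k| \le \sqrt{\numobs}\, a$, control $P(E=0)$ and $H(E)$ via Gaussian tail bounds and entropy subadditivity, and then integrate over $\packrv_{\setminus j}$. The only discrepancy is bookkeeping: your (correctly computed) log-likelihood ratio bound $\llratio = 2\sqrt{\numobs}\,a\,\delta/\sigma^2$ carries an extra factor of $2$ relative to the paper's choice $\llratio = \sqrt{\numobs}\,a\,\delta/\sigma^2$, so your linearization of $2(e^{4\llratio}-1)^2$ would produce a constant $512$ rather than $128$ --- a harmless constant-factor difference that does not affect the structure or validity of the argument.
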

\noindent We prove the lemma in
Section~\ref{sec:proof-interactive-gaussian-information}.

To apply Lemma~\ref{lemma:interactive-gaussian-bounds}, we require
two further intermediate bounds on mutual information terms.  By the chain
rule for mutual information~\cite{Cover2006}, we have
\begin{align*}
  I(\packrv; \msg) = \sum_{j = 1}^d I(\packrv_j; \msg \mid
  \packrv_{1:j-1}) \; & = \; \sum_{j=1}^d [H(\packrv_j \mid
    \packrv_{1:j-1}) - H(\packrv_j \mid \msg, \packrv_{1:j-1})] \\
  & \stackrel{(i)}{=} \sum_{j=1}^d [H(\packrv_j \mid \packrv_{\setminus
      j}) - H(\packrv_j \mid \msg, \packrv_{1:j-1})],
\end{align*}
where equality (i) follows since the variable $\packrv_j$ is
independent of $\packrv_{\setminus j}$.  Since conditioning can only
reduce entropy, we have $H(\packrv_j \mid \msg, \packrv_{1:j-1}) \geq
H(\packrv_j \mid \msg, \packrv_{\setminus j})$, and hence
\begin{align}
  \label{eqn:left-condition-to-uncondition-for-interactive-gaussian}
  I(\packrv; \msg) & \leq \sum_{j=1}^d [H(\packrv_j \mid
    \packrv_{\setminus j}) - H(\packrv_j \mid \msg, \packrv_{\setminus
      j})] \; = \; \sum_{j=1}^d I(\packrv_j; \msg \mid
  \packrv_{\setminus j}).
\end{align}

Turning to our second intermediate bound, by the definition
of the conditional mutual information, we have
\begin{align*}
  \sum_{j=1}^d I(\sample_j; \msg \mid \packrv_{-j}) \: = \:
  \sum_{j=1}^d [ H(\sample_j \mid \packrv_{\setminus j}) - H(\sample_j
    \mid \msg, \packrv_{\setminus j})] & \stackrel{(i)}{=} H(\sample)
  - \sum_{j=1}^d H(\sample_j \mid \msg, \packrv_{\setminus j}) \\
  & \stackrel{(ii)}{\le} H(\sample) - \sum_{j=1}^d H(\sample_j \mid
  \msg, \packrv) \\
  & \stackrel{(iii)}{\leq} H(\sample) - H(\sample \mid \msg, \packrv) =
  I(\sample; \msg, \packrv),
\end{align*}
where equality (i) follows by the independence of $\sample_j$ and
$\packrv_{\setminus j}$, inequality (ii) because conditioning reduces
entropy, and inequality (iii) because $H(\sample \mid \msg, \packrv)
\le \sum_j H(\sample_j \mid \msg, \packrv)$.  Noting that $I(\sample;
\packrv, \msg) \le H(\packrv, \msg) \le H(\msg) + d$, we conclude that
\begin{equation}
  \label{eqn:right-condition-to-uncondition-for-interactive-gaussian}
  \sum_{j=1}^d I(\sample_j; \msg \mid \packrv_{\setminus j})
  \le I(\sample; \packrv, \msg) \le H(\msg) + d.
\end{equation}

We can now complete the proof of the theorem.  Combining
inequalities~\eqref{eqn:left-condition-to-uncondition-for-interactive-gaussian}
and~\eqref{eqn:right-condition-to-uncondition-for-interactive-gaussian}
with Lemma~\ref{lemma:interactive-gaussian-bounds} yields
\begin{align}
  \label{eqn:global-mutual-information-bound-for-interative-gaussian}
  I(\packrv; \msg) & \le 128 \frac{\delta^2\numobs a^2}{\sigma^4}
  (H(\msg) + d) + \nummac \, d \: h_2(p^*) + m d p^*,
\end{align}
where we recall that $p^* = \{ 2 \exp\big(-\frac{(a - \sqrt{\numobs}
  \delta)^2}{ 2 \sigma^2} \big), \half \}$.

Inequality~\eqref{eqn:global-mutual-information-bound-for-interative-gaussian}
is the analog of inequality~\eqref{eqn:transition-bound-application}
in the proof of Theorem~\ref{theorem:gaussian-mean-communication};
accordingly, we may follow the same steps to complete the proof.
The case $d < 10$ is entirely analogous; the case $d \geq 10$
involves a few minor differences that we describe here. \\

Setting $a = 4 \sigma \sqrt{\log \nummac}$, choosing some $\delta$ in
the interval $[0, \frac{\sigma}{16 \sqrt{\numobs \log \nummac}}]$, and
then applying the
bound~\eqref{eqn:global-mutual-information-bound-for-interative-gaussian},
we find that
\begin{equation*}
  I(\packrv; \msg) \le \delta^2 \frac{128 \cdot 16 \numobs \log
    \nummac}{ \sigma^2} (H(\msg) + d) + d\left(\frac{2}{49} + 2 \cdot
  10^{-5}\right).
\end{equation*}
Combining this upper bound on the mutual information with
Lemmas~\ref{lemma:estimation-to-testing} and~\ref{lemma:fun-fano}, we find that
\begin{align*}
  \minimaxinter(\theta, \pclass, \budget) & \geq \delta^2 (\floor{d/6}
  + 1) \bigg[\frac{1}{3} - (128 \cdot 16 \cdot 6) \delta^2
    \frac{(H(\msg) + d) \numobs \log \nummac}{d \sigma^2}\bigg] \\
  & \geq \delta^2 (\floor{d/6} + 1) \bigg[\frac{1}{3} - (128 \cdot 16
    \cdot 6) \delta^2 \frac{(\budget + d) \numobs \log \nummac}{d
      \sigma^2}\bigg],
\end{align*}
where the second step follows since $H(\msg) \leq \budget$, by the
source coding theorem~\cite{Cover2006}. Setting
\begin{equation*}
  \delta^2 = \min\left\{1, \frac{\sigma^2}{256 \numobs \log \nummac },
  \frac{d \sigma^2}{2048 \cdot 36 \cdot \numobs (\budget + d) \log
    \nummac}\right\}
  = \min\left\{1, \frac{d \sigma^2}{2048 \cdot 36
    \cdot \numobs (\budget + d) \log \nummac}\right\},
\end{equation*}
we obtain
\begin{align*}
  \minimaxinter(\theta, \pclass, \budget) & \geq \delta^2
  \frac{\floor{d/6} + 1}{6} \; = \; \min\left\{1, \frac{d \sigma^2}{2048
    \cdot 36 \cdot \numobs (\budget + d) \log \nummac}\right\} \;
  \frac{\floor{d/6} + 1}{6}
\end{align*}
Noting that $\minimaxinter(\theta, \pclass, \budget)
\ge \minimaxinter(\theta, \pclass, \infty)
\gtrsim \frac{\sigma^2 d}{\numobs \nummac}$ completes the proof.

%%%%%%%%%%%%%%%%%%%%%%%%%%%%%%%%%%%%%%%%%%%%%%%%%%%%%%%%%%%%%%%%%%%%%%%%%%%

\section{Discussion}
\label{SecDiscuss}

In this paper, we have established lower bounds on the amount of
communication required for several statistical estimation problems.
Our lower bounds are information-theoretic in nature, based on
variants of Fano's and Le Cam's methods.  In particular, they rely on
novel types of quantitative data processing inequalities that
characterize the effect of bit constraints on the mutual information
between parameters and messages.
Several open questions remain. Our arguments are somewhat complex, and our
upper and lower bounds differ by logarithmic factors. It would be interesting
to understand which of our bounds can be sharpened; tightening the upper
bounds would lead to interesting new distributed inference protocols, while
improving the lower bounds could require new technical insights. We believe it
will also be interesting to explore the application and extension of
our results and techniques to other---perhaps more complex---problems in
statistical estimation.
%%%%%%%%%%%%%%%%%%%%%%%%%%%%%%%%%%%%%%%%%%%%%%%%%%%%%%%%%%%%%%%%%%%%%%%

\subsection*{Acknowledgements}

This work was supported in part by the U.S.\ Army Research Laboratory,
U.S.\ Army Research Office grant W911NF-11-1-0391, Office of Naval
Research MURI grant N00014-11-1-0688, and National Science Foundation
grant CIF-31712-23800.  In addition, JCD was supported in part by a
Facebook Graduate Fellowship.

%%%%%%%%%%%%%%%%%%%%%%%%%%%%%%%%%%%%%%%%%%%%%%%%%%%%%%%%%%%%%%%%%%%%%%%%

\appendix

\section{Contractions in total variation distance}
\label{sec:total-variation-contraction}

As noted in the main body of the paper, our results rely on certain
quantitative data processing inequalities.  They are inspired by results on
information contraction under privacy constraints developed by a subset of the
current authors (\citet{Duchi2013}). In this appendix, we present a technical
result---a contraction in total variation distance---that underlies many of
our proofs of the data processing inequalities
(Lemmas~\ref{lemma:independent-message-contraction},
\ref{lemma:independent-message-unbounded},
and \ref{lemma:information-contraction-in-subset}).

Consider a random vector $(A, B, C, D)$ with
joint distribution $P_{A,B,C,D}$, where $A$, $C$ and $D$ take on
discrete values.  Denoting the conditional distribution of $A$ given
$B$ by $P_{A \mid B}$, suppose that $(A, B, C, D)$ respect the
conditional independence properties defined by the directed graphical
model in Figure~\ref{fig:information-chaining}.  In analytical terms,
we have
\begin{equation}
  \label{eqn:joint-distribution-decomposition}
  P_{A,B,C,D} = P_A P_{B\mid A} P_{C \mid A, B} P_{D \mid B, C}.
\end{equation}
In addition, we assume that there exist functions $\fpotl_1 : \mc{A}
\times \sigma(\mc{C}) \rightarrow \R_+$ and $\fpotl_2 : \mc{B} \times
\sigma(\mc{C}) \rightarrow \R_+$ such that
\begin{equation}
  \label{eqn:message-distribution-factorization}
  P_{C}(S \mid A, B) = \fpotl_1(A, S) \fpotl_2(B, S)
\end{equation}
for any (measureable) set $S$ in the range $\mc{C}$ of $C$.  Since $C$
is assumed discrete, we abuse notation and write $P(C = c \mid A, B) =
\fpotl_1(A, c) \fpotl_2(B, c)$.  Lastly, suppose that
\begin{equation}
  \label{eqn:likelihood-constraint}
  \sup_{S \in \sigma(\mc{B})} \frac{P_B(S \mid A = a)}{P_B(S \mid A =
    a')} \le \exp(\llratio) \qquad \mbox{for all $a, a' \in \mc{A}$.}
\end{equation}
The following lemma applies to the absolute difference
\begin{align*}
  \Delta(a, C, D) & \defeq \big|P(A = a \mid C, D) - P(A = a \mid
  C)\big|.
\end{align*}

\begin{figure}
  \begin{center}
    \begin{overpic}[width=.4\columnwidth]%,grid]%
      {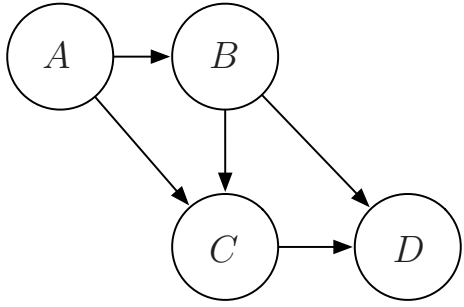} \put(11.5,51){\Large $A$}
      \put(45,51){\Large $B$} \put(45,12){\Large $C$}
      \put(82,12){\Large $D$}
    \end{overpic}
    \caption{\label{fig:information-chaining} Graphical model
      for Lemma~\ref{lemma:information-chaining}}
  \end{center}
\end{figure}
\begin{lemma}
  \label{lemma:information-chaining}
  Under conditions~\eqref{eqn:joint-distribution-decomposition},
  \eqref{eqn:message-distribution-factorization},
  and~\eqref{eqn:likelihood-constraint}, we have
  \begin{align*}
    \Delta(a, C, D) & \leq 2 \big(e^{2 \llratio} - 1\big) \min\big\{P(A =
    a \mid C), P(A = a \mid C, D)\big\} \tvnorm{P_B(\cdot \mid C, D) -
      P_B(\cdot \mid C)}.
  \end{align*}
\end{lemma}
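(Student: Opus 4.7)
My plan is to rewrite $\Delta(a, C, D)$ as an integral of $P(A=a\mid B, C)$ against the signed measure $dP_{B\mid C,D} - dP_{B\mid C}$, then use the factorization~\eqref{eqn:message-distribution-factorization} together with the likelihood bound~\eqref{eqn:likelihood-constraint} to place the integrand in a narrow multiplicative band around both $P(A=a\mid C)$ and $P(A=a\mid C,D)$. The graphical decomposition~\eqref{eqn:joint-distribution-decomposition} gives $A \perp D \mid (B, C)$, hence $P(A=a \mid B, C, D) = P(A=a \mid B, C)$. Applying the tower property to both $P(A=a \mid C, D)$ and $P(A=a \mid C)$ then yields
\begin{equation*}
  P(A=a \mid C, D) - P(A=a \mid C)
  \;=\; \int P(A=a \mid B, C)\,\bigl[dP_{B\mid C,D}(B) - dP_{B\mid C}(B)\bigr].
\end{equation*}
By Bayes' rule, $P(A=a \mid B, C) = P(A=a\mid C)\, g(B) = P(A=a\mid C, D)\, \tilde g(B)$, where I set $g(B) \defeq P(B\mid A=a, C)/P(B\mid C)$ and $\tilde g(B) \defeq P(B\mid A=a, C, D)/P(B\mid C, D)$.

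The analytic step is to show $g(B), \tilde g(B) \in [e^{-2\llratio}, e^{2\llratio}]$. The factorization $P(C \mid A, B) = \fpotl_1(A, C)\,\fpotl_2(B, C)$ and Bayes' rule give $P(B \mid A, C) \propto \fpotl_2(B, C)\, P(B \mid A)$, with $\fpotl_1(A, C)$ absorbed into the normalizer; since $D$ is conditionally independent of $A$ given $(B, C)$, the same reasoning gives $P(B \mid A, C, D) \propto P(D \mid B, C)\,\fpotl_2(B, C)\,P(B \mid A)$. In each ratio for two indices $a, a'$, the $B$-dependent factors cancel, leaving $P(B \mid A = a)/P(B \mid A = a')$ times a normalizer ratio of the form $\int \psi(b)\, dP_B(b \mid a') / \int \psi(b)\, dP_B(b \mid a)$ with $\psi \geq 0$. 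Extracting the pointwise likelihood bound $dP_B(\cdot \mid a)/dP_B(\cdot \mid a') \leq e^\llratio$ from~\eqref{eqn:likelihood-constraint} and applying it in both factors yields
\begin{equation*}
  \frac{P(B \mid A = a, C)}{P(B \mid A = a', C)},\quad
  \frac{P(B \mid A = a, C, D)}{P(B \mid A = a', C, D)}
  \;\in\;[e^{-2\llratio},\, e^{2\llratio}].
\end{equation*}
Since $P(B \mid C)$ and $P(B \mid C, D)$ are mixtures over $a'$ of these conditional laws, both $g$ and $\tilde g$ inherit the same band, so $|g - 1|, |\tilde g - 1| \leq e^{2\llratio} - 1$.

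To conclude, I subtract $1$ from $g$ (respectively $\tilde g$) inside the integral, which is free because $dP_{B\mid C,D} - dP_{B\mid C}$ has total mass zero, and then apply $|\int \phi\, d\nu| \leq \|\phi\|_\infty \int |d\nu|$ together with $\int |dP_{B\mid C,D} - dP_{B\mid C}| = 2\,\tvnorm{P_B(\cdot \mid C, D) - P_B(\cdot \mid C)}$. This yields $\Delta(a, C, D) \leq 2(e^{2\llratio} - 1)\, P(A=a \mid C)\,\tvnorm{P_B(\cdot\mid C, D) - P_B(\cdot\mid C)}$ and the analogous bound with $P(A=a \mid C, D)$, and taking the minimum gives the claim. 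The main obstacle I foresee is obtaining the sharp bound $e^{2\llratio}$ (not $e^{4\llratio}$) on $\tilde g$: a naive approach writes $\tilde g = g \cdot P(D\mid C)/P(D \mid A=a, C)$ with both factors separately in $[e^{-2\llratio}, e^{2\llratio}]$, which only forces $\tilde g \in [e^{-4\llratio}, e^{4\llratio}]$ and loses the $\min$ form of the conclusion. Recovering the sharper band requires factoring $P(B \mid A, C, D)$ directly as above and observing that $P(D \mid B, C)$ cancels in the $a, a'$ ratio in exactly the same way as $\fpotl_2(B, C)$ does in the $P(B\mid A, C)$ ratio, so that only two---not four---applications of the likelihood bound~\eqref{eqn:likelihood-constraint} are incurred.
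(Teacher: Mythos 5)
Your proposal is correct and follows essentially the same route as the paper: the integral representation of $\Delta(a,C,D)$ via $A \perp D \mid (B,C)$, the zero-mass trick to center the integrand, and the factorization~\eqref{eqn:message-distribution-factorization} plus two applications of the likelihood bound~\eqref{eqn:likelihood-constraint} to trap the posterior within a multiplicative band of width $e^{2\llratio}$. The only difference is bookkeeping: the paper pivots both $P(A=a\mid B=b,C)$ and $P(A=a\mid C)$ (and $P(A=a\mid C,D)$) around the common auxiliary quantity $\auxratio$, whereas you form the ratios $g,\tilde g$ directly by Bayes' rule and bound $\|g-1\|_\infty$, correctly observing that $\fpotl_2(B,C)$ and $P(D\mid B,C)$ cancel in the $a,a'$ ratios so that only the factor $e^{2\llratio}$ (not $e^{4\llratio}$) is incurred.
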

\noindent

\begin{proof}
  By assumption, $A$ is independent of $D$ given $\{B,C\}$. Thus we may
  write
  \begin{align*}
    \Delta(a, C, D) & = \Big| \int P(A=a \mid B = b,C)\big(dP_B(b \mid C ,
    D) - dP_B(b \mid C) \big) \Big|.
  \end{align*}
  Combining this equation with the relation $\int_{\mc{B}} P(A=a \mid C)\big(
  dP_B(b \mid C,D) - dP_B(b \mid C)\big) = 0$, we find that
  \begin{align*}
    \Delta(a, C, D) & = \Big|\int_{\mc{B}}
    \big(P(A=a \mid B=b,C) - P(A=a \mid C)
    \big) \big( dP_B(b \mid C,D) - dP_B(b \mid C) \big) \Big|.
  \end{align*}
  Using the fact that $|\int f(b) d\mu(b) | \le \sup_{b} \{|f(b)|\} \int
  |d\mu(b)|$ for any signed measure $\mu$ on $\mc{B}$, we conclude from
  the previous equality that for any version $P_A(\cdot \mid B,
  C)$ of the conditional probability of $A$ given $\{B, C\}$ that since
  $\int |d\mu| = \tv{\mu}$,
  \begin{align*}
    \Delta(a, C, D) & \leq 2 \sup_{b \in \mc{B}} \{|P(A=a \mid B = b, C) -
    P(A=a \mid C)|\} \tv{P_B(\cdot \mid C,D) - P_B(\cdot \mid C)}.
  \end{align*}

  Thus, to prove the lemma, it is sufficient to show\footnote{
    If $P(A = a \mid C)$ is undefined, we
    simply set it to have value $1$ and assign $P(A = a \mid B, C) = 1$ as
    well.} that for any $b \in \mc{B}$
  \begin{align}
    \label{eqn:bound-likelihood-difference}
    |P(A=a \mid B=b, C) - P(A=a \mid C)|
    \leq (e^{2 \llratio} - 1) \min\{P(A=a \mid C), P(A = a \mid C, D)\}.
  \end{align}
  To prove this upper
  bound, we consider the joint
  distribution~\eqref{eqn:joint-distribution-decomposition} and likelihood
  ratio bound~\eqref{eqn:likelihood-constraint}. The distributions
  $\{P_B(\cdot \mid A = a)\}_{a \in \mc{A}}$ are all absolutely continuous
  with respect to one another by
  assumption~\eqref{eqn:likelihood-constraint}, so it is no loss of
  generality to assume that there exists a density $p_B(\cdot \mid A = a)$ for
  which $P(B \in S \mid A = a) = \int p_B(b \mid A = a) d\mu(b)$, for some
  fixed measure $\mu$ and for which the ratio $p_B(b \mid A = a) / p_B(b \mid
  A = a') \in [e^{-\llratio}, e^\llratio]$ for all $b$. By elementary
  conditioning we have for any $S_{\mc{B}} \in \sigma(\mc{B})$ and $c \in \mc{C}$
  that
  \begin{align*}
    &P(A = a \mid B \in S_{\mc{B}}, C = c)\\
    &\qquad = \frac{P(A = a, B \in S_{\mc{B}}, C = c)}{
      P(B \in S_{\mc{B}}, C = c)} \\
    &\qquad = \frac{P(B \in S_{\mc{B}}, C = c \mid A = a) P(A = a)}{
      \sum_{a' \in \mc{A}}
      P(A = a') P(B \in S_{\mc{B}}, C = c \mid A = a')} \\
    &\qquad = \frac{P(A = a) \int_{S_{\mc{B}}} P(C = c \mid B = b, A = a)
      p_B(b \mid A = a) d\mu(b)}{
      \sum_{a' \in \mc{A}}
      P(A = a') \int_{S_{\mc{B}}} P(C = c \mid B = b, A = a')
      p_B(b \mid A = a') d\mu(b) },
  \end{align*}
  where for the last equality we used the conditional independence
  assumptions~\eqref{eqn:joint-distribution-decomposition}.
  But now we recall
  the decomposition formula~\eqref{eqn:message-distribution-factorization},
  and we can express the likelihood functions by
  \begin{align*}
    P(A=a \mid B \in S_{\mc{B}}, C = c)
    = \frac{P(A = a) \int_{S_{\mc{B}}} \fpotl_1(a, c)
      \fpotl_2(b, c) p_B(b \mid A = a) d\mu(b)}{
      \sum_{a'} P(A = a') \int_{S_{\mc{B}}}
      \fpotl_1(a', c) \fpotl_2(b, c)
      p_B(b \mid A = a') d\mu(b)}.
  \end{align*}
  As a consequence,
  there is a version of the conditional distribution of $A$ given $B$ and $C$
  such that
  \begin{equation}
    P(A=a \mid B = b, C = c)
    = \frac{P(A = a) \fpotl_1(a, c)
      p_B(b \mid A = a)}{
      \sum_{a'} P(A = a')
      \fpotl_1(a', c)
      p_B(b \mid A = a')}.
    \label{eqn:nice-density-version}
  \end{equation}
  Define the shorthand
  \begin{equation*}
    \auxratio = \frac{P(A = a)\fpotl_1(a, c)}{
      \sum_{a' \in \mc{A}} P(A=a')\fpotl_1(a', c)}.
  \end{equation*}
  We claim that
  \begin{align}
    \label{eqn:claim-auxratio-to-pratio}
    e^{-\llratio} \auxratio \le
    P(A=a \mid B = b, C = c) \leq e^\llratio \auxratio.
  \end{align}
  Assuming the correctness of bound~\eqref{eqn:claim-auxratio-to-pratio}, we
  establish inequality~\eqref{eqn:bound-likelihood-difference}. Indeed,
  $P(A=a \mid C = c)$ is a weighted average of $P(A = a \mid B = b,C = c)$, so
  we
  also have the same upper and lower bound for $P(A=a \mid C)$, that is
  \begin{equation*}
    e^{-\llratio} \auxratio \leq P(A=a \mid C) \leq e^{\llratio} \auxratio.
  \end{equation*}
  The conditional independence assumption that $A$ is independent of $D$
  given $B, C$ (recall Figure~\ref{fig:information-chaining} and the
  product~\eqref{eqn:joint-distribution-decomposition}) implies
  \begin{align*}
    %%     e^{-\llratio} \auxratio
    %%     \le
    P(A = a \mid C = c, D = d)
    % = P(A = a, B \in \mc{B} \mid C = c, D = d)
    & = \int_{\mc{B}} P(A = a \mid B = b, C = c, D = d)
    dP_B(b \mid C = c, D = d) \\
    & = \int_{\mc{B}} P(A = a \mid B = b, C = c)
    dP_B(b \mid C = c, D = d),
  \end{align*}
  and the final integrand belongs to $\auxratio [e^{-\llratio}, e^\llratio]$.
  Combining the preceding three displayed expressions, we find that
  \begin{align*}
    |P(A=a \mid B=b, C)-P(A = a \mid C)|
    & \le \big(e^{\llratio} - e^{-\llratio}\big) \auxratio \\
    & \le \big(e^{\llratio} - e^{-\llratio}\big)
    e^\llratio \min\big\{P(A = a \mid C),
    P(A = a \mid C, D)\big\}.
  \end{align*}
  This completes the proof of the upper
  bound~\eqref{eqn:bound-likelihood-difference}.

  It remains to prove inequality~\eqref{eqn:claim-auxratio-to-pratio}. We
  observe from expression~\eqref{eqn:nice-density-version} that
  \begin{equation*}
    P(A=a \mid B = b, C) =
    \frac{P(A=a) \fpotl_1(a,C)}{\sum_{a' \in \mc{A}} P(A=a')
      \fpotl_1(a',C)\frac{p_B(b \mid A=a')}{p_B(b \mid A=a)}}.
  \end{equation*}
  By the likelihood ratio bound~\eqref{eqn:likelihood-constraint}, we have
  $p_B(b \mid A = a') / p_B(b \mid A = a) \in [e^{-\llratio}, e^\llratio]$,
  and combining this with the above equation yields
  inequality~\eqref{eqn:claim-auxratio-to-pratio}.
\end{proof}

%%%%%%%%%%%%%%%%%%%%%%%%%%%%%%%%%%%%%%%%%%%%%%%%%%%%%%%%%%%%%%%%%%%%%%%%
%%%%%%%%%%%%%%%%%%%%%%%%%%%%%%%%%%%%%%%%%%%%%%%%%%%%%%%%%%%%%%%%%%%%%%%

\section{Auxiliary results for
  Proposition~\ref{proposition:single-bernoulli-samples}}

In this appendix, we collect the proofs of auxiliary results involved
in the proof of
Proposition~\ref{proposition:single-bernoulli-samples}.

\subsection{Proof of Lemma~\ref{lemma:independent-message-contraction}}
\label{SecProofLemIndMesContract}
  
Let $\msg = \msg_i$; throughout the proof
we suppress the dependence on the index $i$ (and
similarly let $\sample = \sample^{(i)}$ denote a single fixed sample).  We
begin with the observation that by the chain rule for mutual
information,
\begin{equation*}
  I(\packrv; \msg) = \sum_{j=1}^d I(\packrv_j; \msg \mid \packrv_{1:j-1}).
\end{equation*}
Using the definition of mutual information and non-negativity of the
KL-divergence, we have
\begin{align*}
  I(\packrv_j; \msg \mid \packrv_{1:j-1})
  &= \E_{\packrv_{1:j-1}}
  \left[\E_\msg\left[\dkl{P_{\packrv_j}(\cdot \mid
        \msg, \packrv_{1:j-1})}{P_{\packrv_j}(\cdot
        \mid \packrv_{1:j-1})} \mid \packrv_{1:j-1}\right]\right] \\
  & \le \E_{\packrv_{1:j-1}}
  \big[\E_\msg\big[\dkl{P_{\packrv_j}(\cdot \mid
        \msg, \packrv_{1:j-1})}{P_{\packrv_j}(\cdot
        \mid \packrv_{1:j-1})}\\
      &\qquad\qquad\qquad
      ~+ \dkl{P_{\packrv_j}(\cdot \mid \packrv_{1:j-1})}{
        P_{\packrv_j}(\cdot \mid \msg, \packrv_{1:j-1})}
      \mid \packrv_{1:j-1}\big]\big].
\end{align*}

Now, we require an argument that builds off of our technical
Lemma~\ref{lemma:information-chaining}.  We claim that
Lemma~\ref{lemma:information-chaining} implies that
\begin{align}
  & \left|P(\packrv_j = \packval_j \mid \packrv_{1:j-1}, \msg)
    - P(\packrv_j = \packval_j \mid \packrv_{1:j-1})\right|
  \nonumber\\
  &\quad\quad \le 2 (e^{2 \llratio} - 1)
  \min\left\{P(\packrv_j = \packval_j \mid \packrv_{1:j-1}, \msg),
  P(\packrv_j = \packval_j \mid \packrv_{1:j-1})\right\}
  \nonumber\\
  &\quad\quad\quad ~\times \tvnorm{P_{\sample_j}(\cdot \mid \packrv_{1:j-1}, \msg)
    - P_{\sample_j}(\cdot \mid \packrv_{1:j-1})}.
  \label{eqn:tv-bound-for-independent-msgs}
\end{align}
Indeed, making the identification
\begin{equation*}
  \packrv_j \to A,
  ~~~~
  \sample_j \to B,
  ~~~~
  \packrv_{1:j-1} \to C,
  ~~~~
  \msg \to D,
\end{equation*}
the random variables satisfy the
condition~\eqref{eqn:joint-distribution-decomposition} clearly,
condition~\eqref{eqn:message-distribution-factorization} because
$\packrv_{1:j-1}$ is independent of $\packrv_j$ and $\sample_j$, and
condition~\eqref{eqn:likelihood-constraint} by construction.  This gives
inequality~\eqref{eqn:tv-bound-for-independent-msgs} by our independence
assumptions. Expanding our KL divergence bound, we have
\begin{align*}
  &\dkl{P_{\packrv_j}(\cdot \mid \msg, \packrv_{1:j-1})}{
    P_{\packrv_j}(\cdot \mid \packrv_{1:j-1})}
  + \dkl{P_{\packrv_j}(\cdot \mid \packrv_{1:j-1})}{
    P_{\packrv_j}(\cdot \mid \msg, \packrv_{1:j-1})}\\
  &\qquad\qquad = \sum_{\packval_j}
  \left(P_{\packrv_j}(\packval_j \mid \msg, \packrv_{1:j-1})
  - P_{\packrv_j}(\packval_j \mid \packrv_{1:j-1})\right)
  \log \frac{P_{\packrv_j}(\packval_j \mid \msg, \packrv_{1:j-1})}{
    P_{\packrv_j}(\packval_j \mid \packrv_{1:j-1})}.
\end{align*}
Now, using the elementary inequality for $a, b \ge 0$ that
\begin{equation*}
  \left|\log \frac{a}{b}\right|
  \le \frac{|a - b|}{\min\{a, b\}},
\end{equation*}
inequality~\eqref{eqn:tv-bound-for-independent-msgs}
implies that
\begin{align*}
  \lefteqn{\left(P_{\packrv_j}(\packval_j \mid \msg, \packrv_{1:j-1})
    - P_{\packrv_j}(\packval_j \mid \packrv_{1:j-1})\right)
    \log \frac{P_{\packrv_j}(\packval_j \mid \msg, \packrv_{1:j-1})}{
      P_{\packrv_j}(\packval_j \mid \packrv_{1:j-1})}} \\
  & \le \frac{(P_{\packrv_j}(\packval_j \mid \msg, \packrv_{1:j-1})
    - P_{\packrv_j}(\packval_j \mid \packrv_{1:j-1}))^2}{
    \min\{P_{\packrv_j}(\packval_j \mid \msg, \packrv_{1:j-1}),
    P_{\packrv_j}(\packval_j \mid \packrv_{1:j-1})\}} \\
  & \le 4 (e^{2 \llratio} - 1)^2
  \min\left\{P_{\packrv_j}(\packval_j \mid \msg, \packrv_{1:j-1}),
  P_{\packrv_j}(\packval_j \mid \packrv_{1:j-1})\right\} % \\
  % &\quad\quad\quad ~ \times
  \tvnorm{P_{\sample_j}(\cdot \mid \packrv_{1:j-1}, \msg)
    - P_{\sample_j}(\cdot \mid \packrv_{1:j-1})}^2.
\end{align*}

Substituting this into our bound on KL-divergence, we obtain
\begin{align*}
  I(\packrv_j; \msg \mid \packrv_{1:j-1})
  & = \E_{\packrv_{1:j-1}}\left[
    \E_\msg\left[\dkl{P_{\packrv_j}(\cdot \mid \msg, \packrv_{1:j-1})}{
        P_{\packrv_j}(\cdot \mid \packrv_{1:j-1})} \mid \packrv_{1:j-1}
      \right]\right] \\
  & \le 4 (e^{2\llratio} - 1)^2
  \E_{\packrv_{1:j-1}}
  \left[\E_\msg\left[
      \tvnorm{P_{\sample_j}(\cdot \mid \packrv_{1:j-1}, \msg)
        - P_{\sample_j}(\cdot \mid \packrv_{1:j-1})}^2
      \mid \packrv_{1:j-1}\right] \right].
\end{align*}
Using Pinsker's inequality, we then find that
\begin{align*}
  \lefteqn{\E_{\packrv_{1:j-1}}
    \left[\E_\msg\left[
        \tvnorm{P_{\sample_j}(\cdot \mid \packrv_{1:j-1}, \msg)
          - P_{\sample_j}(\cdot \mid \packrv_{1:j-1})}^2
        \mid \packrv_{1:j-1}\right] \right]} \\
  & \le \half \E_{\packrv_{1:j-1}}
  \left[\E_\msg\left[\dkl{P_{\sample_j}(\cdot \mid
        \msg, \packrv_{1:j-1})}{
        P_{\sample_j}(\cdot \mid \packrv_{1:j-1})}
      \mid \packrv_{1:j-1}\right]\right]
  = \half I(\sample_j; \msg \mid \packrv_{1:j-1}).
\end{align*}
In particular, we have
\begin{equation}
  \label{eqn:index-contraction}
  I(\packrv_j; \msg \mid \packrv_{1:j-1})
  \le 2\left(e^{2 \llratio} - 1\right)^2 I(\sample_j; \msg \mid
  \packrv_{1:j-1}).
\end{equation}

Lastly, we argue that $I(\sample_j; \msg \mid \packrv_{1:j-1}) \le
I(\sample_j; \msg \mid \sample_{1:j-1})$. Indeed, we have by
definition that
\begin{align*}
  I(\sample_j; \msg \mid \packrv_{1:j-1})
  & \stackrel{(i)}{=} H(\sample_j)
  - H(\sample_j \mid \msg, \packrv_{1:j-1}) \\
  & \stackrel{(ii)}{\le}
  H(\sample_j) - H(\sample_j \mid \msg, \packrv_{1:j-1}, \sample_{1:j-1}) \\
  & \stackrel{(iii)}{=}
  H(\sample_j \mid \sample_{1:j-1})
  - H(\sample_j \mid \msg, \sample_{1:j-1})
  = I(\sample_j ; \msg \mid \sample_{1:j-1}).
\end{align*}
Here, equality $(i)$ follows since $\sample_j$ is independent of
$\packrv_{1:j-1}$, inequality $(ii)$ because conditioning reduces entropy,
and equality $(iii)$ because $\sample_j$ is independent of $\sample_{1:j-1}$.
Thus
\begin{equation*}
  I(\packrv; \msg)
  = \sum_{j=1}^d I(\packrv_j; \msg \mid \packrv_{1:j-1})
  \le 2 (e^{2 \llratio} - 1)^2
  \sum_{j=1}^d I(\sample_j; \msg \mid \sample_{1:j-1})
  = 2 (e^{2 \llratio} - 1)^2
  I(\sample; \msg),
\end{equation*}
which completes the proof.

\subsection{Proof of Lemma~\ref{LemTensorize}}
\label{AppTensorize}
By assumption, the message $\msg_i$ is constructed based only on
$\sample^{(i)}$.  Therefore, we have
\begin{align*}
I(\packrv; \msg_{1:\nummac}) = \sum_{i=1}^\nummac I(\packrv; \msg_i
\mid \msg_{1:i-1}) & = \sum_{i=1}^\nummac H(\msg_i \mid \msg_{1:i-1})
- H(\msg_i \mid \packrv, \msg_{1:i-1}) \nonumber \\
& \le \sum_{i=1}^\nummac H(\msg_i) - H(\msg_i \mid \packrv,
\msg_{1:i-1}) \\
& = \sum_{i=1}^\nummac H(\msg_i) - H(\msg_i \mid \packrv) =
  \sum_{i=1}^\nummac I(\packrv; \msg_i)
\end{align*}
where we have used that conditioning reduces entropy and $\msg_i$ is
conditionally independent of $\msg_{1:i-1}$ given $\packrv$.

\subsection{Proof of inequality~\eqref{eqn:tv-norm-to-info}}
\label{AppSmall}

Let $\statprob_\packval$ be shorthand for $\statprob_\msg(\cdot \mid
\packrv = \packval)$. The triangle inequality implies that
\begin{equation*}
  \tvnorm{\statprob_\packval - \statprob_\altpackval} \le
  \tvnorm{\statprob_\packval - (1/2)(\statprob_\packval +
    \statprob_\altpackval)} + \half \tvnorm{\statprob_\packval -
    \statprob_\altpackval},
\end{equation*}
and similarly swapping the roles of $\altpackval$ and $\packval$, whence
\begin{equation*}
  \tvnorm{\statprob_\packval - \statprob_\altpackval} \le 2
  \min\{\tvnorm{\statprob_\packval - (1/2) (\statprob_\altpackval +
    \statprob_\packval)}, \tvnorm{\statprob_\altpackval - (1/2)
    (\statprob_\altpackval + \statprob_\packval)}\}.
\end{equation*}
By Pinsker's inequality, we thus have the upper bound
\begin{align*}
  \tvnorm{\statprob_\packval - \statprob_\altpackval}^2 & \le 2
  \min\{\dkl{\statprob_\packval}{(1/2)(\statprob_\packval +
    \statprob_\altpackval)},
  \dkl{\statprob_\altpackval}{(1/2)(\statprob_\packval +
    \statprob_\altpackval)}\} \\ & \le
  \dkl{\statprob_\packval}{(1/2)(\statprob_\packval +
    \statprob_\altpackval)} +
  \dkl{\statprob_\altpackval}{(1/2)(\statprob_\packval +
    \statprob_\altpackval)} = 2 I(\msg; \packrv)
\end{align*}
by the definition of mutual information.

%%%%%%%%%%%%%%%%%%%%%%%%%%%%%%%%%%%%%%%%%%%%%%%%%%%%%%%%%%%%%%%%%%%%%%%
%%%%%%%%%%%%%%%%%%%%%%%%%%%%%%%%%%%%%%%%%%%%%%%%%%%%%%%%%%%%%%%%%%%%%%%

\section{Auxiliary results for Theorem~\ref{theorem:gaussian-mean-communication}}

In this appendix, we collect the proofs of auxiliary results involved
in the proof of Theorem~\ref{theorem:gaussian-mean-communication}.

\subsection{Proof of Lemma~\ref{lemma:independent-message-unbounded}}
\label{SecProofLemIndMesUnb}

This proof is similar to that
Lemma~\ref{lemma:independent-message-contraction}, but we must be
careful when conditioning on events of the form $\sample_j^{(i)} \in
\llset_j$. For notational simplicity, we again suppress all dependence
of $\sample$ and $\msg$ on the machine index $i$.
Our goal is to prove that
\begin{equation}
  \label{eqn:index-contraction-gaussian}
  I(\packrv_j; \msg \mid \packrv_{1:j-1})
  \le H(E_j) + P(E_j = 0)
  + 2\big(e^{4 \llratio} - 1\big)^2
  I(\sample_j; \msg \mid \packrv_{1:j-1}).
\end{equation}
Up to the additive terms, this is equivalent to the earlier
bound~\eqref{eqn:index-contraction} in the proof of
Lemma~\ref{lemma:independent-message-contraction}, so that proceeding
\emph{mutatis mutandis} completes the proof. We now turn to proving
inequality~\eqref{eqn:index-contraction-gaussian}.

We begin by noting that $I(X; Y \mid Z) \le I(X, W; Y \mid Z)$ for any random
variables $W, X, Y, Z$, because conditioning reduces entropy:
\begin{equation}
  I(X; Y \mid Z) = H(Y \mid Z) - H(Y \mid X, Z)
  \le H(Y \mid Z) - H(Y \mid W, X, Z)
  = I(X, W; Y \mid Z).
  \label{eqn:increase-information}
\end{equation}
As a consequence, recalling the random variable $E_j$ (the indicator of
$\sample_j \in \llset_j$), we have
\begin{align}
  I(\packrv_j; \msg \mid \packrv_{1:j-1})
  \le I(\packrv_j; \msg, E_j \mid \packrv_{1:j-1})
  & = I(\packrv_j; \msg \mid E_j, \packrv_{1:j-1})
  + I(\packrv_j; E_j \mid \packrv_{1:j-1}) \nonumber \\
  & \le I(\packrv_j; \msg \mid E_j, \packrv_{1:j-1})
  + H(E_j \mid \packrv_{1:j-1}) \nonumber \\
  & = I(\packrv_j; \msg \mid E_j, \packrv_{1:j-1}) + H(E_j),
  \label{eqn:calling-me-home}
\end{align}
where the final equality follows because $E_j$ is independent of
$\packrv_{1:j-1}$. 
Comparing to inequality~\eqref{eqn:index-contraction-gaussian},
we need only control the first term in the bound~\eqref{eqn:calling-me-home}.

To that end,
note that given $E_j$, the variable $\packrv_j$ is
independent of $\packrv_{1:j-1}$, $\sample_{1:j-1}$, $\packrv_{j+1:d}$,
and $\sample_{j+1:d}$. Moreover,
by the assumption in the lemma we have for any $S \in \sigma(\llset_j)$
that
\begin{equation*}
  \frac{P_{\sample_j}(S \mid \packrv = \packval, E_j =
    1)}{P_{\sample_j}(S \mid \packrv = \packval', E_j = 1)} =
  \frac{P_{\sample_j}(S \mid \packrv = \packval)}{
    P_{\sample_j}(\sample_j \in \llset_j \mid \packrv = \packval)}
  \frac{P_{\sample_j}(\sample_j \in \llset_j \mid \packrv = \packval')}{
    P_{\sample_j}(\sample_j \in S \mid \packrv = \packval')} \le \exp(2
  \llratio).
\end{equation*}
Applying Lemma~\ref{lemma:information-chaining} yields that the
difference
\begin{align*}
  \Delta_j \defeq P(\packrv_j = \packval_j \mid \packrv_{1:j-1}, \msg, E_j =
  1) - P(\packrv_j = \packval_j \mid \packrv_{1:j-1}, E_j = 1)
\end{align*}
is bounded as
\begin{align*}
  |\Delta_j| & \le 2\big(e^{4 \llratio} - 1\big) \tvnorm{P_{\sample_j}(\cdot
  \mid \packrv_{1:j-1}, \msg, E_j = 1) - P_{\sample_j}(\cdot \mid
  \packrv_{1:j-1}, E_j = 1)} \\
  % \label{eqn:tv-bound-unbounded-independent-msgs} \\
  & \qquad \quad ~
  \times \min\big\{P(\packrv_j = \packval_j \mid \packrv_{1:j-1}, \msg, E_j = 1),
  P(\packrv_j = \packval_j \mid \packrv_{1:j-1}, E_j = 1)\big\}
  \nonumber
\end{align*}
(cf.\ the inequality~\eqref{eqn:tv-bound-for-independent-msgs} in
the proof of Lemma~\ref{lemma:independent-message-contraction}).
Proceeding as in the proof of
Lemma~\ref{lemma:independent-message-contraction}, this
expression leads to
the bound
\begin{align}
  \label{eqn:index-contraction-v2}
  I(\packrv_j; \msg \mid \packrv_{1:j-1}, E_j = 1) & \leq 2 \big( e^{4
    \llratio} - 1\big)^2 I(\sample_j; \msg \mid \packrv_{1:j-1}, E_j =
  1).
\end{align}
By the definition of conditional mutual information,
\begin{align*}
  I(\packrv_j; \msg \mid E_j, \packrv_{1:j-1})
  & = P(E_j = 1) I(\packrv_j; \msg \mid \packrv_{1:j-1}, E_j = 1)
  + P(E_j = 0) I(\packrv_j; \msg \mid \packrv_{1:j-1}, E_j = 0) \nonumber \\
  & \le I(\packrv_j; \msg \mid \packrv_{1:j-1}, E_j = 1)
  + P(E_j = 0) \log 2,
\end{align*}
where the inequality follows because $\packrv_j \in \{-1, 1\}$.  But combining
this inequality with the bounds~\eqref{eqn:index-contraction-v2}
and~\eqref{eqn:calling-me-home} gives the desired
result~\eqref{eqn:index-contraction-gaussian}.

%%%%%%%%%%%%%%%%%%%%%%%%%%%%%%%%%%%%%%%%%%%%%%%%%%%%%%%%%%%%%%%%%%%%%%%%%%

\subsection{Proof of Lemma~\ref{lemma:gaussian-information-bounds}}

In order to prove inequality~\eqref{eqn:markov-bound-application}, we
note that $\packrv\to \sample^{(i)}\to \msg_i$ forms a Markov
chain. Thus, the classical data-processing inequality~\cite{Cover2006} implies
that
\begin{align*}
  I(\packrv; \msg_i) \leq I(\packrv; \sample^{(i)})
  \le \sum_{k=1}^{\numobs_i} I(\packrv; \sample^{(i,k)}).
\end{align*}
Let $\statprob_\packval$ denote the conditional distribution of
$\sample^{(i,k)}$ given $\packrv = \packval$.  Then the convexity of the
KL-divergence establishes inequality~\eqref{eqn:markov-bound-application}
via
\begin{equation*}
  I(\packrv; \sample^{(i,k)})
  \le \frac{1}{|\packset|^2}
  \sum_{\packval, \altpackval \in \packset}
  \dkl{\statprob_\packval}{\statprob_\altpackval}
  = \frac{\delta^2}{2 \sigma^2} \frac{1}{|\packset|^2}
  \sum_{\packval, \altpackval \in \packset} \ltwo{\packval - \altpackval}^2
  = \frac{d \delta^2}{\sigma^2}.
\end{equation*}

To prove inequality~\eqref{eqn:transition-bound-application}, we apply
Lemma~\ref{lemma:independent-message-unbounded}.  First,
consider two one-dimensional normal distributions, each with
$\numobs_i$ independent observations and variance $\sigma^2$, but where
one has mean $\delta$ and the other mean $-\delta$. For
fixed $a \ge 0$, the ratio of their
densities is
\begin{equation*}
  \frac{\exp(-\frac{1}{2\sigma^2} \sum_{l=1}^{\numobs_i} (x_l - \delta)^2)}{
    \exp(-\frac{1}{2\sigma^2} \sum_{l=1}^{\numobs_i} (x_l + \delta)^2)}
  = \exp\bigg(\frac{\delta}{\sigma^2} \sum_{l=1}^{\numobs_i} x_l\bigg)
  \le \exp\bigg(\frac{\sqrt{\numobs_i}\delta a}{\sigma^2}\bigg)
\end{equation*}
whenever $|\sum_l x_l| \le \sqrt{\numobs_i}a$. As a consequence, we see that
by taking the sets
\begin{equation*}
  \llset_j = \bigg\{x \in \R^{\numobs_i}
  : \bigg|\sum_{l=1}^{\numobs_i} x_l\bigg|
  \le \sqrt{\numobs_i}a\bigg\},
\end{equation*}
we satisfy the conditions of
Lemma~\ref{lemma:independent-message-unbounded} with the
quantity $\llratio$ defined as $\llratio =
\sqrt{\numobs_i}\delta a / \sigma^2$. In addition, when $\llratio \le 1.2564$,
we have $\exp(\llratio) - 1 \le 2\llratio$, so under the conditions of the
lemma, $\exp(4 \llratio) - 1 = \exp(4\sqrt{\numobs_i} \delta a / \sigma^2) - 1
\le 8\sqrt{\numobs_i} \delta a / \sigma^2$. Recalling the definition of the
indicator random
variable $E_j = \sindic{\sample^{(i)}_j \in \llset_j}$ from
Lemma~\ref{lemma:independent-message-unbounded}, we obtain
\begin{equation}
  I(\packrv; \msg_i)
  \le 128 \frac{\delta^2 a^2}{\sigma^4}
  \numobs_i I(\sample^{(i)}; \msg_i)
  + \sum_{j=1}^d H(E_j)
  + \sum_{j=1}^d P(E_j = 0).
  \label{eqn:almost-done-independents}
\end{equation}
Comparing this inequality with
inequality~\eqref{eqn:transition-bound-application}, we see that we must
bound the probability of the event $E_j = 0$.

Bounding $P(E_j = 0)$ is not challenging, however. From standard Gaussian tail
bounds, we have for $Z_l$ distributed i.i.d.\ according to
$\normal(\delta, \sigma^2)$ that
\begin{align}
  P(E_j = 0)
  & = P\bigg(\bigg|\sum_{l=1}^{\numobs_i} Z_l\bigg| \ge \sqrt{\numobs_i}a
  \bigg) \nonumber \\
  & = P\bigg(\sum_{l=1}^{\numobs_i} (Z_l - \delta)
  \ge \sqrt{\numobs_i}a - \numobs \delta\bigg)
  + P\bigg(\sum_{l=1}^{\numobs_i} (Z_l - \delta)
  \le -\sqrt{\numobs_i}a - \numobs \delta\bigg) \nonumber \\
  & \le 2 \exp\bigg(-\frac{(a - \sqrt{\numobs_i} \delta)^2}{2 \sigma^2}
  \bigg).
  \label{eqn:gaussian-tail-bounds}
\end{align}
Since $h_2(p) \le h_2(\half)$ and $I(\packrv; \msg_i)
\le d \log 2$ regardless, this
provides the bounds on the entropy and probability terms in
inequality~\eqref{eqn:almost-done-independents}
to yield the result~\eqref{eqn:transition-bound-application}.

%%%%%%%%%%%%%%%%%%%%%%%%%%%%%%%%%%%%%%%%%%%%%%%%%%%%%%%%%%%%%%%%%%%%%%%%

\subsection{Proof of Lemma~\ref{LemIceAxe}}
\label{AppLemIceAxe}

Combining inequalities~\eqref{eqn:markov-bound-application}
and~\eqref{eqn:transition-bound-application} yields
\begin{equation}
  \label{eqn:gaussian-information-contraction}
  \begin{split}
    I(\packrv; \msg_i) & \le \frac{\numobs_i \delta^2}{\sigma^2}
    \min\bigg\{ 128 \frac{a^2}{\sigma^2} H(\msg_i), d\bigg\} + d \,
    h_2\bigg( \!\min\Big\{ 2 \exp\bigg(-\frac{(a - \sqrt{\numobs_i}
      \delta)^2}{ 2 \sigma^2}\bigg), \half \Big\}\bigg)
    \\ & \qquad ~ +
    2d \exp\bigg(-\frac{(a - \sqrt{\numobs_i} \delta)^2}{2
      \sigma^2}\bigg),
  \end{split}
\end{equation}
true for all $a, \delta \ge 0$ and $\numobs_i, \sigma^2$ such that
$\sqrt{\numobs_i} a \delta \le 1.2564 \sigma^2 / 4$
and $a \ge \delta \sqrt{\numobs_i}$.

Now, we consider each of the terms in the bound in
inequality~\eqref{eqn:gaussian-information-contraction} in turn, finding
settings of $\delta$ and $a$ so that each term is small. Let us set $a = 4
\sigma \sqrt{\log \nummac}$.  We begin with the third term in the
bound~\eqref{eqn:gaussian-information-contraction}, where we note that by
definining $\delta_3$ as the positive root of
\begin{equation}
  \label{eqn:delta-3}
  \delta_3^2 \defeq \frac{\sigma^2 }{16 \cdot 16\log(\nummac)\max_i \numobs_i},
\end{equation}
then for $0 \le \delta \le \delta_3$ the conditions $\frac{\sqrt{\numobs_i} a
  \delta}{\sigma^2} \le \frac{1.2564}{4}$ and $\sqrt{\numobs_i} \delta \le a$
in Lemma~\ref{lemma:gaussian-information-bounds} are satisfied. In addition,
we have $(a - \sqrt{\numobs_i} \delta)^2 \ge (4 - 1/256)^2\sigma^2 \log
\nummac \ge 15 \sigma^2 \log \nummac$ for $0 \le \delta \le \delta_3$, so for
such $\delta$
\begin{equation*}
  \sum_{i=1}^\nummac 2 \exp\bigg(-\frac{(a - \sqrt{\numobs_i} \delta)^2}{
    2 \sigma^2} \bigg)
  \le 2 \nummac \exp(-(15/2) \log \nummac)
  = \frac{2}{\nummac^{15/2}}
  < 2 \cdot 10^{-5}.
\end{equation*}
Secondly, we have $h_2(q) \le (6/5) \sqrt{q}$ for $q \ge 0$. As a
consequence, we see that for $\delta_2$ chosen identically to the
choice~\eqref{eqn:delta-3} for $\delta_3$, we have
\begin{equation*}
  \sum_{i=1}^\nummac 2 h_2
  \bigg(2 \exp\bigg(-\frac{(a - \sqrt{\numobs_i} \delta_2)^2}{2
    \sigma^2}\bigg)\bigg)
  \le \frac{12 \nummac}{5}\sqrt{2}
  \exp(-(15/4) \log \nummac)
  < \frac{2}{49}.
\end{equation*}
In particular, with the choice $a = 4 \sigma \sqrt{\log \nummac}$ and for all
$0 \le \delta \le \delta_3$,
inequality~\eqref{eqn:gaussian-information-contraction} implies 
the desired bound~\eqref{eqn:theorem-1-mutual-information-final-bound}.

%%%%%%%%%%%%%%%%%%%%%%%%%%%%%%%%%%%%%%%%%%%%%%%%%%%%%%%%%%%%%%%%%%%%%%%%%%
%%%%%%%%%%%%%%%%%%%%%%%%%%%%%%%%%%%%%%%%%%%%%%%%%%%%%%%%%%%%%%%%%%%%%%%%%%

\section{Auxiliary results for Theorem~\ref{theorem:interactive-gaussian}}

In this appendix, we collect the proofs of auxiliary results for
Theorem~\ref{theorem:interactive-gaussian}.

\subsection{Proof of Lemma~\ref{lemma:information-contraction-in-subset}}
\label{sec:proof-contraction-subset}

We state an intermediate claim from which
Lemma~\ref{lemma:information-contraction-in-subset} follows
quickly. Let us temporarily assume that the set $\llset$ in the
statement of the lemma is $\llset = \range(\sample^{(i)})$, so that
there is no restriction on the distributions
$\statprob_{\sample^{(i)}}$, that is, the likelihood ratio
bound~\eqref{eqn:likelihood-ratio-subset-bound} holds for all
measurable sets $S$. We claim that in this case,
\begin{equation}
  I(\packrv; \msg) \le 2\big(e^{2 \llratio} - 1\big)^2 I(\sample;
  \msg).
  \label{eqn:interactive-pack-to-message}
\end{equation}
Assuming that we have established
inequality~\eqref{eqn:interactive-pack-to-message}, the proof of
Lemma~\ref{lemma:information-contraction-in-subset} follows,
\emph{mutatis mutandis}, as in the proof of
Lemma~\ref{lemma:independent-message-unbounded} from
Lemma~\ref{lemma:independent-message-contraction}.

Let us now prove the claim~\eqref{eqn:interactive-pack-to-message}.
By the chain-rule for mutual information, we have
\begin{align*}
  I(\packrv; \msg) = \sum_{t = 1}^\nummsg I(\packrv; \msg_t \mid
  \msg_{1:t-1}).
\end{align*}
Let $P_{\msg_t}(\cdot \mid \msg_{1:t-1})$ denote the (marginal)
distribution of $\msg_t$ given $\msg_{1:t-1}$ and define
$P_\packrv(\cdot \mid \msg_{1:t})$ to be the distribution of $\packrv$
conditional on $\msg_{1:t}$. Then we have by marginalization that
\begin{equation*}
  P_\packrv(\cdot \mid \msg_{1:t-1})
  = \int P_\packrv(\cdot \mid \msg_{1:t-1}, \lcmsg_t)
  dP_{\msg_t}(\lcmsg_t \mid \msg_{1:t-1})
\end{equation*}
and thus
\begin{equation*}
  I(\packrv; \msg_t \mid \msg_{1:t-1})
  = \E_{\msg_{1:t-1}} \big[\E_{\msg_t}\big[
      \dkl{P_\packrv(\cdot \mid \msg_{1:t})}{P_\packrv(\cdot \mid \msg_{1:t-1})}
      \mid \msg_{1:t-1}\big]\big].
\end{equation*}
We now bound the above KL divergence using the assumed likelihood ratio
bound on $\statprob_\sample$ in the
lemma (when $\llset = \mc{\sample}$, the entire sample space).

By the nonnegativity of the KL divergence, we have
\begin{align*}
\dkl{P_\packrv(\cdot \mid \msg_{1:t})}{P_\packrv(\cdot \mid
  \msg_{1:t-1})}& \le \dkl{P_\packrv(\cdot \mid
  \msg_{1:t})}{P_\packrv(\cdot \mid \msg_{1:t-1})} +
\dkl{P_\packrv(\cdot \mid \msg_{1:t-1})}{ P_\packrv(\cdot \mid
  \msg_{1:t})} \\
& = \sum_{\packval \in \packset} \big(p_\packrv(\packval \mid
\msg_{1:t-1}) - p_\packrv(\packval \mid \msg_{1:t}) \big) \log
\frac{p_\packrv(\packval \mid \msg_{1:t-1})}{ p_\packrv(\packval \mid
  \msg_{1:t})}
\end{align*}
where $p_\packrv$ denotes the p.m.f.\ of $\packrv$.

Next we claim that the difference $\Delta_t \defeq \big|p_\packrv(\packval
\mid \msg_{1:t-1}) - p_\packrv(\packval \mid \msg_{1:t})\big|$ is
upper bounded as
\begin{align}
  \label{eqn:tv-bound-claim}
  |\Delta_t| & \leq 2 \big( e^{2\numobs\llratio} - 1\big)
  \min\big\{p_\packrv(\packval \mid \msg_{1:t-1}), p_\packrv(\packval
  \mid \msg_{1:t})\big\} \tvnorm{P_{\sample^{(i_t)}}(\cdot \mid
    \msg_{1:t}) - P_{\sample^{(i_t)}}(\cdot \mid \msg_{1:t-1})}.
\end{align}
Deferring the proof of this claim to the end of this section, we give
the remainder of the proof.  First, by a first-order convexity
argument, we have
\begin{align*}
  |\log a - \log b| & \le \frac{|a - b|}{\min\{a, b\}} \quad \mbox{for
    any $a, b > 0$.}
\end{align*}
Combining this bound with inequality~\eqref{eqn:tv-bound-claim} yields
\begin{align*}
  \lefteqn{\Delta_t \log \frac{p_\packrv(\packval \mid \msg_{1:t-1})}{
      p_\packrv(\packval \mid \msg_{1:t})}
    \leq \frac{\Delta_t^2}{
      \min\{p_\packrv(\packval \mid \msg_{1:t-1}), p_{\packrv}(\packval
      \mid \msg_{1:t})\}}} \\
  & \qquad ~ \le 4\big(e^{2 \numobs\llratio} - 1\big)^2
  \min\big\{p_\packrv(\packval \mid \msg_{1:t-1}), p_\packrv(\packval
  \mid \msg_{1:t})\big\} \tvnorm{P_{\sample^{(i_t)}}(\cdot \mid
    \msg_{1:t}) - P_{\sample^{(i_t)}}(\cdot \mid \msg_{1:t-1})}^2.
\end{align*}
Since $p_\packrv$ is a p.m.f.,
we have the following upper bound on the symmetrized KL divergence
between $P_\packrv(\cdot \mid \msg_{1:t})$ and $P_\packrv(\cdot \mid \msg_{1:t-1})$:
\begin{align*}
  \lefteqn{\dkl{P_\packrv(\cdot \mid \msg_{1:t})}{ P_\packrv(\cdot \mid
      \msg_{1:t-1})} + \dkl{P_\packrv(\cdot \mid \msg_{1:t-1})}{
      P_\packrv(\cdot \mid \msg_{1:t})}} \\
  & \le 4\big(e^{2 \numobs\llratio} - 1\big)^2
  \tvnorm{P_{\sample^{(i_t)}}(\cdot \mid \msg_{1:t}) -
    P_{\sample^{(i_t)}}(\cdot \mid \msg_{1:t-1})}^2 \sum_{\packval \in
    \packset} \min\big\{p_\packrv(\packval \mid \msg_{1:t-1}),
  p_\packrv(\packval \mid \msg_{1:t})\big\} \\ & \le 4\big(e^{2
    \numobs\llratio} - 1\big)^2 \tvnorm{P_{\sample^{(i_t)}}(\cdot \mid
    \msg_{1:t}) - P_{\sample^{(i_t)}}(\cdot \mid \msg_{1:t-1})}^2 \\
  & \leq \half \: \dkl{P_{\sample^{(i_t)}}(\cdot \mid \msg_{1:t})}{
    P_{\sample^{(i_t)}}(\cdot \mid \msg_{1:t-1})},
\end{align*}
where the final step follows from Pinsker's inequality.  Taking
expectations, we have
\begin{align*}
\half \E_{\msg_{1:t-1}}\big[\E_{\msg_t}
  \big[\dkl{P_{\sample^{(i_t)}}(\cdot \mid \msg_{1:t})}{
      P_{\sample^{(i_t)}}(\cdot \mid \msg_{1:t-1})} \mid \msg_{1:t-1}
    \big]\big] = \half I(\sample^{(i_t)}; \msg_t \mid \msg_{1:t-1}).
\end{align*}
Finally, because conditioning reduces
entropy (recall inequality~\eqref{eqn:increase-information}), we have
\begin{align*}
  I(\sample^{(i_t)}; \msg_t \mid \msg_{1:t-1})
  \le I(\sample; \msg_t \mid \msg_{1:t-1}).
\end{align*}
By
the chain rule for mutual information, we have $\sum_{t = 1}^T
I(\sample; \msg_t \mid \msg_{1:t-1}) = I(\sample; \msg)$, so the
proof is complete.

%%%%%%%%%%%%%%%%%%%%%%%%%%%%%%%%%%%%%%%%%%%%%%%%%%%%%%%%%%%%%%%%%%%%%%%%%%%%%%%

\paragraph{Proof of inequality~\eqref{eqn:tv-bound-claim}}

It remains to prove inequality~\eqref{eqn:tv-bound-claim}: in order to
do so, we establish a one-to-one correspondence between the variables
in Lemma~\ref{lemma:information-chaining} and the variables in
inequality~\eqref{eqn:tv-bound-claim}.  Let us begin by making the
identifications
\begin{equation*}
  \packrv \rightarrow A ~~~~~~~ \sample^{(i_t)} \rightarrow B ~~~~~~~
  \msg_{1:t-1} \rightarrow C ~~~~~~~ \msg_t \rightarrow D.
  % \label{eqn:variable-correspondences}
\end{equation*}
For Lemma~\ref{lemma:information-chaining} to hold, we must verify
conditions~\eqref{eqn:joint-distribution-decomposition},
\eqref{eqn:message-distribution-factorization},
and~\eqref{eqn:likelihood-constraint}. For
condition~\eqref{eqn:joint-distribution-decomposition} to hold, $\msg_{t}$
must be independent of $\packrv$ given
$\{\msg_{1:t-1},\sample^{(i_t)}\}$. Since the distribution of
$P_{\msg_{t}}(\cdot \mid \msg_{1:t-1}, \sample^{(i_t)})$ is
measurable-$\{\msg_{1:t-1},\sample^{(i_t)}\}$,
condition~\eqref{eqn:likelihood-constraint} is satisfied by the assumption in
Lemma~\ref{lemma:information-contraction-in-subset}.

Finally, for condition~\eqref{eqn:message-distribution-factorization} to hold,
we must be able to factor the conditional probability
of $\msg_{1:t-1}$ given $\{\packrv, \sample^{(i_t)}\}$ as
\begin{align}
  \label{eqn:examine-decomposition}
  P(\msg_{1:t-1} = \lcmsg_{1:t-1} \mid \packrv, \sample^{(i_t)}) =
  \fpotl_1(\packrv, \lcmsg_{1:t-1}) \fpotl_2(\sample^{(i_t)},
  \lcmsg_{1:t-1}).
\end{align}
To prove this decomposition, notice that
\begin{equation*}
  P(\msg_{1:t-1} = \lcmsg_{1:t-1} \mid \packrv, \sample^{(i_t)}) =
  \prod_{k=1}^{t-1} P(\msg_{k} = \lcmsg_k \mid \msg_{1:k-1}, \packrv,
  \sample^{(i_t)}).
\end{equation*}
For any $k\in\{1,\dots,t-1\}$, if $i_k = i_t$---that is, the message
$\msg_k$ is generated based on sample $\sample^{(i_t)} =
\sample^{(i_k)}$---then $\msg_k$ is independent of $\packrv$ given
$\{\sample^{(i_t)}, \msg_{1:k-1}\}$. Thus, $P_{\msg_{k}}(\cdot \mid
\msg_{1:k-1},\packrv,\sample^{(i_t)})$ is
measurable-$\{\sample^{(i_t)},\msg_{1:k-1}\}$.  If the $k$th index
$i_k \neq i_t$, then $\msg_k$ is independent of $\sample^{(i_t)}$
given $\{\msg_{1:k-1},\packrv\}$ by construction, which means
$P_{\msg_{k}}(\cdot \mid \msg_{1:k-1},\packrv,\sample^{(i_t)}) =
P_{\msg_k}(\cdot \mid \msg_{1:k-1}, \packrv)$, thereby verifying the
decomposition~\eqref{eqn:examine-decomposition}.  Thus, we have
verified that each of the conditions of
Lemma~\ref{lemma:information-chaining} holds, so that
inequality~\eqref{eqn:tv-bound-claim} follows.

%%%%%%%%%%%%%%%%%%%%%%%%%%%%%%%%%%%%%%%%%%%%%%%%%%%%%%%%%%%%%%%%%%%%%%%%
\subsection{Proof of Lemma~\ref{lemma:interactive-gaussian-bounds}}
\label{sec:proof-interactive-gaussian-information}

To prove Lemma~\ref{lemma:interactive-gaussian-bounds},
fix an arbitrary realization $\packval_{\setminus j} \in \{-1, 1\}^{d-1}$ of
$\packrv_{\setminus j}$.  Conditioning on $\packrv_{\setminus j} =
\packval_{\setminus j}$, note that $\packval_j \in \{-1, 1\}$, and consider
the distributions of the $j$th coordinate of each (local) sample
$\sample^{(i)}_j \in \R^\numobs$,
\begin{equation*}
  P_{\sample^{(i)}_j}(\cdot \mid \packrv_j = \packval_j,
  \packrv_{\setminus j} = \packval_{\setminus j})
  ~~~ \mbox{and} ~~~
  P_{\sample^{(i)}_j}(\cdot \mid \packrv_j = -\packval_j,
  \packrv_{\setminus j} = \packval_{\setminus j}).
\end{equation*}
We claim that these distributions---with appropriate
constants---satisfy the conditions of
Lemma~\ref{lemma:information-contraction-in-subset}.  Indeed, fix $a
\ge 0$, take the set $\llset = \{x \in \R^\numobs \mid \lone{x} \le
\sqrt{\numobs} a\}$, and set the log-likelihood ratio parameter
$\llratio = \sqrt{\numobs} \delta a / \sigma^2$. Then the random
variable $E_j = 1$ if $\sample_j^{(i)} \in \llset$ for all $i = 1,
\ldots, \nummac$, and we may apply precisely proof
of
Lemma~\ref{lemma:information-contraction-in-subset} (we still obtain the
factorization~\eqref{eqn:examine-decomposition} by conditioning
everything on $\packrv_{\setminus j} = \packval_{\setminus j}$). Thus
we obtain
\begin{equation}
  \begin{split}
    I(\packrv_j; \msg \mid \packrv_{\setminus j} = \packval_{\setminus j})
    & \le 2 \big(e^{4 \llratio} - 1\big)^2 I(\sample_j; \msg \mid
    \packrv_{\setminus j} = \packval_{\setminus j}) \\ & \qquad ~ + H(E_j
    \mid \packrv_{\setminus j} = \packval_{\setminus j}) + P(E_j = 0 \mid
    \packrv_{\setminus j} = \packval_{\setminus j}).
  \end{split}
  \label{eqn:conditional-contraction-subset}
\end{equation}
Of course, the event $E_j$ is independent of $\packrv_{\setminus j}$
by construction, so that $P(E_j = 0 \mid \packrv_{\setminus j}) =
P(E_j = 0)$ and $H(E_j \mid \packrv_{\setminus j} =
\packval_{\setminus j}) = H(E_j)$, and standard Gaussian tail bounds
(cf.\ the proof of Lemma~\ref{lemma:gaussian-information-bounds} and
inequality~\eqref{eqn:gaussian-tail-bounds}) imply that
\begin{equation*}
  H(E_j) \le \nummac h_2\bigg(2 \exp\bigg(-\frac{(a - \sqrt{n} \delta)^2}{
    2 \sigma^2}\bigg)\bigg)
  ~~~ \mbox{and} ~~~
  P(E_j = 0) \le 2 \nummac \exp\bigg(-\frac{(a - \sqrt{n} \delta)^2}{
    2 \sigma^2}\bigg).
\end{equation*}
Thus by integrating over $\packrv_{\setminus j} = \packval_{\setminus j}$,
inequality~\eqref{eqn:conditional-contraction-subset}
implies the lemma.

%%%%%%%%%%%%%%%%%%%%%%%%%%%%%%%%%%%%%%%%%%%%%%%%%%%%%%%%%%%%%%%%%%%%%%%%%%%%
%%%%%%%%%%%%%%%%%%%%%%%%%%%%%%%%%%%%%%%%%%%%%%%%%%%%%%%%%%%%%%%%%%%%%%%%%%%%

\bibliographystyle{abbrvnat}
\bibliography{bib}

\end{document}